\def\Id{{\openone}}
\newcommand{\be}{\begin{equation}}
\newcommand{\ee}{\end{equation}}
\newcommand{\bea}{\begin{eqnarray}}
\newcommand{\eea}{\end{eqnarray}}
\newtheorem{thm}{Theorem}[section]
\newtheorem{prop}[thm]{Proposition}
\newtheorem{cor}[thm]{Corollary}
\newtheorem{lem}[thm]{Lemma}
\newtheorem{defn}[thm]{Definition}
\newtheorem{example}[thm]{Example}
\newcommand{\ket}[1]{\vert#1\rangle}
\newcommand{\bra}[1]{\langle#1\vert}
\newcommand{\tr}{\mathrm{tr}}
\begin{document}

\title{Matrix Product Unitaries: Structure, Symmetries, and Topological Invariants}

\author{J.~Ignacio \surname{Cirac}}
\affiliation{Max-Planck-Institut f{\"{u}}r Quantenoptik,
Hans-Kopfermann-Str.\ 1, 85748 Garching, Germany}

\author{David Perez-Garcia}
\affiliation{Departamento de Análisis Matemático,
Universidad Complutense de Madrid, Plaza de Ciencias 3, 28040 Madrid, Spain}
\affiliation{ICMAT, Nicolas Cabrera, Campus de Cantoblanco, 28049 Madrid, Spain}

\author{Norbert Schuch}
\affiliation{Max-Planck-Institut f{\"{u}}r Quantenoptik,
Hans-Kopfermann-Str.\ 1, 85748 Garching, Germany}

\author{Frank Verstraete}
\affiliation{Department of Physics and Astronomy,
Ghent University, Krijgslaan 281, S9, 9000 Gent, Belgium}
\affiliation{Vienna Center for Quantum Science and Technology,
Faculty of Physics, University of Vienna, Boltzmanngasse 5, 1090 Vienna, Austria}

\begin{abstract}
Matrix Product Vectors form the appropriate framework to study and
classify one-dimensional quantum systems.  In this work, we develop the
structure theory of Matrix Product Unitary operators (MPUs) which appear
e.g.\ in the description of time evolutions of one-dimensional systems. We
prove that all MPUs have a strict causal cone, making them Quantum
Cellular Automata (QCAs), and derive a canonical form for MPUs which
relates different MPU representations of the same unitary through a local
gauge. We use this canonical form to prove an Index Theorem for MPUs which
gives the precise conditions under which two MPUs are adiabatically
connected, providing an alternative derivation to that of
Ref.~\cite{gross:index-theorem} for QCAs.  We also discuss the effect of
symmetries on the MPU classification. In particular, we characterize the
tensors corresponding to MPU that are invariant under conjugation,
time reversal, or transposition. In the first case, we give a full
characterization of all equivalence classes. Finally, we give several
examples of MPU possessing different symmetries.
\end{abstract}

\maketitle

\section{Introduction}

Strongly correlated quantum systems display a wide range of unconventional
phenomena which emerge from the close interplay between their locality
structure and the presence of strong interactions.  Tensor Networks
accurately capture the complex correlations present in such systems while
at the same time preserving their locality structure, and thus form a
versatile framework for their analytical characterization and numerical
study. In one spatial dimension, this gives rise to a Matrix Product
Vector (MPV) structure which appears in a wide range of scenarios,
encompassing both genuinely one-dimensional systems and boundaries of
two-dimensional systems.  Specifically, Matrix Product States (MPS) allow
for the efficient description of low-energy states of one-dimensional
systems~\cite{hastings:arealaw,verstraete:faithfully}. Matrix Product
Density Operators
(MPDOs)~\cite{verstraete:finite-t-mps,zwolak:mixedstate-timeevol-mpdo}
describe thermal states in one dimension as well as boundaries of gapped
two-dimensional systems~\cite{cirac:peps-boundaries}, whereas topological
order and anyonic excitations can be characterized using Matrix Product
Operator Algebras~\cite{sahinoglu:mpo-injectivity,bultinck:mpo-anyons}.
The evolution under one-dimensional Hamiltonians and Floquet
operators~\cite{po:chiral-floquet-mbl-etal}, as well as the effect of bulk
symmetries on boundaries of two-dimensional
systems~\cite{chen:2d-spt-phases-peps-ghz}, can be described in terms of
Matrix Product Unitaries (MPUs).

A central application of MPVs is the study and classification of phases of
strongly correlated systems, this is, equivalence classes under
smooth deformations, based on discrete invariants labelling the MPVs of
interest.  The precise notion of equivalence depends on the
physical context and the constraints imposed, where overall
two scenarios need to be distinguished: For MPVs describing
genuinely one-dimensional systems, the appropriate notion of equivalence
characterizes whether it is possible to smoothly deform an MPV into
another one while keeping certain properties.  In contrast, for MPV
characterizing boundaries of two-dimensional systems a smooth
interpolation of the two-dimensional bulk is required, to which end it
rather matters whether one can locally combine the boundary MPVs into a
joint MPV with the same structure.

One of the most important achievements of MPS has been the full
classification of zero-temperature phases of gapped one-dimensional
quantum systems, based on previously derived canonical forms of MPS and
their connection to parent
Hamiltonians~\cite{perez-garcia:mps-reps}.  It has been proven that
without symmetries, any two MPS can be connected along a path
corresponding to a gapped Hamiltonian, while in the presence of
symmetries, the inequivalent phases are characterized by the symmetry
action at the boundary (this is, on the entanglement), labelled by group
cohomology and generalizations
thereof~\cite{pollmann:1d-sym-protection-prb,chen:1d-phases-rg,schuch:mps-phases}.
Invariants characterizing different phases have also been identified in a
range of other contexts: Local symmetries of 2D phases are represented at
the boundary by MPUs, which in turn are labelled by the third group
cohomology~\cite{chen:2d-spt-phases-peps-ghz}, MPOs describing
topologically ordered wavefunctions locally can be characterized in
terms of tensor fusion
categories~\cite{sahinoglu:mpo-injectivity,bultinck:mpo-anyons}, and
scale-invariant MPDOs appearing at the boundary of renormalization fixed
points exhibit an algebraic structure given by fusion
categories~\cite{cirac:mpdo-rgfp}. The
central tool for all these results has been the identification of suitable
canonical forms for the corresponding MPVs, which allow to relate
different MPV descriptions of the same state locally, and to classify
phases through the algebraic structure of the underlying gauge
transforms~\cite{perez-garcia:mps-reps,cirac:mpdo-rgfp}.

In this paper, we develop a structure theory of Matrix Product Unitaries
(MPU) for one-dimensional systems. Specifically, we provide a
characterization of the general structure of arbitrary MPUs, and use this
to derive a canonical form for MPUs which relates different
representations of the same MPU through a local gauge. We find that all
MPUs are, in fact, Quantum Cellular Automata (QCAs)~\cite{schumacher:qca},
this is, they propagate information strictly by a finite distance only.
Based on our standard form, we define an index which measures the net
quantum information flow through the MPU, and give a Matrix Product based
derivation of the Index Theorem of Gross \emph{et
al.}~\cite{gross:index-theorem} which
shows that the index precisely labels equivalence classes of MPUs under
smooth deformations.  We also characterize the tensors generating MPUs
with one of the following symmetries: conjugation, time reversal, and
transposition. For the case of conjugation, we fully classify the
corresponding equivalence classes, and for all symmetries, we
connect our results with the characterization of symmetry protected
topological phases in MPS. We conclude with some relevant examples of MPUs
possessing the different symmetries.

Recently, MPUs have been used to describe the time evolution operator in the
study of Floquet phases, based on the intuition that the Matrix Product
structure captures the time evolution under local
Hamiltonians~\cite{po:chiral-floquet-mbl-etal}; with the
additional assumption that these MPUs describe QCAs, this allowed to use
the Index Theorem of Ref.~\cite{gross:index-theorem} to classify certain
Floquet phases.  Our result removes the need for this additional
assumption by proving its general validity, and gives a direct way of
classifying MPUs based on their standard form, providing the tools for
generalizations to systems with symmetries in the MPV formalism. Let us
note that the effect of symmetries on the classification of time
evolutions has recently been investigated for one-dimensional random walks
of single particles~\cite{cedzich:random-walk-classification}, which in
the absence of symmetries share many common features with QCAs. We believe
that the structure theory of MPUs developed in this work will allow to
generalize these results to the domain of interacting
systems~\cite{fidkowski:interacting-floquet-fermions}.

This paper is organized as follow: In Section II we review some of the
basic properties of MPVs that will be used along the paper. In Section III
we derive the general structure of MPUs and characterize them with a
Standard Form. In Section IV we introduce the
Index~\cite{gross:index-theorem} and prove its robustness. Section V
contains the characterization of tensors corresponding to MPUs with
different symmetries. Section VI analyzes the equivalences among them, and
Section VII includes some relevant examples. The connection between MPUs
and QCAs is given in the Appendix.

\section{Basics in Matrix Product Vectors}

In this section we will recall the basic definitions and results about
Matrix Product Vectors from Ref.~\cite{cirac:mpdo-rgfp}.

Let us consider a Hilbert space of dimension $d_0$, $H_{d_0}$, a basis $\{e_{n}\}_{n=1}^{d_0}$, and a tensor, ${\cal A}$, of complex coefficients $A_{\alpha,\beta}^n$, with $n=1,\ldots, d_0$ and $\alpha,\beta=1,\ldots,D$. The tensor defines a family of vectors, $\{V^{(N)}\in H_d^{\otimes N}\}_{N=1}^{\infty}$, as
 \be
 V^{(N)} = \sum_{n_1,\ldots,n_N=1}^{d_0} c_{n_1,\ldots,n_N} \; e_{n_1}\otimes\ldots\otimes e_{n_N}
 \ee
where
 \be
 \label{coefficients}
 c_{n_1,n_2,\ldots,n_N} = \sum_{\alpha_1,\ldots, \alpha_N=1}^D
 \prod_{k=1}^N A^{n_k}_{\alpha_k,\alpha_{k+1}}
 \ee
are complex coefficients and with the identification
$\alpha_{N+1}=\alpha_1$. We call the $V^{(N)}$ matrix product vectors
(MPV), and we say that ${\cal A}$ generates the MPV, where the $\alpha$'s
and $n$ are the bond (or auxiliary) and physical indices, respectively,
and $D$ and $d_0$ their corresponding dimensions. We will use the same
graphical notation as in \cite{cirac:mpdo-rgfp}. First, we represent the tensor ${\cal A}$ as
 \be
 {\includegraphics[height=2.8em]{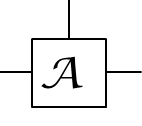}}
\ee
where the horizontal (vertical) lines represent the bond (physical) indices. We will represent the MPV
 \be
 V^{(N)}=\;\raisebox{-34pt}{\includegraphics[height=5.4em]{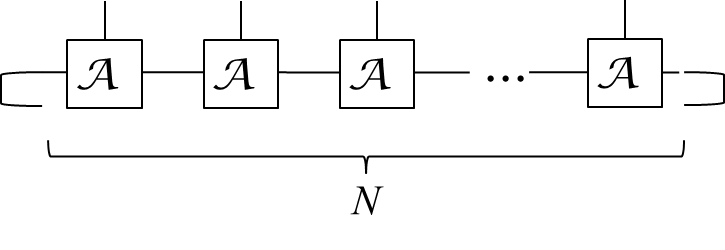}}
 \ee
Here, the lines that join different tensors indicate that the corresponding indices are contracted, while the vertical lines (which are open) represent the physical indices. The curvy lines at the end indicate that the last and first tensors are also contracted.
This graphic means that the coefficients of $V^{(N)}$ can be determined by assigning values to the physical (open) indices, multiplying the matrices and then taking the trace.

\begin{defn}
The {\em transfer operator} is defined as
 \be
 E = \sum_{n=1}^D A^n \otimes \bar A^n
 \ee
where $A^n$ are matrices with elements $A^n_{\alpha,\beta}$, and the bar denotes complex conjugate. We will denote by $\lambda_E$ the spectral radius of $E$, i.e. the eigenvalue with largest absolute value.
\end{defn}

As the MPV generated by two different tensors may coincide, it is useful to define a canonical form:
\begin{defn}
We say that a tensor ${\cal A}$ generating MPV is in {\em canonical form}
(CF) if: (i) the matrices are of the form $A^n = \oplus_{k=1}^r \mu_k A^n_k$, where $\mu_k\in \mathds{C}$ and the spectral radius of the transfer matrix, $E_k$, associated to $A^n_k$ is equal to one; (ii) for all $k$, there exists no projector, $P_k$, such that $A^n_k P_k = P_k A^n_k P_k$ for all $n$, or $P_k A^n_k = P_k A^n_k P_k$ for all $n$.
\end{defn}
This basically means that the matrices $A^n$ are written in a block-diagonal form and there is no way of making the blocks smaller. Furthermore, if two tensors, ${\cal A},{\cal B}$, are related to each other by a gauge transformation (${\cal B}= X{\cal A}X^{-1}$) then they generate the same MPV. Using this fact, one can show that for any tensor it is always possible to find another one that generates the same MPV and is in CF.

\begin{defn}
We say that a tensor generating MPV is {\em normal} if it is in CF, has a single block ($r=1$), and its associated transfer matrix has a unique eigenvalue of magnitude (and value) equal to its spectral radius, which is equal to one.
\end{defn}
Note that the transfer matrix $E_k$ associated to $A^n_k$ in CF may still have several eigenvalues of magnitude equal to their spectral radius (the so-called $p$-periodic states). For normal tensors, however, this is not the case for its single block. Furthermore, its transfer matrix has no Jordan block corresponding to the eigenvalue one. Moreover, the condition that the spectral radius is equal to one can always be met by multiplying the tensor by a constant.

Let us denote by $\Phi$ and $\rho$ the left and right eigenvectors of $E$ corresponding to the eigenvalue 1 of a normal tensor.

\begin{defn}
For normal tensors, we say that ${\cal A}$, generating MPV, is in {\em Canonical Form II} (CFII) if
 \begin{subequations}
 \label{Erightleft}
 \bea
 (\Phi|&=&\sum_{n=1}^D (n,n|,\\
 |\rho)&=&\sum_{n=1}^D \rho_n |n,n),
 \eea
 \end{subequations}
where $\rho_n>0$ and $(\Phi|\rho)=1$.
\end{defn}

Since $\Phi$ and $\rho$ are rank-2 tensors, they can be considered as $D\times D$ matrices, or as vectors in $H_D\otimes H_D$. We will use those considerations depending on the context. Graphically,
 \be
 \label{I_Transfer}
 E=\;\raisebox{-20pt}{\includegraphics[height=4.5em]{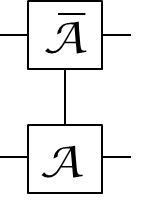}}
 \ee
and
 \be
 {\includegraphics[height=4.3em]{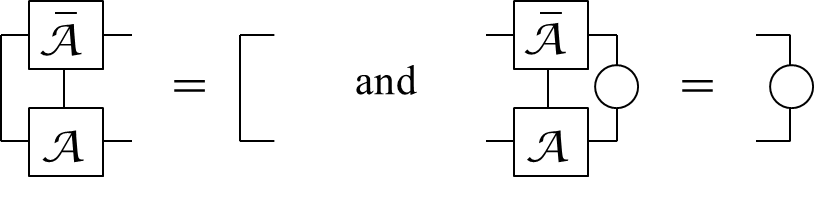}}
 \ee
Here, the tensor with the circle represents the right eigenvector $\rho$.

For a normal tensor it is always possible to find a gauge transformation defining a new normal tensor that is in CFII and generates the same MPV.
Another concept that we will use later on is that of blocking.

\begin{defn}
\label{blocking}
Given ${\cal A}$ generating MPV, we denote by ${\cal A}_k$ the tensor obtained after blocking $k$ tensors
\be
 \label{MPUblock}
 {\includegraphics[height=5.0em]{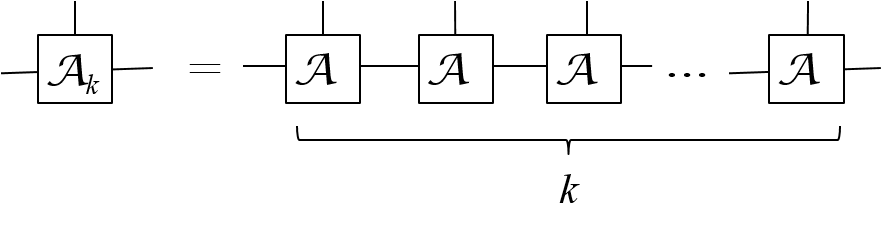}}
\ee
\end{defn}
In (\ref{MPUblock}), the vertical indices have been combined into a single one. Note that $d_k=d_0^k$ and $D$ are the dimension of the local physical Hilbert space and bond dimension corresponding to ${\cal A}_k$, respectively.

In the following, we will transition to Matrix Product Operators, for
which $H_{d_0}$ will describe the space of operators acting on a Hilbert
space $H_d$ of dimension $d$, and the MPV will thus represent operators
acting on $H_d^{\otimes N}$.  In the case where these operators are
unitary, we will term them Matrix Product Unitaries
(MPU)~\cite{chen:2d-spt-phases-peps-ghz}. In order to emphasize that the
tensor generates MPU, we will denote it by ${\cal U}$ and the
corresponding MPU by $U^{(N)}$.

\section{Stucture of MPU}

The aim of this section is to characterize when a tensor ${\cal U}$ generates translationally invariant MPU, i.e., when
\be
 \label{}
 U^{(N)}=\;\raisebox{-38pt}{\includegraphics[height=5.4em]{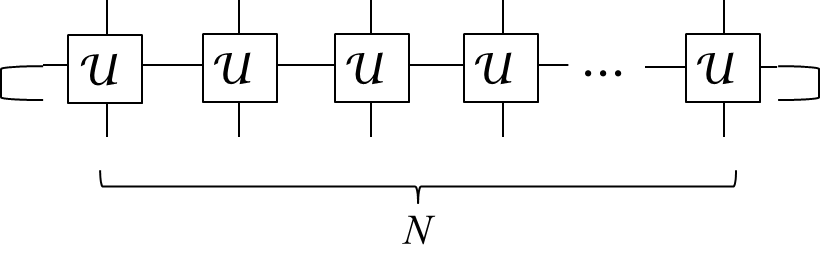}}
\ee
is a unitary operator for $N=2, 3\dots$. We will also introduce a standard form for such tensors and a fundamental theorem which will allow us to relate tensors generating the same family of MPU.

As explained in the previous section, unless we specify the contrary, wlog we can assume that the tensor ${\cal U}$ is in CF. We will denote by  $\bar{\cal U}$ the tensor obtained from ${\cal U}$ by transposing the physical indices and conjugating all coefficients, and use the graphical notation
\be
 \label{}
 {\includegraphics[height=3.0em]{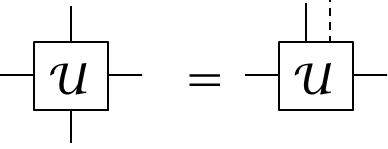}}
\ee
The unitarity of the MPU is equivalent to $U^{(N)\dagger}U^{(N)}=\Id^{\otimes N}$. Graphically,
\be
 \label{UisUnitary}
 {\includegraphics[height=4.6em]{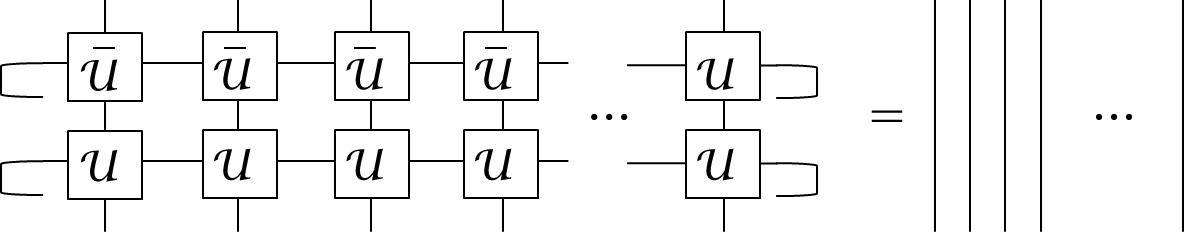}}
\ee
We also define the transfer matrix of $\frac{1}{\sqrt{d}}{\cal U}$
\be
 \label{eq:transfer-op}
 E=\frac{1}{d}\;\;\raisebox{-22pt}{\includegraphics[height=4.2em]{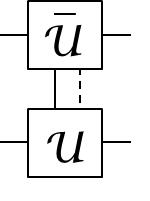}}
\ee

Let us start with the following simple observation:

\begin{prop}
\label{prop:normal-tensor}
$E$ has just one nonzero eigenvalue, which is equal to one, and $\frac{1}{\sqrt{d}}{\cal U}$ is a normal tensor.
\end{prop}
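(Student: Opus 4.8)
\emph{Proof proposal.} The plan is to extract all the spectral information about $E$ from a single scalar identity, $\tr(E^N)=1$ valid for every $N\ge 2$, and then to reconcile this with the canonical form of $\frac{1}{\sqrt d}{\cal U}$. First I would compute $\tr(E^N)$ directly. Expanding the definition \eqref{eq:transfer-op} and contracting the bond indices around the loop gives
\be
\tr(E^N)=\frac{1}{d^N}\sum_{n_1,\dots,n_N}\big|\tr(A^{n_1}\cdots A^{n_N})\big|^2=\frac{1}{d^N}\,\|U^{(N)}\|_{\mathrm{HS}}^2 ,
\ee
where $\|\cdot\|_{\mathrm{HS}}$ is the Hilbert--Schmidt norm and I have used that the coefficients $c_{n_1,\dots,n_N}=\tr(A^{n_1}\cdots A^{n_N})$ from \eqref{coefficients} together with an orthonormal physical (operator) basis. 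Since $U^{(N)}$ is unitary, $\|U^{(N)}\|_{\mathrm{HS}}^2=\tr(U^{(N)\dagger}U^{(N)})=\tr(\Id^{\otimes N})=d^N$, so that $\tr(E^N)=1$ for all $N\ge 2$. This is the only computational input of the argument, and it is immediate once the normalisation by $d$ is tracked correctly.

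Next I would turn this into a statement about the spectrum. Writing $\lambda_1,\dots,\lambda_s$ for the \emph{distinct nonzero} eigenvalues of $E$ with algebraic multiplicities $m_1,\dots,m_s$ (zero eigenvalues and Jordan structure do not contribute to power sums), the identity becomes $\sum_i m_i\lambda_i^N=1=1^N$ for every $N\ge 2$. Because distinct nonzero geometric sequences $(\mu^N)_{N\ge 2}$ are linearly independent, matching the two sides forces the list $\{\lambda_i\}$ to consist of the single value $1$ with multiplicity $1$: if $1$ were absent, the coefficient $-1$ of $1^N$ could not be cancelled, and any $\lambda_i\ne 1$ would carry a coefficient $m_i\ge 1$ that would have to vanish. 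Hence $E$ has exactly one nonzero eigenvalue, equal to $1$, and it is nondegenerate, which already excludes any Jordan block attached to it.

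Finally I would deduce normality from the canonical form. Writing $A^n=\bigoplus_{k=1}^r\mu_k A^n_k$ with $\mu_k\ne 0$ and each block transfer matrix $E_k$ of spectral radius one, the operator $E$ is block diagonal over the sectors $(k,k')$, the diagonal sector $(k,k)$ being $\tfrac{|\mu_k|^2}{d}E_k$. Each such diagonal block inherits a nonzero leading eigenvalue from $E_k$, so $r$ distinct diagonal sectors would produce at least $r$ nonzero eigenvalues of $E$. Since we have just shown there is only one, necessarily $r=1$, i.e.\ the tensor has a single block. Combined with the spectral facts above---spectral radius one, a unique eigenvalue of that magnitude, and no Jordan block at $1$---this is exactly the definition of a normal tensor.

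I expect the last paragraph, rather than the spectral counting, to be the main obstacle: one must check that the cross-sector blocks $(k,k')$ and the scalings $\mu_k$ cannot conspire to cancel or hide the diagonal contributions, and that in canonical form every $\mu_k$ is genuinely nonzero so that each block really does contribute a nonzero eigenvalue to $E$.
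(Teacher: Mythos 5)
Your proposal is correct and follows essentially the same route as the paper: compute $\tr(E^N)=\frac{1}{d^N}\tr(U^{(N)\dagger}U^{(N)})=1$ for all $N$, deduce that the spectrum of $E$ is a single nondegenerate eigenvalue $1$ plus zeros, and then observe that more than one block in the CF would force additional nonzero eigenvalues. The only difference is that where the paper cites Lemma A.5 of Ref.~\cite{cirac:mpdo-rgfp} for the spectral step, you supply the elementary argument (linear independence of distinct geometric sequences) explicitly, and your closing worry about cross-sector cancellation is unfounded since $E$ is genuinely block diagonal over the sectors $(k,k')$, so its spectrum is just the union of the sector spectra.
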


\begin{proof}
Since $U^{(N)}$ is unitary for all $N>1$, then ${\rm
tr}(E^N)=\frac{1}{d^N}{\rm tr}(U^{(N)}U^{(N)\dagger})=1$ for all $N>1$.
Thus, the spectrum of $E$ contains a one and the rest of the eigenvalues
vanish, as it trivially follows from Lemma A.5 in \cite{cirac:mpdo-rgfp}. This is incompatible with having more than one block in the CF. Furthermore, the transfer matrix does not have more than one eigenvalue of magnitude equal to one, so that the tensor must be normal.
\end{proof}

As a consequence of this observation, wlog we can  assume that $\frac{1}{\sqrt{d}}{\cal U}$ is in CFII. Following the Definition \ref{blocking} we will denote by ${\cal U}_k$ the tensor corresponding to blocking ${\cal U}$ $k$ times. Note also that this tensor is also normal and in CFII.

\subsection{Characterization of MPU via simple tensors}

We will consider a special kind of tensors generating MPU, which have a peculiar property, from which we can derive their general structure. Then we will show that by blocking a final number of times, any tensor generating MPU develops such a property.

\begin{defn}
We say that a tensor ${\cal U}$ is {\em simple} if there exist two tensors, $a$ and $b$, such that
 \begin{subequations}
 \label{simple}
 \bea
 \label{simple1}
  && {\includegraphics[height=5.0em]{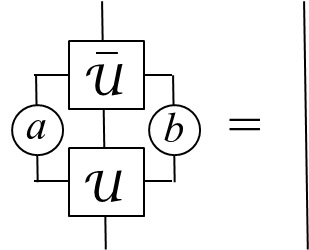}}\\
  && {\includegraphics[height=5.0em]{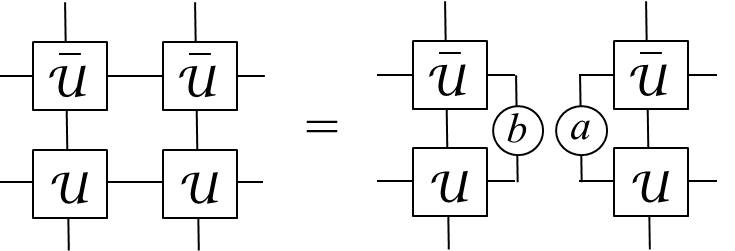}}
 \label{simple2}
 \eea
 \end{subequations}
\end{defn}

Now we show that by blocking, any tensor generating MPU becomes simple.

\begin{prop}
\label{blockingsimple}
(i) Any simple tensor generates MPU. (ii) For any tensor generating MPU,
${\cal U}$, there exists some $k\le D^4$ such that ${\cal U}_k$ is simple,
where tensors $a,b$ are the left and right fixed points, $\Phi$ and
$\rho$,  of the transfer operator $E$, respectively.  
\end{prop}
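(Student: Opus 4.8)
The plan is to treat the two parts separately, with part (ii) carrying essentially all of the difficulty. For part (i) I would verify $U^{(N)\dagger}U^{(N)}=\Id^{\otimes N}$ directly from the two relations defining a simple tensor. Stacking the ring representing $U^{(N)}$ on top of the one representing $U^{(N)\dagger}$ and contracting the shared physical legs, I would use (\ref{simple1}) and (\ref{simple2}) to rewrite each local contracted pair $\mathcal{U},\bar{\mathcal{U}}$ as an identity on the remaining physical leg together with the boundary tensors $a$ and $b$ transported along the bond. Iterating around the ring telescopes the network: the contribution produced on the right of one site meets that on the left of the next, and after going once around the loop the boundary pieces collapse to the scalar $(\Phi|\rho)=1$, leaving $\Id^{\otimes N}$. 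The same contraction run in the other order gives $U^{(N)}U^{(N)\dagger}=\Id^{\otimes N}$, so (i) is a careful but routine diagrammatic telescoping.

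For part (ii) I would start from Proposition~\ref{prop:normal-tensor}: wlog $\frac{1}{\sqrt d}\mathcal{U}$ is normal and in CFII, so its transfer operator $E$ acts on $H_D\otimes H_D$, has a single nonzero eigenvalue equal to one with no associated Jordan block, and possesses the fixed points $\Phi,\rho$ of (\ref{Erightleft}). The first step is the observation that blocking multiplies the transfer operator, $E_k=E^k$. Writing $E=|\rho)(\Phi|+\mathcal{N}$ with $\mathcal{N}$ nilpotent and annihilating the fixed points, the vanishing of every remaining eigenvalue forces $E^k=|\rho)(\Phi|$ exactly, once $k$ exceeds the nilpotency index. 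Equivalently, viewing $E$ as the completely positive map $X\mapsto\frac1d\sum_n\mathcal{U}^nX(\mathcal{U}^n)^\dagger$ on $D\times D$ matrices, after blocking it becomes exactly the replacer channel $X\mapsto\rho\,\tr(X)$, which is the map-level counterpart of the simple property with $a=\Phi$, $b=\rho$.

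The heart of the argument, and the step I expect to be the main obstacle, is to upgrade this trace-level rank-one identity to the genuine simple relations (\ref{simple1})--(\ref{simple2}), in which only one of the two physical legs of the local $U^\dagger U$ block is summed. The plan is to regard the partially contracted tensor as a positive (Choi-type) object on the doubled virtual space whose partial trace over the open physical leg reproduces the rank-one $E^k$; a rank argument should then force this object itself to factorize as $\rho\otimes\Phi$ on the virtual indices times the identity on the open leg, which is precisely simplicity. The delicate point is that the trace-level statement is strictly weaker than the partial factorization, so closing the gap must genuinely invoke unitarity of $U^{(N)}$ rather than merely rank-one-ness of $E^k$. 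Finally, the bound $k\le D^4$ comes from dimension counting: the transient (nilpotent) part of the relevant operator lives on the space of operators over the bond space $H_D\otimes H_D$, of dimension at most $D^4$, and is therefore annihilated after at most $D^4$ blockings.
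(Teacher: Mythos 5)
Your part (i) is essentially the paper's argument (sequentially applying (\ref{simple}) to telescope $U^{(N)\dagger}U^{(N)}$ into (\ref{UisUnitary})), and your first step of part (ii) --- that after blocking past the largest Jordan block the transfer operator becomes exactly $|\rho)(\Phi|$ --- is also how the paper begins. The gap sits precisely at the step you yourself flag as the heart of the matter, and the route you sketch for it fails. First, the target is too strong: the single-column object with one physical leg open does \emph{not} in general factorize as $|\rho)(\Phi|$ on the virtual indices times the identity on the physical leg, no matter how much one blocks --- the shift MPU is a counterexample, since there the open physical legs remain entangled with the virtual legs --- whereas simplicity, Eqs.~(\ref{simple1})--(\ref{simple2}), is a strictly weaker statement that the shift does satisfy. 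Second, the proposed rank argument is unavailable: in the (Choi-type) index arrangement in which the partially contracted column is positive semidefinite, its partial trace over the physical leg is the Choi matrix of $E^k$, which for the replacer channel is of the form $\openone\otimes\rho$ and hence \emph{full} rank; in the arrangement in which the partial trace is the rank-one matrix $|\rho)(\Phi|$, the object is not positive (indeed $|\rho)(\Phi|$ itself is not Hermitian). So neither positivity nor rank forces any factorization, and no amount of "invoking unitarity" in this form closes the gap.

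The paper closes it by a different mechanism. It writes the column as $E\otimes\openone+\sum_\alpha S_\alpha\otimes\sigma_\alpha$ with $\sigma_\alpha$ traceless, uses unitarity of all $U^{(N)}$ to show that the trace of every word in $E'$ and the $S'_\alpha$ containing at least one $S'_\alpha$ vanishes, deduces that every element of the algebra generated by the $S'_\alpha$ is nilpotent, and then invokes the Nagata--Higman theorem (with Razmyslov's bound) to obtain a \emph{uniform} degree $J'<D^2$ at which all products $S'_{\alpha_1}\cdots S'_{\alpha_{J'}}$ vanish --- note that nilpotency of each element does not by itself yield such a uniform bound. A second blocking by $J'$ then forces every surviving $S''_\alpha$ to contain a factor $E'=|\rho)(\Phi|$ (the three "types" in the paper), from which both (\ref{simple1}) and (\ref{simple2}) follow by inspection. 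This also corrects your accounting of the bound: $k\le D^4$ arises as the product of the two blocking lengths $J<D^2$ (largest Jordan block of $E$, which acts on a $D^2$-dimensional space) and $J'<D^2$ (Nagata--Higman), not as the nilpotency index of a single operator on a $D^4$-dimensional space. Without a substitute for the Nagata--Higman step, your outline cannot be completed.
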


\begin{proof}
(i) One can just apply sequentially (\ref{simple}) to $U^{(N)\dagger}U^{(N)}$ to obtain
(\ref{UisUnitary}) so that $U^{(N)}$ is indeed unitary for $N>1$.
\\[0.5ex]
(ii) Given ${\cal U}$ generating MPU, we can always write
\be
 \label{WIsom}
 \raisebox{-20pt}{\includegraphics[height=4.6em]{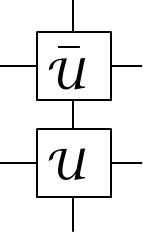}}=E\otimes \Id + \sum_\alpha S_\alpha\otimes \sigma_\alpha
\ee
Here, the first part of the tensor products acts on the auxiliary indices, whereas the second part on the physical ones. Furthermore $E$ is the transfer operator defined in (\ref{eq:transfer-op}) whereas the $\sigma_\alpha$ are traceless operators.

By proposition \ref{prop:normal-tensor}, $E$ has a unique non-zero
eigenvalue equal to $1$. Since ${\cal U}$ is in CFII, its associated  left
and right eigenvectors, $\Phi$ and $\rho$, are given by
(\ref{Erightleft}). Graphically, we have [cf.~(\ref{I_Transfer})]
\be
 \label{II_UTransfer}
 \raisebox{-20pt}{\includegraphics[height=4.4em]{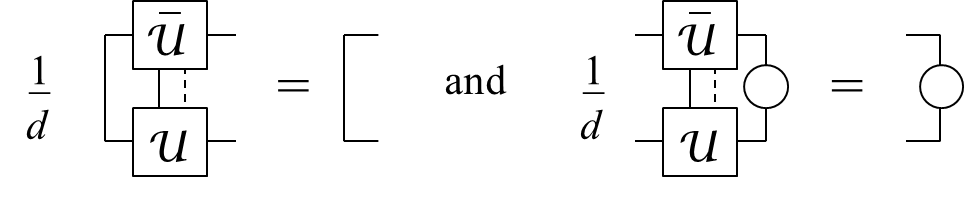}}
\ee
where the circle represents the right eigenvector, $\rho$. Furthermore, there is no Jordan block associated to this eigenvalue, whereas there may be one or several Jordan blocks associated to the zero eigenvalues. If we denote by $J< D^2$ the largest size of all Jordan blocks of $E$, let us block now $J$ sites and consider ${\cal U}_J$. Given that this tensor itself generates MPU and is in CFII, we can use for it the decomposition (\ref{WIsom}), where we will denote by $E'$ and $S'_\alpha$ the new matrices. We have $E'=|\rho)(\Phi|$, since after blocking all Jordan blocks will disappear. Taking into account (\ref{UisUnitary}) we conclude that the trace of any product of operators containing one or more $S'_\alpha$ vanishes. This implies that
 \be
 \label{ESE=0}
 E'S'_{\alpha_1}\ldots S'_{\alpha_m}E'={\rm tr}(E'S'_{\alpha_1}\ldots S'_{\alpha_m})=0
 \ee
for any $m\ge 1$. Furthermore, any element, $S$, in the algebra generated by the $S'_\alpha$ must have zero eigenvalues since ${\rm tr}(S^N)=0$ for all $N$, and thus they are nilpotent. It follows then from a result of Nagata and Higman \cite{Nagata,Higman}, improved later by Razmyslov \cite{Razmyslov}, that there exists some $J'< D^2$ such that
 \be
 \label{Sprime=0}
  S'_{\alpha_1}\ldots S'_{\alpha_{J'}} =0
 \ee
for any set of $\alpha$'s (see \cite{Zelmanov} for a review about this type of questions). Let us now block again, $J'$ times, so that we consider ${\cal U}_{JJ'}$. Again, we can use the decomposition (\ref{WIsom}) and denote by $E''=E'$ and $S''_\alpha$ the new matrices. The latter can have one of the following three types of structures:
 \begin{subequations}
 \label{sprimeforms}
 \bea
 &{\rm type } 1:&\quad S''_\alpha=E'\left[\prod_{k} S'_{\alpha_k}\right],\\
 &{\rm type } 2:&\quad S''_\alpha=\left[\prod_k S'_{\alpha_k}\right] E',\\
 &{\rm type } 3:&\quad S''_\alpha=\left[\prod_{k} S'_{\alpha_k}\right] E' \left[\prod_\ell S'_{\alpha_\ell}\right]\;.
 \eea
 \end{subequations}
The reason is that whenever two non-consecutive $E'$ appear, the operator vanishes due to (\ref{ESE=0}), and the one where no $E'$ appears vanishes as well due to (\ref{Sprime=0}).

Given the fact that $E'' S'' E''=0$ and that $E'' E''=E''$, we immediately obtain (\ref{simple1}) for ${\cal U}_{JJ'}$ with $a=\Phi$ and $b=\rho$. Now, let us consider the lhs of (\ref{simple2}). When we replace decomposition (\ref{WIsom}) in each of the columns, we will have products of the form $A_1 A_2$, where $A_{1,2}$ will be either $E''$ or one of the $S''_\alpha$. We will show now that $A_1 A_2=A_1 E'' A_2$, which will immediately imply (\ref{simple2}), with $a=\Phi$ and $b=\rho$. This is obvious whenever $A_1$ or $A_2$ is equal to $E''$. Thus, we just have to show that $S''_\alpha S''_\beta=S''_\alpha E'' S''_\beta$. But this follows automatically from (\ref{sprimeforms}), since this product vanishes unless $S''_\alpha$ is of type 2 and $S''_\beta$ is of type 1. In that case, we can write $S''_\alpha=S''_\alpha E''$, from which the statement follows. Note that we have blocked $JJ'< D^4$ times.
\end{proof}

\begin{cor}
\label{cor:simple1}
Equation (\ref{simple1}) holds for any tensor ${\cal U}$ generating MPU
(not necessary simple), where the tensors $a,b$ are the left and right fixed
points, $\Phi$ and $\rho$,  of the transfer operator $E$.  \end{cor}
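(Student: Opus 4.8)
The plan is to show that the single relation (\ref{simple1}) is already forced by unitarity together with the spectral structure of $E$ from Proposition \ref{prop:normal-tensor}, so that \emph{no} blocking is needed for it; only the gluing relation (\ref{simple2}) genuinely requires the argument of Proposition \ref{blockingsimple}. Concretely, I read (\ref{simple1}) as the statement that closing the doubled tensor on the left-hand side of (\ref{WIsom}) on its left bond with $(\Phi|$ and on its right bond with $|\rho)$ leaves the identity on the open physical line. So the whole task reduces to evaluating this contraction.

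First I would insert the decomposition (\ref{WIsom}), $E\otimes\Id+\sum_\alpha S_\alpha\otimes\sigma_\alpha$ with the $\sigma_\alpha$ traceless, and contract the bond indices with the fixed points. Since $E|\rho)=|\rho)$ and $(\Phi|\rho)=1$ by (\ref{Erightleft}), the term proportional to $\Id$ contributes exactly $\Id$, and it remains only to show that $(\Phi|S_\alpha|\rho)=0$ for every $\alpha$; this single vanishing statement is the entire content of the corollary.

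To obtain it, I would expand the unitarity condition (\ref{UisUnitary}) on a ring of $N$ sites by inserting (\ref{WIsom}) at each site and tracing over the closed bond loop, producing a sum of terms $\tr(X_1\cdots X_N)\,Y_1\otimes\cdots\otimes Y_N$ with $(X_k,Y_k)\in\{(E,\Id)\}\cup\{(S_\alpha,\sigma_\alpha)\}$. Because $\{\Id,\sigma_\alpha\}$ is a basis of operators on $H_d$, the physical strings $Y_1\otimes\cdots\otimes Y_N$ are linearly independent, so matching against $\Id^{\otimes N}$ forces the bond trace of every monomial containing at least one $S_\alpha$ to vanish. Choosing the string with a single $\sigma_\alpha$ at one site and identities elsewhere, cyclicity of the trace gives $\tr(E^{N-1}S_\alpha)=0$ for all $N\ge 2$. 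Finally I would use that, by Proposition \ref{prop:normal-tensor}, the only nonzero eigenvalue of $E$ is a simple $1$ with no Jordan block, so $E=|\rho)(\Phi|+\mathcal{N}$ with $\mathcal{N}$ nilpotent on the kernel and $\mathcal{N}^{\,n}=0$ once $n$ exceeds the largest Jordan-block size $J<D^2$. Hence $E^{N-1}=|\rho)(\Phi|$ for $N-1\ge J$, and $\tr(E^{N-1}S_\alpha)=(\Phi|S_\alpha|\rho)=0$ follows, yielding (\ref{simple1}) with $a=\Phi$, $b=\rho$.

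The main obstacle is the middle step: the linear-independence bookkeeping of the expansion of (\ref{UisUnitary}) and the reduction to the single-$S_\alpha$ monomial, which one must phrase carefully so that the coefficient of each physical operator string is seen to vanish separately. Everything afterward is the same spectral argument already used for the blocked tensor in Proposition \ref{blockingsimple}, except that here it is applied through the stabilization $E^{N-1}=|\rho)(\Phi|$ for large $N$ rather than through the exact identity $E''=|\rho)(\Phi|$ that blocking produces; this is precisely what lets one drop the blocking hypothesis for (\ref{simple1}).
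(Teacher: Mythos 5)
Your argument is correct and is in essence the paper's own proof, unpacked: the paper simply observes that, since $E^{J}=|\rho)(\Phi|$ once $J$ exceeds the largest Jordan block of $E$, the left-hand side of Eq.~(\ref{simple1}) equals $\tfrac{1}{d^{J}}\mathrm{tr}_{1,\dots,J}\bigl(U^{(J+1)\dagger}U^{(J+1)}\bigr)=\Id$ by unitarity, which is exactly your reduction to $(\Phi|S_\alpha|\rho)=0$ carried out coefficient-by-coefficient in the basis $\{\Id,\sigma_\alpha\}$. Both routes rest on the same two ingredients --- unitarity of $U^{(N)}$ and the stabilization of the powers of $E$ to the rank-one fixed-point projector --- so your version is sound, merely more laborious than the paper's one-line partial-trace identity.
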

\begin{proof}
Take $J\le D^2$ as in the previous Proposition and consider $U^{(J+1)}$. The LHS of (\ref{simple1}) is simply
 \be
 \frac{1}{d^J}\tr_{1,\ldots, J}(U^{(J+1)\,\dagger}U^{(J+1)}) ,
 \ee
which is clearly the identity, since $U^{(N)}$ is unitary.
\end{proof}

This in turn implies:
\begin{cor}
Let ${\cal U}$ be a tensor generating MPU. If its blocked tensor ${\cal U}_k$ is simple,
the same is true for all ${\cal U}_{k'}$, $k'\ge k$.
\end{cor}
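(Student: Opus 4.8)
The plan is to reduce the claim to propagating the single property (\ref{simple2}) under further blocking, since (\ref{simple1}) comes for free. Indeed, by Corollary~\ref{cor:simple1} equation (\ref{simple1}) holds for \emph{any} tensor generating MPU, with $a=\Phi$ and $b=\rho$ the left and right fixed points of the transfer operator; in particular it holds for every ${\cal U}_{k'}$. Moreover these fixed points do not change with the level of blocking: the transfer operator of ${\cal U}_{k'}$ is $E^{k'}$, whose unique nonzero left and right eigenvectors are again $(\Phi|$ and $|\rho)$ by Proposition~\ref{prop:normal-tensor}, and since $(\Phi|\rho)=1$ the associated $|\rho)(\Phi|$ is the same rank-one idempotent for all $k'$. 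Hence it suffices to establish (\ref{simple2}) for ${\cal U}_{k'}$, $k'\ge k$, again with $a=\Phi$, $b=\rho$.

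First I would note that (\ref{simple2}) is a \emph{local} tensor identity: its left-hand side is the contraction of two adjacent copies of the single-site object appearing on the left-hand side of (\ref{WIsom}), and its right-hand side amounts to inserting the rank-one fixed-point projector $|\rho)(\Phi|$ on the bond joining them. Because blocking is associative, a width-$k'$ column is literally $k'$ single columns placed side by side along the bond direction, so the left-hand side of (\ref{simple2}) for ${\cal U}_{k'}$ is a horizontal strip of $2k'$ single columns sharing one central bond. The key step is to single out, around that central bond, the $k$ single columns immediately to its left together with the $k$ immediately to its right: these form two adjacent width-$k$ columns meeting exactly at the central bond. Applying (\ref{simple2}) for ${\cal U}_k$ (which holds by hypothesis, as ${\cal U}_k$ is simple) to this central pair inserts $|\rho)(\Phi|$ on the central bond, while the remaining $m:=k'-k$ columns on either side are untouched and simply come along for the ride.

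After this insertion, the $m$ outer columns on the left recombine with the $b$-capped central width-$k$ column into a width-$k'$ column capped on the right by $b=\rho$, and symmetrically the right recombines into a width-$k'$ column capped on the left by $a=\Phi$; this is precisely (\ref{simple2}) for ${\cal U}_{k'}$. Together with (\ref{simple1}) it shows that ${\cal U}_{k'}$ is simple. (Equivalently, one establishes the case $k'=k+1$ and then induct.) The point needing most care — and the only content beyond bookkeeping — is the legitimacy of applying the width-$k$ identity (\ref{simple2}) \emph{inside} the larger diagram: this is valid because (\ref{simple2}) is an identity between tensors with open outer bonds, so it may be substituted into any contraction, and because the caps $a=\Phi$, $b=\rho$ that it produces are the very fixed points that (\ref{simple2}) for ${\cal U}_{k'}$ demands, as established above. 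I therefore do not expect any genuine obstacle; the statement is a consequence of the locality of (\ref{simple2}) together with the blocking-invariance of the fixed points.
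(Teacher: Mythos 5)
Your proof is correct and follows the same route the paper implies when it introduces this corollary with ``This in turn implies'': condition (\ref{simple1}) is automatic for every ${\cal U}_{k'}$ by Corollary~\ref{cor:simple1} (with the blocking-invariant fixed points $\Phi$, $\rho$), and (\ref{simple2}) for ${\cal U}_{k'}$ follows by applying the width-$k$ identity (\ref{simple2}) to the two width-$k$ columns adjacent to the central bond inside the larger contraction. The paper leaves this unproved as an immediate consequence, and your write-up simply makes the substitution argument explicit.
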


\subsection{\label{sec:standard-form}
Standard form}

In the previous section we have seen that, starting with a tensor
generating MPU in CF, it is enough to block $k< D^4$ sites to have a simple
tensor. In this subsection we will characterize the minimal such $k$ in
terms of the ranks of two matrices associated to ${\cal U}$. As a
corollary we will obtain an (essentially unique) standard form for simple
tensors. Another corollary will show that any MPU is a quantum cellular
automaton (QCA) and vice versa.

We consider two different singular value decompositions of a (not
necessarily simple) tensor ${\cal U}$ generating MPU, corresponding to two
different choices of how we combine the indices of ${\cal U}$ in order to
write it as matrices, ${\cal M}_1$ and ${\cal M}_2$, respectively. In the
first decomposition, we combine the left auxiliary and down physical
index, as well as the other two, whereas in the second, the other way
around. Graphically, the separation in two sets of indices is represented
by a dashed line:
\be
 \label{II_SVD}
 {\includegraphics[height=3.0em]{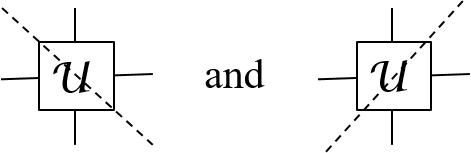}}
\ee

\begin{defn}
\label{defnrl}
We denote by $r$ and $\ell$ the rank of ${\cal M}_{1,2}$, respectively.
\end{defn}

Using a singular value decomposition, we can write 
\begin{equation}
\label{eq:sf-svd}
{\cal M}_{1,2}=V^\dagger_{1,2} D_{1,2} U_{1,2}\ , 
\end{equation}
where the $U$'s and $V$'s are
isometries, fulfilling $V_iV_i^\dagger=U_iU_i^\dagger = \Id$, and
$D_{1,2}$ are diagonal positive matrices, of dimensions $r$ and $\ell$,
respectively. We define now matrices $X_i$ and $Y_i$ such that ${\cal M}_i
= X_i Y_i$ in terms of those decompositions. We choose $X_2=V_2^\dagger$
and $X_1= V_1^\dagger \left[V_1 (\Id\otimes \rho)
V_1^\dagger\right]^{-1/2}$,  so that we have
\be
 \label{Y1Y1X1X1}
 X_2^\dagger X_2 = \Id, \quad X_1^\dagger (\Id\otimes \rho) X_1 = \Id.
 \ee
Recall that $\rho$ is a diagonal matrix with coefficients $\rho_n$ that appear in the right eigenvector of the transfer matrix (\ref{eq:transfer-op}), see (\ref{II_UTransfer}). Since $\rho_n>0$, the square roots and inverses are well defined. Additionally, defining
 \begin{subequations}
 \label{Z1Z2}
 \bea
 Z_2 &=& U_2^\dagger D_2^{-1}\\
 Z_1 &=& U_1^\dagger D_1^{-1} \left[V_1 (\Id\otimes \rho) V_1^\dagger\right]^{-1/2}
 \eea
 \end{subequations}
we have
\be
 \label{YZ=1}
 Y_1 Z_1 = Y_2 Z_2 = \Id
 \ee
and therefore the $Y_i$ are invertible.

Graphically, we can identify some tensors with those matrices
 \begin{subequations}
 \label{XY}
 \bea
 \label{X1Y1}
  && {\includegraphics[height=3.2em]{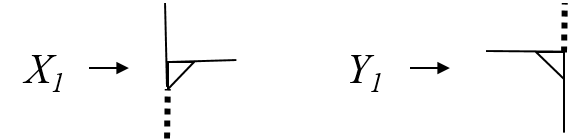}}\\
  \label{X2Y2}
  && {\includegraphics[height=3.2em]{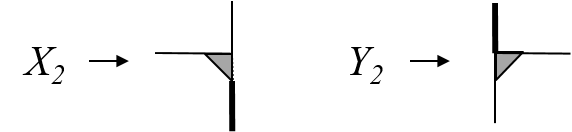}}
 \eea
 \end{subequations}
Note that we have represented all those tensors by triangles with the hypotenuse below the triangle. We have also filled the triangles corresponding to the second decomposition, and use thick dotted and solid lines to represent the bonds with dimensions $r$ and $\ell$, respectively. With this notation, we can re-express (\ref{Y1Y1X1X1}) as
 \be
 \label{X1X2b}
 {\includegraphics[height=3.8em]{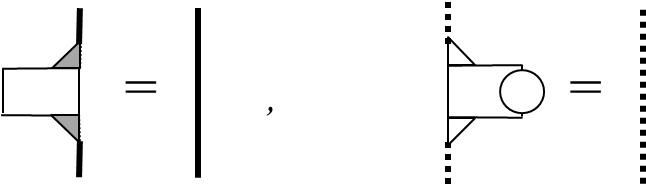}}
 \ee

We can thus decompose the tensor ${\cal U}$ as
\be
 \label{SVDforms2}
 {\includegraphics[height=4.5em]{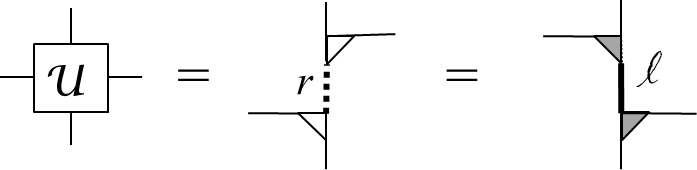}}
\ee

In terms of the previous decompositions, we define
\begin{subequations}
\label{uuvv}
 \bea
 \label{uu}
 u&=&\raisebox{-14pt}{\includegraphics[height=2.8em]{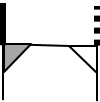}},\\
 \label{vdagger}
 v&=&\raisebox{-12pt}{\includegraphics[height=2.8em]{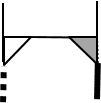}}
 \eea
\end{subequations}
Note that both $u$ and $v^\dagger$ map a space of $d^2$ dimension onto
another one of dimension $r\ell$.

\begin{lem}
\label{lemuisometry}
For any tensor ${\cal U}$, $u^\dagger u=\Id$ and hence $r\ell \ge d^2$
\end{lem}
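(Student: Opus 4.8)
The plan is to establish the isometry identity $u^\dagger u=\Id$ directly from the defining properties of the two half-tensors, and then read off the dimension bound as an immediate consequence. The starting point is that $u$, as defined in (\ref{uu}), is assembled solely from the tensors $X_1$ and $X_2$ of the two singular value decompositions, and the only properties of these tensors that I intend to use are the orthonormality conditions (\ref{Y1Y1X1X1}), namely $X_2^\dagger X_2=\Id$ and $X_1^\dagger(\Id\otimes\rho)X_1=\Id$, which are encoded graphically in (\ref{X1X2b}). Since these hold by construction for any tensor $\mathcal{U}$ generating MPU (once it is put into CFII via Proposition \ref{prop:normal-tensor}), the argument will apply in full generality and not only to simple tensors.

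Concretely, I would expand $u^\dagger u$ as a tensor network by stacking the conjugate $u^\dagger$ onto $u$ and contracting the two output bonds, of dimensions $r$ and $\ell$, that $u$ produces. The $\ell$-sector, built from $X_2$, collapses at once via $X_2^\dagger X_2=\Id$. The $r$-sector, built from $X_1$, collapses via the second relation in (\ref{Y1Y1X1X1}); the subtlety is that applying it requires the weight $(\Id\otimes\rho)$ to be present, and I expect this factor to be furnished by the contraction itself: the CFII right fixed point $|\rho)=\sum_n\rho_n|n,n)$ of (\ref{Erightleft}), which appears in the transfer structure (\ref{II_UTransfer}), is precisely what sits between the two copies of $X_1$ when the internal auxiliary indices are closed. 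After both collapses, the only open legs remaining are the two physical legs that make up the $d^2$-dimensional input space, and the network evaluates to the identity there, giving $u^\dagger u=\Id$.

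Finally, since $u$ maps $\mathds{C}^{d^2}$ into $\mathds{C}^{r\ell}$ and $u^\dagger u=\Id$ exhibits $u$ as an isometry, hence injective, its domain dimension cannot exceed its codomain dimension, so $d^2=\mathrm{rank}(u)\le r\ell$, which is the claimed inequality. I expect the only genuine obstacle to be the bookkeeping in the second step: verifying that the fixed point $\rho$ lands in exactly the position between the two $X_1$ factors where $X_1^\dagger(\Id\otimes\rho)X_1=\Id$ becomes applicable, which relies precisely on having brought $\frac{1}{\sqrt d}\mathcal{U}$ into CFII beforehand and on $\rho_n>0$. Everything else is a routine diagrammatic contraction.
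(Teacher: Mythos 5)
There is a genuine gap, and it sits exactly where you flagged the ``bookkeeping'': the relations (\ref{Y1Y1X1X1}) cannot actually be applied to the network $u^\dagger u$. Written out, $u$ is the contraction of $X_2$ (at the first site) with $X_1$ (at the second site) over their shared auxiliary bond, with the two down physical legs as input and the $\ell$- and $r$-dimensional legs as output. Stacking $u^\dagger$ onto $u$ and summing over those output legs therefore produces the operators $X_2X_2^\dagger$ and $X_1X_1^\dagger$ --- compositions in the \emph{opposite} order to the ones controlled by (\ref{Y1Y1X1X1}). These do not collapse: $X_2X_2^\dagger$ is a rank-$\ell$ projector on a $Dd$-dimensional space (the row space of ${\cal M}_2$), and $X_1X_1^\dagger$ is a rank-$r$ positive operator. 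Moreover, no factor $\rho$ is ``furnished by the contraction itself'': in $u^\dagger u$ the auxiliary leg of $X_1$ is contracted with the auxiliary leg of $X_2$ \emph{within each copy of $u$}, not closed against $\bar X_1$ through a fixed point, so the subnetwork $X_1^\dagger(\Id\otimes\rho)X_1$ simply never appears.

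The deeper issue is that your argument invokes only the SVD normalizations, which hold for the $X_i$ built from \emph{any} normal tensor in CFII, unitary or not. If the lemma followed from them alone, $u^\dagger u=\Id$ (and hence $r\ell\ge d^2$) would hold for every such tensor, which is false: e.g.\ a $D=1$ tensor given by a rank-one $d\times d$ matrix satisfies (\ref{Y1Y1X1X1}) but has $r\ell=1<d^2$, so its $u$, having rank at most $r\ell$, cannot be an isometry on $\mathds{C}^{d^2}$. The paper's proof instead uses (\ref{X1X2b}) to rewrite $u^\dagger u$ as a network which it then recognizes as $\frac{1}{d^{k}}\tr_{1,\ldots,k}[U^{(k+2)\dagger}U^{(k+2)}]$ for $k\le D^4$ chosen so that ${\cal U}_k$ is simple (Proposition \ref{blockingsimple}); it is this step --- the collapse $E^k=|\rho)(\Phi|$ together with the unitarity of $U^{(k+2)}$ --- that supplies both the fixed points and the identity, exactly as in Corollary \ref{cor:simple1}. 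Your final step ($u^\dagger u=\Id$ implies $d^2\le r\ell$) is fine, but the isometry itself is not established by the route you propose.
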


\begin{proof}
Using (\ref{X1X2b}) we have
\be
 \label{uUnitary}
 {\includegraphics[height=6.0em]{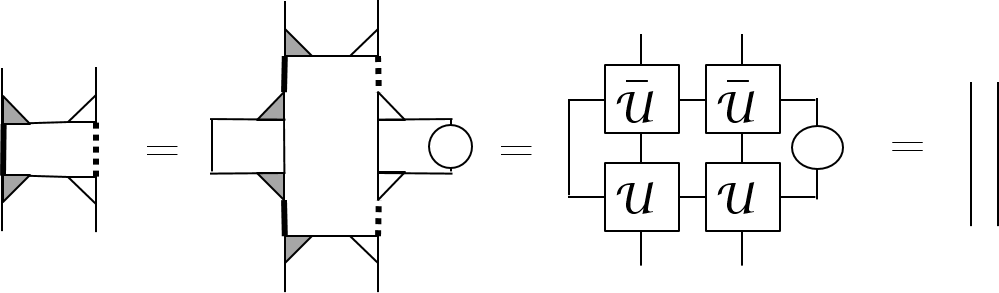}}
\ee
so that $u^\dagger u=\Id$. For the last equality we have used that choosing $k\le D^4$ such that ${\cal U}_k$ is simple (see Proposition \ref{blockingsimple}), the lhs is equal to $1/d^k{\rm tr}_{1,\ldots,k}[U^{(k+2)\dagger} U^{(k+2)}]=\Id$.
\end{proof}

Note that this lemma implies that $u$ is an isometry. In order for it to
be unitary, it is required that $r\ell=d^2$. Now, we are in the position
of stating the main result of this section:

\begin{thm}
\label{ThmFund1}
The following are equivalent for a tensor ${\cal U}$ generating MPU  with the above definitions.
\begin{enumerate}
\item ${\cal U}$ is simple.
\item $r \ell =d^2$.
\item $u$ is unitary.
\item $v$ is unitary.
\end{enumerate}
\end{thm}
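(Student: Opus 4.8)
The plan is to prove the theorem by establishing a cycle of implications, $(1)\Rightarrow(2)\Rightarrow(3)\Rightarrow(4)\Rightarrow(1)$, or alternatively to show that all four conditions are equivalent to the single statement $r\ell=d^2$ together with the isometry property of $u$ already secured in Lemma~\ref{lemuisometry}. The backbone of the argument is that Lemma~\ref{lemuisometry} gives $u^\dagger u=\Id$ unconditionally, so $u$ is always an isometry from a $d^2$-dimensional space into an $r\ell$-dimensional one; the content of the theorem is that simplicity is exactly the obstruction to this isometry being a genuine unitary, which happens precisely when the dimension count $r\ell=d^2$ is saturated.

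First I would prove $(2)\Leftrightarrow(3)$ and $(2)\Leftrightarrow(4)$, which are the dimension-counting steps. Since $u^\dagger u=\Id$ with $u\colon\mathbb{C}^{d^2}\to\mathbb{C}^{r\ell}$, the map $u$ is unitary if and only if its domain and target have equal dimension, i.e.\ $r\ell=d^2$; this gives $(2)\Leftrightarrow(3)$ immediately. For $v$, I expect a symmetric statement: by the construction in~(\ref{uuvv}), $v^\dagger$ likewise maps $\mathbb{C}^{d^2}$ into $\mathbb{C}^{r\ell}$, and an analogous graphical contraction using~(\ref{X1X2b}) and the simple-tensor relation from Corollary~\ref{cor:simple1} should yield $v v^\dagger=\Id$ (or $v^\dagger v=\Id$), so that $v$ is an isometry as well and hence unitary exactly when $r\ell=d^2$, giving $(2)\Leftrightarrow(4)$.

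The substantive implications are those involving simplicity. For $(1)\Rightarrow(2)$, I would take a simple tensor and use the defining relations~(\ref{simple1}) and~(\ref{simple2}) together with the two singular value decompositions~(\ref{II_SVD}) to show that contracting $u$ with $v$ (in the appropriate way) reconstructs the full tensor ${\cal U}$ faithfully, forcing $u$ to be onto and hence $r\ell=d^2$. Concretely, simplicity should let me write ${\cal U}$ as a contraction of the triangle tensors in~(\ref{XY}) in such a way that $u$ and $v$ together tile the physical-index space $\mathbb{C}^{d^2}$ completely. For the closing implication $(3)\Rightarrow(1)$ (or $(4)\Rightarrow(1)$), I would argue that once $u$ is unitary, the relation~(\ref{X1X2b}) can be inverted to express the tensors $X_i,Y_i$ as mutually inverse, and then substituting the decomposition~(\ref{SVDforms2}) of ${\cal U}$ back into the two sides of the simplicity conditions~(\ref{simple}) and collapsing the isometries via $u^\dagger u=uu^\dagger=\Id$ should reproduce exactly the pattern~(\ref{simple1})--(\ref{simple2}), with $a=\Phi$ and $b=\rho$ as furnished by Corollary~\ref{cor:simple1}.

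The main obstacle I anticipate is the $(3)\Rightarrow(1)$ direction: going from the abstract unitarity of $u$ back to the \emph{local} tensor identities defining simplicity requires carefully tracking which indices are contracted through the invertible gauge tensors $Y_i,Z_i$ of~(\ref{Z1Z2})--(\ref{YZ=1}), and showing that the support of ${\cal U}$ as reconstructed from $u$ and $v$ factorizes correctly across a bond. Since~(\ref{simple1}) already holds for \emph{any} MPU tensor by Corollary~\ref{cor:simple1}, the real work is confined to deriving~(\ref{simple2}); the point of leverage is that the unitarity of $u$ promotes the one-sided isometry relation~(\ref{X1X2b}) to a two-sided one, which is precisely what is needed to slide the fixed points $\Phi,\rho$ through the tensor and obtain the second simplicity relation.
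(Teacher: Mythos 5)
Your skeleton ($1\to2\to3$, plus links to $4$) matches the paper's cycle $1\to2\to3\to4\to1$, and your treatment of $2\Leftrightarrow 3$ via Lemma~\ref{lemuisometry} is exactly right. However, two of your steps have genuine gaps. First, the claim that $v$ is \emph{unconditionally} a (co)isometry, derived from Corollary~\ref{cor:simple1}, cannot be correct: Corollary~\ref{cor:simple1} only provides Eq.~(\ref{simple1}), and the isometry property $v^\dagger v=\Id$ on the full $r\ell$-dimensional space would force $r\ell\le d^2$, i.e.\ it would make \emph{every} MPU tensor simple --- contradicting the need for blocking in Proposition~\ref{blockingsimple}. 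What does hold unconditionally is only $v\,uu^\dagger v^\dagger=\Id$ (from unitarity of $U^{(2)}$), which says nothing about $v$ off the range of $u$. So your route to $2\Leftrightarrow4$ breaks; the paper instead gets $3\to4$ for free by writing $U^{(2)}$ as the product of $u$ with a translated $v$, so that $v$ is a product of unitaries.

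Second, and more importantly, your mechanism for the hard direction $1\to 2$ does not work as stated: the fact that ${\cal U}$ is reconstructed by contracting $u$ with $v$ holds for \emph{any} tensor (it is just the decomposition (\ref{SVDforms2})) and places no constraint on the rank of $u$, so it cannot ``force $u$ to be onto''. The missing idea is the one place where the invertibility of the $Y_i$ is essential: simplicity --- specifically Eq.~(\ref{simple2}) --- yields $(Y_1\otimes Y_2)^\dagger v^\dagger v\,(Y_1\otimes Y_2)=(Y_1\otimes Y_2)^\dagger(Y_1\otimes Y_2)$, and conjugating with the right-inverses $Z_i$ of Eq.~(\ref{YZ=1}) strips off the $Y$'s to give $v^\dagger v=\Id$ on all of $\mathbb{C}^{r\ell}$, whence $r\ell\le d^2$; combined with Lemma~\ref{lemuisometry} this gives $r\ell=d^2$. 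You invoke $Y_iZ_i=\Id$ only in the closing implication, where it is not actually needed (there the paper simply reads the chain (\ref{vUnitary}) backwards from $v^\dagger v=\Id$ and inserts (\ref{X1X2b}) to land on (\ref{simple2})); it is in $1\to2$ that it does the real work.
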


\begin{proof}

\noindent
$1\to 2$: Let us assume that ${\cal U}$ is simple. Using (\ref{simple}) we have
\be
 \label{vUnitary}
 {\includegraphics[height=6.2em]{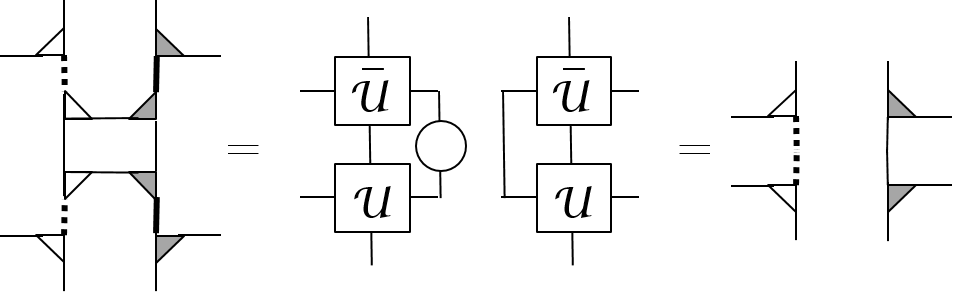}}
\ee
where in the last step we have used (\ref{X1X2b}). Employing (\ref{XY}) we have
\be
(Y_1\otimes Y_2)^\dagger v^\dagger v(Y_1\otimes Y_2)= (Y_1\otimes Y_2)^\dagger (Y_1\otimes Y_2)
\ee
Conjugating both sides of this equation with $(Z_1\otimes Z_2)^\dagger$
and $(Z_1\otimes Z_2)$ and using (\ref{YZ=1}) we obtain that $v^\dagger
v=\Id$. This implies that $r\ell \le d^2$, which together with Lemma
\ref{lemuisometry} implies $r\ell=d^2$.

\noindent
$2\to 3$: If we assume now $r\ell = d^2$, through Lemma \ref{lemuisometry} we immediately get that $u$ is unitary (note that $u^\dagger u=\Id$).

\noindent
$3\to 4$: If $u$ is unitary, then since $U^{(2)}$, which is also unitary,
is just the product of $u$ with $v$ translated by one site, we deduce that $v$
is also unitary.

\noindent
$4\to 1$: Finally, if $v$ is unitary, by Corollary \ref{cor:simple1} it is enough to show (\ref{simple2}),  which is just the first equality in (\ref{vUnitary}). The fact that $v$ is unitary provides us with the equality between the LHS and the RHS of (\ref{vUnitary}). By inserting (\ref{X1X2b}) in the RHS [as done in (\ref{UisUnitary})] we obtain the desired equality with the middle term in (\ref{vUnitary}).
\end{proof}

Equipped with this theorem, we have a practical way of computing $k_0$,
namely the smallest $k$ such that ${\cal U}_{k}$ is simple. We just have
to block and compute $r_k$ and $\ell_k$ for $k=1,2,\ldots$ until we obtain
$r_{k_0}\ell_{k_0} =d_{k_0}^2$; as we have shown in the previous subsection,
$k_0\le D^4$.  Additionally, it also gives us a practical way to check if a
tensor ${\cal U}$ generates MPUs: after blocking and reaching
$r_{k_0}\ell_{k_0}=d_{k_0}^2$, we just have to check if the resulting
tensor is simple.  This theorem also shows that by blocking a simple
tensor ${\cal
U}$ two times we have 
\be
 \label{StandardForm}
 {\includegraphics[height=4.4em]{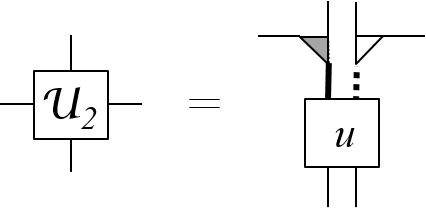}}
\ee
where $u$ is a unitary and $v$, as defined in (\ref{vdagger}), too. This will define a standard form for all MPU (after blocking 2$k_0$ times).

\begin{defn}
\label{SF}
Given a simple tensor, ${\cal U}$, we define the {\em standard form} (SF) of ${\cal U}_2$ as (\ref{StandardForm}).
\end{defn}

Now, we derive the fundamental theorem of MPU, namely the relation between two tensors generating MPU in SF. For that, let us consider two simple tensors, ${\cal U}$ and ${\cal V}$, generating the same MPU, and with standard forms
 \be
 \label{Twotensorsstandard}
 {\includegraphics[height=4.4em]{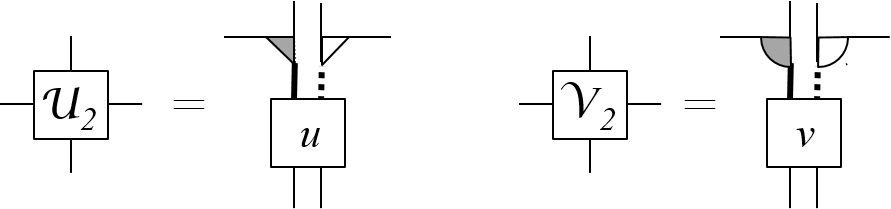}}
\ee

Using the fact that $X_1$ is invertible, it is straightforward to prove
the following result, which we call the fundamental theorem of MPU:

\begin{thm}[Fundamental Theorem of MPU]
\label{FundamentalMPU}
Given two simple tensors, ${\cal U}$ and ${\cal V}$ with standard forms
(\ref{Twotensorsstandard}), they generate the same MPU, i.e.\
$U^{(N)}=V^{(N)}$ for all $N$, iff there exist
unitaries $x,y$ and $z$, such that \begin{subequations}
 \bea
 \label{SFuu}
 &{\includegraphics[height=5.6em]{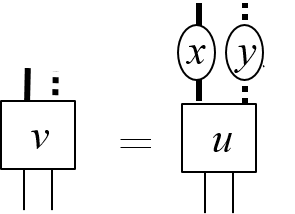}}&,\\
 \label{SFvv}
 &{\includegraphics[height=4.6em]{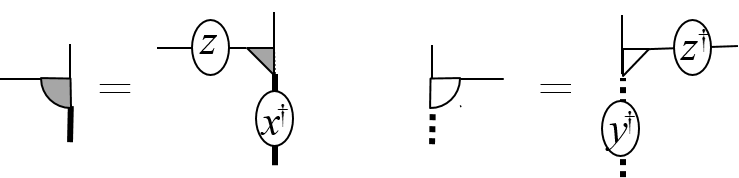}}&
 \eea
\end{subequations}
\end{thm}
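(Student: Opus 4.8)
The plan is to handle the two implications separately, starting with the easy ``if'' direction and then extracting the gauge unitaries for the ``only if'' direction.

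For sufficiency, I would substitute (\ref{SFuu}) and (\ref{SFvv}) directly into the standard form of $V^{(N)}$. Since $x,y,z$ act only on the auxiliary wires of dimension $r$ and $\ell$ that connect neighbouring cells, each insertion contributes factors $x^\dagger x$, $y^\dagger y$, $z^\dagger z$ on the bonds between adjacent tensors. Invoking translational invariance and the periodic (trace) boundary condition, these factors telescope around the ring and cancel against their inverses, so that $V^{(N)}$ reduces to $U^{(N)}$ for every $N$. This is a routine graphical manipulation with no real obstacle.

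For necessity, the starting observation is that $U^{(N)}=V^{(N)}$ for all $N$, and in particular $U^{(2)}=V^{(2)}$, while the standard form (\ref{StandardForm}) expresses this common two-site unitary through the unitaries $u,v$ of $\mathcal U$ and $u',v'$ of $\mathcal V$ (all unitary by Theorem \ref{ThmFund1}). Rather than invoke the Fundamental Theorem for normal MPS tensors, I would read the gauge off the two standard forms (\ref{Twotensorsstandard}) directly, exploiting that $X_1$ is invertible and that the $Y_i$ are invertible with inverses $Z_i$ by (\ref{YZ=1}). Equating the two standard forms and cancelling the invertible blocks $X_1,Y_i$ isolates, on each auxiliary wire, the overlap of the two decompositions of the same unitary; this overlap defines candidate gauge matrices, which on the $r$- and $\ell$-wires yield $x$ and $y$, with $z$ absorbing the residual freedom internal to the two-site cell. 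A preliminary point to settle is that the two representations share the wire dimensions, $r=r'$ and $\ell=\ell'$, which holds because these ranks are operator-Schmidt data of the common MPU and hence intrinsic to it.

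The main obstacle I anticipate is twofold. First, one must verify that the extracted $x,y,z$ are genuinely unitary and not merely invertible; this should follow from the unitarity of $u,v,u',v'$ together with the CFII normalisation (\ref{Y1Y1X1X1}) and (\ref{Erightleft}), but it requires tracking carefully which isometry relation is used at each wire. Second, one must check consistency: a single triple $(x,y,z)$ must satisfy both (\ref{SFuu}) and (\ref{SFvv}) simultaneously rather than forcing different gauges for the $u$- and $v$-relations. I expect this to work out because both relations descend from matching the very same pair of standard forms, so the identical wire-local pieces of the gauge appear in each; making this coherence explicit is the crux of the argument.
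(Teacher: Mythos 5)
Your ``if'' direction is fine: the gauge unitaries inserted on the auxiliary wires telescope around the ring and cancel. The gap is in the ``only if'' direction, where you have discarded the one tool that actually does the work. The standard forms (\ref{Twotensorsstandard}) are decompositions of two \emph{tensors}, whereas the hypothesis $U^{(N)}=V^{(N)}$ is an equality of the \emph{generated operators}; you cannot ``equate the two standard forms and cancel the invertible blocks'' until you have first shown that the tensors themselves are related by a gauge on the bond index. That step is precisely what the paper takes from the fundamental theorem for normal MPV tensors in CFII \cite{cirac:mpdo-rgfp}: $\mathcal U_2$ and $\mathcal V_2$ are both normal and in CFII and generate the same MPV, hence are related by a bond gauge $z$ which is unitary \emph{because} of CFII (this is the content of the remark following the theorem), and only then does the invertibility of $X_1$ and of the $Y_i$ allow one to push $z$ through the decomposition and extract $x$ and $y$. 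By declining to invoke that theorem you have removed the source of the gauge without supplying a replacement.

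If one instead reads your argument as comparing the two circuit decompositions of the common unitary $U^{(2)}=V^{(2)}$, it still fails: on two blocked sites the resulting relation $v'^\dagger v = u'u^\dagger$ equates two operators on the \emph{same} pair of auxiliary wires and does not force the product form $x\otimes y$. Concretely, for any non-product unitary $W$ on $\mathbb C^{\ell}\otimes\mathbb C^{r}$ one may set $u'=Wu$ and $v'=v\,(\mathbb S W^\dagger\mathbb S)$ (with $\mathbb S$ the swap of the two wire factors); this preserves $U^{(2)}$ exactly but changes $U^{(N)}$ for $N\ge 3$ unless $W$ factorizes. So the factorization into $x\otimes y$ must come either from the bond gauge $z$ (the paper's route) or from using $N\ge 3$, where the two brickwork layers pair the wires in complementary ways and a support-algebra argument forces $u'u^\dagger$ to split as a tensor product. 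Your proposal identifies neither mechanism, and the ``consistency'' issue you flag at the end is a symptom of this: once the bond gauge $z$ is in hand, a single triple $(x,y,z)$ satisfying both (\ref{SFuu}) and (\ref{SFvv}) comes for free, whereas without it there is nothing to make the two relations cohere.
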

Note that the unitarity of $z$ follows from the fact that both tensors are
in CFII \cite{cirac:mpdo-rgfp}.

Using this standard form, we can also identify the tensors $X$ and $Y$ corresponding to its singular value decompositions
 \be
 \label{StandarForm3}
 {\includegraphics[height=6.0em]{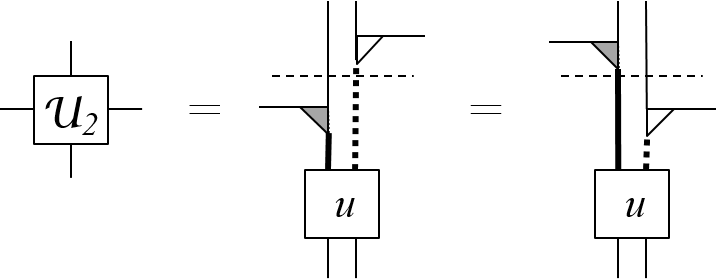}}
\ee
where for the sake of clarity, we have separated with a thin line the two pieces of the singular value decomposition. 

We finish this section making the announced identification of MPU's and QCA. We refer to the Appendix for the formal definition of QCA and the proof of the following

\begin{cor}
\label{QCA}
Any MPU (with finite bond dimension) is a 1D QCA and viceversa.
\end{cor}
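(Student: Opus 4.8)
The plan is to prove the two implications separately, in both cases using the standard form (\ref{StandardForm}) as the bridge between the Matrix Product description and the local-circuit picture that underlies the notion of a QCA. The point of phrasing the statement as a corollary is that all the analytic work has already been done in the structure theory; what remains is to read off the causal structure from the standard form.

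For the direction MPU $\Rightarrow$ QCA, I would start from an arbitrary tensor ${\cal U}$ generating MPU, invoke Proposition~\ref{blockingsimple} to block $k_0\le D^4$ sites and reach a simple tensor, and then use Theorem~\ref{ThmFund1} together with (\ref{StandardForm}) to write the twice-blocked tensor ${\cal U}_2$ explicitly in terms of the two unitaries $u$ and $v$. The key observation is that, read across the whole chain, the standard form realizes $U^{(N)}$ as a depth-two circuit of local unitaries: one layer of $v$'s and one staggered layer of $u$'s, each gate acting on a bounded number of (blocked) sites, with the bonds of dimensions $r$ and $\ell$ encoding precisely the factors shuffled between neighboring sites. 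A finite-depth circuit of local gates has a strict, bounded causal cone, so I would conclude that $U^{(N)\dagger}\,O\,U^{(N)}$ is supported on a neighborhood of fixed radius (at most of order $k_0$ original sites) for any single-site operator $O$, which is exactly the defining property of a QCA. Equivalently, one can bypass the circuit picture and argue directly from the simple condition (\ref{simple}): conjugating a local operator by $U^{(N)}$ and using (\ref{simple1})--(\ref{simple2}) to collapse the transfer operators outside a bounded window shows that the support cannot grow beyond that window.

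For the converse QCA $\Rightarrow$ MPU, I would rely on the structure theorem for one-dimensional QCAs in the form used in Ref.~\cite{gross:index-theorem}: after blocking, every 1D QCA is a partitioned unitary, i.e.\ it equals a product of two interleaved layers of non-overlapping local unitaries. Given such a decomposition, I would build an MPO by absorbing each local gate into a tensor whose bond indices carry exactly the information crossing the partition boundaries; because each gate has finite support, the bond dimension is finite and bounded by the dimension of the overlap regions. The resulting MPO is manifestly unitary, being assembled from unitaries in the two-layer pattern, so it is an MPU with finite bond dimension. (Conceptually, this is nothing but reconstructing a standard form of the type (\ref{StandardForm}) from the partitioned-unitary data.)

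The main obstacle will be the interface between the formal QCA definition (strict causal cone, locality-preserving $\ast$-automorphism) deferred to the Appendix and the concrete standard-form representation. In the forward direction this amounts to checking that the bounded-support statement extracted from (\ref{StandardForm}) coincides with the QCA causal-cone condition, including the translationally-invariant ring geometry and the periodic boundary identification $\alpha_{N+1}=\alpha_1$; in the backward direction it amounts to invoking (or re-deriving) the partitioned-unitary normal form and carefully controlling the bond dimension of the induced MPO. Everything else is a direct unfolding of the standard form already established.
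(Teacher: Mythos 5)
Your proposal is correct and follows essentially the same route as the paper: the forward direction rests on the standard form (obtained after blocking $k_0\le D^4$ sites) realizing $U^{(N)}$ as a depth-two circuit of the unitaries $u$ and $v$, which yields the strict causal cone needed to define the quasi-local automorphism, and the converse invokes the Schumacher--Werner structure theorem that every 1D QCA is, after blocking, a depth-two circuit of local unitaries, which is then read off as a standard form with finite bond dimension. The only cosmetic difference is that the paper phrases the forward direction by explicitly constructing the automorphism $\omega_{\mathcal U}$ on the quasi-local algebra and checking the neighborhood condition with radius $4D^4$, which is the interface step you correctly flagged as the remaining bookkeeping.
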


\section{Index}

Corollary \ref{QCA} allows us to apply all known results on 1D QCA to
MPU's. In particular, the index theorem~\cite{gross:index-theorem}, which associates a
rational number to each QCA with the property that two automata have the
same index iff there is a continuous path of QCA connecting them. That is,
the index characterizes the  different possible {\it phases} of QCA.

In this section, we will show how the index theorem can be easily obtained
in the framework of MPU, by using the results obtained in the previous
section. Our approach has the additional benefit of providing a definition
of the index which is very easy to compute. In the Appendix we will show
how our definition coincides with the one given
in Ref.~\cite{gross:index-theorem} for QCA.

Let start defining the index of a tensor ${\cal U}$ in CF generating MPUs:
\begin{defn}
\label{def:index}
Take any $k$  so that the blocked tensor ${\cal U}_{k}$ is simple. Define $r$ and $\ell$ for ${\cal U}_{k}$ as in Definition \ref{defnrl}.
The index of ${\cal U}$ is defined as ${\rm ind}=\frac{1}{2}(log_2(r)-log_2(\ell))$.
\end{defn}

As compared to \cite{gross:index-theorem}, we have included the logarithm in this definition since then the index has a more information-theoretical interpretation: It is the net flow of information to the right. Note that since $r\ell=d^2$, ${\rm ind}=\log_2(r/d)=-\log_2(\ell/d)$.

The first thing to notice is that

\begin{prop}
\label{index-well-defined}
The index is well defined, that is, it does not depend on $k$.
\end{prop}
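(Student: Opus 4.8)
The plan is to exploit two facts established above: that simplicity persists under further blocking (so every $\mc{U}_k$ with $k\ge k_0$ is simple and, by Theorem~\ref{ThmFund1}, satisfies $r_k\ell_k=d^{2k}$, where $d^k$ is the local dimension of $\mc{U}_k$), and that blocking is associative, $(\mc{U}_k)_{k'}=\mc{U}_{kk'}=(\mc{U}_{k'})_k$. Writing the index of $\mc{U}_k$ as $\tfrac12(\log_2 r_k-\log_2\ell_k)$, my goal is to show that $r_k/d^{k}$ (equivalently $r_k/\ell_k$) does not depend on $k$. I would first isolate the core combinatorial statement and then bootstrap it to all $k$ by a divisibility argument, so that I never have to append a non-simple single tensor.

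The core statement I would prove is a rank identity: \emph{for any simple tensor $\mc{W}$ of local dimension $\tilde d$ and any $m\ge1$,} $r(\mc{W}_m)=\tilde d^{\,m-1}\,r(\mc{W})$ and $\ell(\mc{W}_m)=\tilde d^{\,m-1}\,\ell(\mc{W})$. This is consistent with $r(\mc{W}_m)\ell(\mc{W}_m)=\tilde d^{\,2m}$, and it immediately gives $\mathrm{ind}(\mc{W}_m)=\tfrac12(\log_2 r(\mc{W})-\log_2\ell(\mc{W}))=\mathrm{ind}(\mc{W})$, i.e.\ blocking a simple tensor leaves the index unchanged. To prove it I would read the singular value decomposition of the blocked tensor directly off its standard form. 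Since $\mc{W}$ is simple, $\mc{W}_2$ has the standard form of Definition~\ref{SF}, built from the unitaries $u$ and $v$, and its SVD is exhibited explicitly in (\ref{StandarForm3}); the bond dimension appearing there on the left-down/right-up cut is precisely $\tilde d\,r(\mc{W})$, because the cut traverses the $r$-channel of a single $u$ (dimension $r$) together with one full physical line (dimension $\tilde d$), and likewise $\tilde d\,\ell(\mc{W})$ on the complementary cut. Iterating this one-block append (each time attaching one more copy of $\mc{W}$ to the already-simple chain $\mc{W}_m$) and using the unitarity of $u,v$ to certify that the SVD is of full rank at each step would yield the stated identity for all $m$ by induction.

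With the rank identity in hand, well-definedness follows cleanly. Given $k,k'\ge k_0$, both $\mc{U}_k$ and $\mc{U}_{k'}$ are simple, and so is $\mc{U}_{kk'}$. Applying the identity to the simple tensor $\mc{W}=\mc{U}_k$ blocked $k'$ times gives $\mathrm{ind}(\mc{U}_{kk'})=\mathrm{ind}((\mc{U}_k)_{k'})=\mathrm{ind}(\mc{U}_k)$, and symmetrically $\mathrm{ind}(\mc{U}_{kk'})=\mathrm{ind}(\mc{U}_{k'})$; hence $\mathrm{ind}(\mc{U}_k)=\mathrm{ind}(\mc{U}_{k'})$, which is exactly the claim. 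The passage through $\mc{U}_{kk'}$ is what lets me upgrade ``index unchanged under blocking'' to equality for \emph{all} admissible $k$, rather than only for powers or multiples of a single value.

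The main obstacle is the rank identity of the second paragraph, specifically proving that each appended simple block multiplies each of the two matricization ranks by exactly $\tilde d$ and no more. The upper bound $r_{m+1}\le\tilde d\,r_m$ follows from the sub-multiplicativity of rank under tensor contraction across the bond; the delicate part is the matching lower bound, i.e.\ that no rank is lost. This is where the unitarity of both $u$ and $v$ (Theorem~\ref{ThmFund1}), together with the isometry relations (\ref{Y1Y1X1X1}) and the strict causal-cone structure of the standard form, must be used to show that the added physical line is faithfully transmitted into the cut. I would isolate this as a short lemma about the SVD of the standard form and cross-check it against the shift MPU, where the relevant matricization is a permutation matrix of rank $\tilde d^{\,m-1}r$.
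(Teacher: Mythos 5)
Your overall strategy is the right one and is essentially the paper's: bound each matricization rank from above by sub-multiplicativity when a block is appended, and use the constraint $r_k\ell_k=d_k^2$ that Theorem~\ref{ThmFund1} imposes on simple tensors. However, as written your proof has a gap exactly where you flag the ``delicate part'': you defer the matching lower bound (that appending a block multiplies the rank by \emph{exactly} $\tilde d$, not less) to an unproven lemma about the SVD of the standard form, the isometry relations~(\ref{Y1Y1X1X1}), and the causal cone. That route could perhaps be made to work, but it is unnecessary, and without it your argument is incomplete. The missing idea is that the lower bound is free from facts you have already written down: you state both upper bounds $r(\mc W_m)\le \tilde d^{\,m-1}r(\mc W)$ and $\ell(\mc W_m)\le \tilde d^{\,m-1}\ell(\mc W)$, and you state the product constraints $r(\mc W_m)\ell(\mc W_m)=\tilde d^{\,2m}$ and $r(\mc W)\ell(\mc W)=\tilde d^{\,2}$. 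Multiplying the two upper bounds gives $r(\mc W_m)\ell(\mc W_m)\le \tilde d^{\,2m-2}r(\mc W)\ell(\mc W)=\tilde d^{\,2m}$, which is already an equality; hence neither inequality can be strict and both upper bounds are saturated. This one line closes your ``short lemma,'' and it is precisely the mechanism of the paper's proof.

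A secondary remark: your detour through $\mc U_{kk'}$ via associativity of blocking is correct but not needed. The paper compares an arbitrary admissible $k$ directly with the minimal $k_0$ by appending the possibly non-simple tensor $\mc U_{k-k_0}$; the upper bound $r_k\le d^{\,k-k_0}r_{k_0}$ only uses that the appended piece has physical dimension $d^{\,k-k_0}$, not that it is simple, so there is no obstruction to appending a non-simple block. Your bootstrapping is harmless, but once the saturation argument above is in place, the direct comparison is shorter and covers all $k\ge k_0$ at once.
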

\begin{proof}
We denote by $k_0$ the smallest $k$ so that ${\cal U}_k$ is simple and take $k>k_0$. Clearly, ${\cal U}_{k}$ is the result of  blocking ${\cal U}_{k_0}$ with ${\cal U}_{k-k_0}$.
In order to determine the index for such particular $k$, we combine indices to build a matrix and compute the rank as explained in the previous section. Graphically, for the first decomposition we have
\be
 \label{}
 {\includegraphics[height=8.5em]{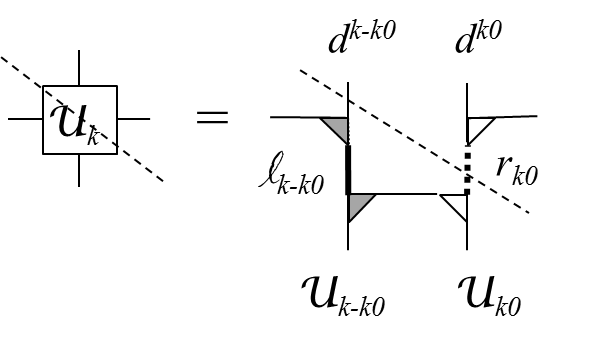}}
\ee
and similarly for the other one. According to this figure, the right rank, $r_k$ is upper bounded by $d^{k-k_0}r_{k_0}$. If we build the matrix the other way round, $\ell_k$ is upper bounded by $d^{k-k_0}\ell_{k_0}$. Applying Theorem \ref{ThmFund1}, since ${\cal U}_k$ is simple, we have that $r_k\ell_k=d^k$, and thus $r_k=d^{k-k_0}r_{k_0}$ and $\ell_k=d^{k-k_0}\ell_{k_0}$, so that the index is $\frac{1}{2}(\log(r_k)-\log(l_k))=\frac{1}{2}(\log(r_{k_0})-\log(\ell_{k_0}))$.
\end{proof}

Before stating and proving the Index Theorem for MPU, we need the following definition of equivalent tensors. Let us consider two tensors, ${\cal U}$ and ${\cal V}$, of physical dimensions $d_{a,b}$, respectively, generating MPU's, and let us denote by $p_{a,b}$ two coprimes such that $p_a d_a=p_b d_b$. We also denote by $\Id_x$ the identity operator acting on a Hilbert space of dimension $x$, and by ${\cal U}^{(x)}={\cal U}\otimes \Id_x$, i.e. the tensor generating the MPV $U^{(N)}\otimes \Id_x^{\otimes N}$.

\begin{defn}
\label{def:strictly-equivalent-tensors}
Two tensors ${\cal U}$ and ${\cal V}$ in CF are {\em strictly equivalent} if
$d_a=d_b$ and there exist a continuous path ${\cal W}(p)$ of tensors, not
necessarily in CF, with $p\in[0,1]$ such that ${\cal W}(0)={\cal U}$ and
${\cal W}(1)={\cal V}$.
\end{defn}

\begin{defn}
\label{def:equivalent-tensors}
Tensors ${\cal U}$ and ${\cal V}$ are {\em equivalent} if there exists
some $k\in \mathds{N}$ and $p_a$, $p_b$ such that ${\cal U}^{(p_a)}_{k}$
and ${\cal V}^{(p_b)}_{k}$ are strictly equivalent.
\end{defn}
This last definition means that by attaching an ancilla of dimension
$p_{a,b}$ and blocking, respectively, they can be locally transformed into
each other.

Rather than saying that two MPUs $\mathcal U$ and $\mathcal V$ are
equivalent or strictly equivalent, we will sometimes also say that they
are in the same phase (or class) with regard to equivalence or strict
equivalence; we will omit the notion of equivalence and just say that
$\mathcal U$ and $\mathcal V$ are in the same phase when it is clear from
the context.

It is important to make clear that we do not impose that the tensors in
the path are in CF to allow the most general interpolation path, along
which e.g.\ the bond dimension in CF might change, such as for the
examples in Eq.~(\ref{SFu1u3}).  Since, however, the index is defined
using the corresponding CF tensor, we will need the following result.

\begin{prop}\label{prop:continuity-index}
Given a continuous map of tensors $[0,1]\ni x\mapsto {\cal W}(x)$, not
necessarily in CF, generating MPU, consider the CF and the ranks
$r(x)$, $\ell(x)$ of the associated simple tensors. Then both $r(x)$,
$\ell(x)$ are constant.  
\end{prop}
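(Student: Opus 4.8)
The plan is to reduce the statement to a purely topological fact about ranks. Since $r(x)$ and $\ell(x)$ are integer valued and $[0,1]$ is connected, it suffices to prove that they are \emph{locally constant}. The key is that the two quantities are rigidly tied together: I would first block a \emph{fixed} number of times $k$, chosen once and for all large enough that $\mathcal{W}_k(x)$ is simple for \emph{every} $x$. This is possible because the path tensors live in a fixed tensor space, so their bond dimension $D$ is bounded, and the estimate $k\le D^4$ of Proposition~\ref{blockingsimple} is therefore uniform (and by the corollary following Corollary~\ref{cor:simple1}, once simple at $k$ the tensor stays simple for all larger blockings). With this uniform $k$, Theorem~\ref{ThmFund1} gives $r(x)\,\ell(x)=d^2$ with $d$ the fixed physical dimension of the blocked sites, a constant \emph{independent of} $x$. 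Thus it is enough to control the two ranks from one side only and let the constant product do the rest.

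The mechanism I would invoke is lower semicontinuity of the rank: for any family of matrices depending continuously on a parameter, the condition $\mathrm{rank}\ge m$ is an open condition, so rank can only jump up, never drop, as one moves away from a given point. Concretely, if $r(x_0)=r_0$ and $\ell(x_0)=\ell_0$, then in a neighbourhood of $x_0$ one has $r(x)\ge r_0$ and $\ell(x)\ge\ell_0$ \emph{simultaneously}; but the identity $r(x)\,\ell(x)=r_0\ell_0$ forces equality in both inequalities, so $r$ and $\ell$ are constant near $x_0$. This short argument is the heart of the proposition, and it is precisely why one must treat $r$ and $\ell$ jointly rather than separately.

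The genuine obstacle is that this reasoning needs $r(x)$ and $\ell(x)$ to be ranks of matrices that vary \emph{continuously} with $x$, whereas they are defined through the canonical form, and the passage to CF is \emph{not} continuous: blocks may appear or disappear, normalisations may diverge, and even the minimal blocking length can jump. I would circumvent this by re-expressing $r$ and $\ell$ as invariants of the physical operator $U^{(N)}(x)$ itself, which \emph{does} depend continuously on $x$. That $r$ and $\ell$ are operator invariants, independent of the representing tensor, follows from the Fundamental Theorem~\ref{FundamentalMPU}: any two simple tensors generating the same MPU are related by the unitaries of that theorem, and these leave the ranks of $\mathcal{M}_{1,2}$ unchanged; together with Proposition~\ref{index-well-defined} this makes $r$ and $\ell$ functions of $U^{(N)}$ alone. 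The remaining, and principal, task is to exhibit explicit matrices $R(x),L(x)$ assembled by fixed linear-algebraic operations from the entries of $U^{(N)}(x)$ (for $N$ large enough) whose ranks equal exactly $r(x)$ and $\ell(x)$. The standard form of Eq.~(\ref{StandardForm}) supplies the template, displaying $U^{(N)}$ as a depth-two circuit whose left and right information flow across a cut are measured precisely by $\ell$ and $r$; the desired $R(x),L(x)$ are then the matricisations of $U^{(N)}(x)$ across the corresponding directional (causal-cone) cuts. Once these formulas are established, $R(x)$ and $L(x)$ are continuous, their ranks are lower semicontinuous, and the argument of the second paragraph closes the proof.
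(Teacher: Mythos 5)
Your overall architecture is the right one, and your closing step is exactly the paper's: with a uniform blocking length $k\le D^4$ (valid along the whole path since the bond dimension is fixed), simplicity gives $r(x)\ell(x)=d^2$, and lower semicontinuity of the rank together with the constant product forces both $r(x)$ and $\ell(x)$ to be continuous, hence constant. You have also correctly located the real difficulty: the passage to canonical form is not continuous, so $r(x)$ and $\ell(x)$ are a priori ranks of matrices that do not vary continuously with $x$.

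The gap is that you defer precisely this difficulty rather than resolve it. Your proposal reduces everything to exhibiting continuous matrices $R(x),L(x)$ whose ranks are \emph{exactly} $r(x)$ and $\ell(x)$, and then says ``once these formulas are established'' the proof closes --- but establishing them is the principal content of the proposition, and the route you sketch (matricisations of the physical operator $U^{(N)}(x)$ across causal-cone cuts) is neither made precise nor obviously correct: the ranks $r,\ell$ of Definition~\ref{defnrl} are defined by cuts of the \emph{local tensor} that split virtual against physical indices, and it is not immediate that they coincide with any reshaping rank of $U^{(N)}$ (a naive spatial cut fails already for $u=v=\openone$, where the operator Schmidt rank is $1$ while $r=\ell=d$; a staircase cut is more promising but would itself require proof). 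The paper resolves the issue at the tensor level instead: it takes the left and right fixed points $L(x)$, $R(x)$ of the transfer operator of the \emph{non-canonical} tensor $\mathcal W(x)$ --- which are obtained continuously by applying the blocked transfer operator and hence vary continuously with $x$ --- forms $\hat{\mathcal W}(x)=L(x)\mathcal W(x)R(x)$, and then uses Jordan's Lemma together with explicit invertible maps ($X\hat{\mathcal W}ZY=\tilde P\mathcal W\tilde P$) to show that the two cut-ranks of $\hat{\mathcal W}(x)$ equal those of the canonical-form tensor, and finally that conjugation by $(\tilde PL\tilde P)^{\pm1/2}$ to reach CFII does not change them. Without this construction (or a worked-out substitute), your argument does not yet prove the statement.
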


\begin{proof}
Let us consider any tensor ${\cal W}$, not necessarily in CF, generating
MPU. By blocking  $D^4$ sites we can assume that its CF is simple and that
its transfer operator is a rank-one map with positive semidefinite left
and right fixed points which we call $L$ and $R$, respectively.  That is, as
a completely positive map, the transfer operator is $X\mapsto
\mathrm{tr}(LX)R$, where $\mathrm{tr}(LR)=1$. This implies that 
\[
\includegraphics[width=3cm]{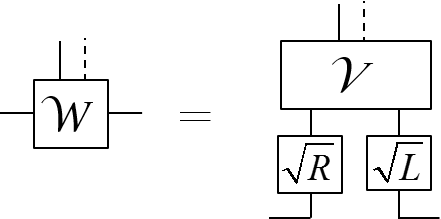}
\]
for an isometry $V$ and hence, if we call $P$ and $Q$ the projectors onto the support of $L$ and $R$ respectively, we have that ${\cal W}P={\cal W}=Q{\cal W}$.

The first thing we will show is how we can use $P$ and $Q$ to obtain the
the CF of $\mathcal W$.  For that we invoke Jordan's Lemma, which ensures
a decomposition of the space $\mathbb{C}^D=(\bigoplus_i \mathbb{C}^2)\oplus
\mathbb{C}^k$ such that in that basis, $P=\bigoplus_i |0\rangle\langle
0|_i\oplus R$,  $P=\bigoplus_i |v_i\rangle\langle v_i|\oplus S$, where $R$
and $S$ are commuting projectors on $\mathbb{C}^k$.
Let us define the projector $\tilde{P}:=\bigoplus_i |0\rangle\langle 0|_i\oplus RS$. 
We have the following properties:
\begin{enumerate}[(i)]
\item $P\tilde{P}=\tilde{P}$.
\item $PQ=\tilde{P}Q$.
\item $QP=Q\tilde{P}$.
\item There exists an invertible $Y$ so that $\tilde{P}Q Y=\tilde{P}$.
\end{enumerate}

We claim that $\tilde{\cal W}:=\tilde{P}{\cal W}\tilde{P}$
is the CF of ${\cal W}$, when restricted to the range of $\tilde P$.
To see this, it is enough to show that both $\tilde{\cal W}$
and ${\cal W}$ define the same MPU for all $N$ and that the left and right
fixed points of the transfer operator of $\tilde{\cal W}$ are full rank.
The latter is obvious since the transfer operator of $\tilde{\cal W}$ is
trivially $X\mapsto {\rm tr}(\tilde{P}L\tilde{P}X)\tilde{P}R\tilde{P}$,
which implies that its left and right fixed points are, respectively,
$\tilde{P}L\tilde{P}$ and $\tilde{P}R\tilde{P}$, which are both full rank
in the range of $\tilde{P}$. The former is obvious from \[
\raisebox{-0.3cm}{\includegraphics[width=6.5cm]{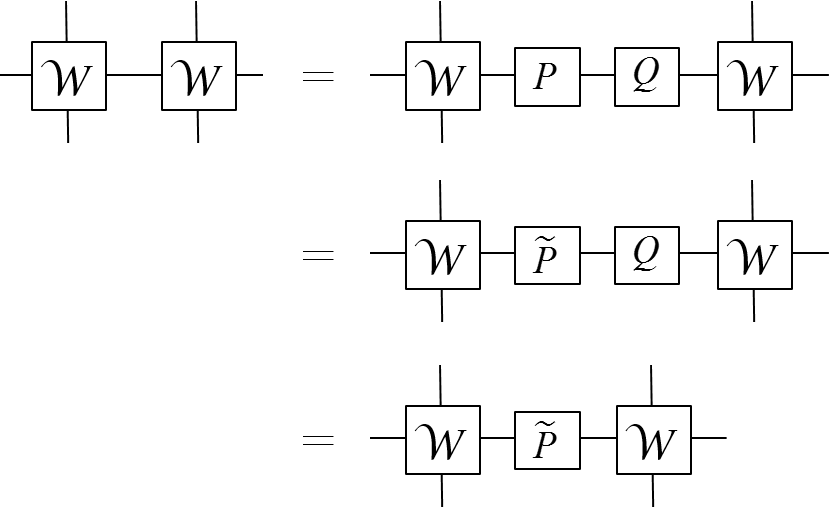}} .  \]

The next step is to define the tensor $\hat {\cal W}=L{\cal W}R$. We will
prove that the rank along the cut 
\[
\includegraphics[width=1.5cm]{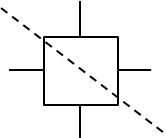}
\]
for $\hat {\cal W}$ and $\tilde{\cal W}$ is exactly the same (and
similarly for the other cut). To this end, we take invertible matrices
$X,Y,Z$ so that $XL=P$, $RZ=Q$ and $PQ Y=\tilde{P}$ (see (ii) and (iv)
above).  Then 
\begin{equation*}
X\hat{\cal W}ZY= P{\cal W}QY=PQ{\cal W}PQY=\tilde{P}{\cal W}\tilde{P}=\tilde{\cal W}\; ,
\end{equation*}
which proves the claim. Moreover, since the CFII of $\tilde{\mathcal W}$
(restricted to the range of $\tilde P$) is constructed as 
$(\tilde PL\tilde P)^{1/2}\tilde W (\tilde PL\tilde P)^{-1/2}$, and
$\tilde PL\tilde P$ is full rank of the range of $\tilde P$, the same
holds for the corresponding ranks in CFII used to define the index, i.e.,
$r(x)$ and $\ell(x)$ are given by the corresponding ranks of
$\hat{\mathcal W}(x)$.

Finally, given the continuous path ${\cal W}(x)$, it is clear that
$\hat{\cal W}(x)$ is also continuous in $[0,1]$ since
\[
\raisebox{-0.3cm}{\includegraphics[width=\columnwidth]{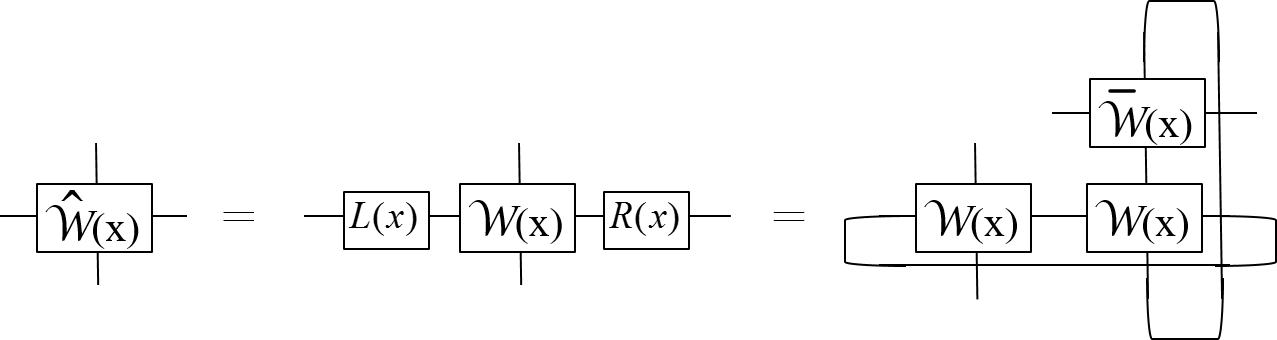}}
.
\]
Since the rank of a continuous path of matrices is lower semicontinuous,
both $r(x)$ and $\ell(x)$ are lower semicontinuous functions on $[0,1]$.
Moreover, $r(x)=\frac{d^2}{\ell(x)}$, which implies that both $r(x)$ and
$\ell(x)$ are also upper semicontinuous and  hence continuous. Since they
take integer values, they must be constant.
\end{proof}

We can now state and prove the main result of this Section.

\begin{thm}[Index Theorem]\quad
\label{IndexTh}
\begin{enumerate}[(i)]
\item The index does not change by blocking.
\item The index is additive by tensoring and composition.
\item The index is robust, i.e. by changing continuously ${\cal U}$ one cannot change it.
\item Two tensors have the same index iff they are equivalent.
\end{enumerate}
\end{thm}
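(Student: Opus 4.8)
The plan is to dispatch parts (i) and (iii) as direct consequences of earlier results, reduce the forward half of (iv) to them together with (ii), and concentrate the real effort on additivity under composition and on the backward half of (iv). For (i), since blocking $m$ sites and then $j$ more is the same as blocking $mj$ sites, computing $\mathrm{ind}(\mathcal{U}_m)$ amounts to blocking $\mathcal{U}$ a total of $mj$ times until a simple tensor is reached; by Proposition~\ref{index-well-defined} the value is independent of the blocking length, so $\mathrm{ind}(\mathcal{U}_m)=\mathrm{ind}(\mathcal{U})$. For (iii), any continuous deformation of $\mathcal{U}$ is a continuous path $\mathcal{W}(x)$ of MPU-generating tensors, and Proposition~\ref{prop:continuity-index} states that the ranks $r(x),\ell(x)$ of the associated simple tensors are then constant; since $\mathrm{ind}=\tfrac12(\log_2 r-\log_2\ell)$, the index is constant.

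For (ii), tensoring is immediate: after blocking to simple tensors, the matrix $\mathcal{M}_1$ of $\mathcal{U}\otimes\mathcal{V}$ factorizes (up to reordering of indices) as $\mathcal{M}_1^{\mathcal{U}}\otimes\mathcal{M}_1^{\mathcal{V}}$, so $r=r_\mathcal{U}r_\mathcal{V}$ and $\ell=\ell_\mathcal{U}\ell_\mathcal{V}$, and $\mathrm{ind}$ adds. Composition is the delicate case. I would block both factors to simple, contract the shared physical leg to form the composite (of bond dimension $D_\mathcal{U}D_\mathcal{V}$), and block further until it too is simple. Using the standard form of Section~\ref{sec:standard-form}, each factor is a single layer built from its unitary $u$, so the composite is a depth-two circuit; the right-moving and left-moving channels of the two layers compose so that at the simple level $r_W/\ell_W=(r_\mathcal{U}/\ell_\mathcal{U})(r_\mathcal{V}/\ell_\mathcal{V})$, whence $\mathrm{ind}_W=\tfrac12\log_2(r_W/\ell_W)$ is the sum of the two indices. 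I avoid claiming a clean product for $r_W$ and $\ell_W$ individually (this already fails for a double shift); it is only the ratio, i.e.\ the net flow, that is multiplicative. Pinning down the blocking length at which the composite becomes simple, and checking that the ranks there realize exactly this ratio, is the bookkeeping I would have to carry out.

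For (iv), the forward implication follows from (i)--(iii): if $\mathcal{U}$ and $\mathcal{V}$ are equivalent then, by Definition~\ref{def:equivalent-tensors}, $\mathcal{U}^{(p_a)}_k$ and $\mathcal{V}^{(p_b)}_k$ are connected by a continuous path of MPU tensors; the ancilla $\Id_{p}$ has index $0$, so (ii) leaves the index unchanged, blocking leaves it unchanged by (i), and the path preserves it by (iii), giving $\mathrm{ind}(\mathcal{U})=\mathrm{ind}(\mathcal{V})$. For the converse I assume equal indices and build the required path. First I block both to simple (index unchanged by (i)). Choosing coprimes $p_a,p_b$ with $p_ad_a=p_bd_b=:D'$ and attaching $\Id_{p_a},\Id_{p_b}$ makes both physical dimensions equal to $D'$; since the index equals $\log_2(r/d)$ and is the same for both, the right ranks $r'$ --- and hence $\ell'=D'^2/r'$ --- must coincide. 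Thus $\mathcal{U}^{(p_a)}$ and $\mathcal{V}^{(p_b)}$ are simple tensors of identical physical dimension and identical ranks $(r',\ell')$. Passing to the standard form represents each by a unitary on $\mathds{C}^{D'^2}$ with the same bipartition $r'\ell'=D'^2$; as $U(D'^2)$ is path-connected, I join the two unitaries by a continuous path and push it through the standard-form construction, obtaining a continuous family of tensors interpolating between (blocked, ancilla-extended) representatives of $\mathcal{U}$ and $\mathcal{V}$, which is exactly the path demanded by Definitions~\ref{def:strictly-equivalent-tensors} and~\ref{def:equivalent-tensors}.

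The main obstacle is this backward construction in (iv). Its validity rests on two points I would need to secure: the arithmetic that, through the combined use of blocking and ancilla attachment governed by the coprime data $p_a,p_b$, simultaneously matches the physical dimensions \emph{and} both ranks of the two tensors; and the fact that feeding an arbitrary unitary through the standard-form construction always yields a tensor that again generates an MPU --- which I would justify via the $4\Rightarrow1$ implication of Theorem~\ref{ThmFund1} together with Proposition~\ref{blockingsimple}(i) --- so that path-connectedness of $U(D'^2)$ genuinely produces an interpolating path of \emph{MPUs} rather than of abstract tensors. The composition half of (ii) is the other place where care is needed, for the reasons noted above.
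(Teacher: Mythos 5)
Your treatment of (i), (iii), the tensoring half of (ii), and both directions of (iv) matches the paper's proof essentially step for step: (i) reduces to Proposition~\ref{index-well-defined}, (iii) to Proposition~\ref{prop:continuity-index}, the forward half of (iv) combines (i)--(iii) with ${\rm ind}(\Id)=0$ for the ancilla, and the backward half matches physical dimensions via the coprime ancillas, deduces equal $r$ and $\ell$ from the equal index, and connects the standard-form unitaries through the path-connected unitary group. Your extra remark that Theorem~\ref{ThmFund1} ($4\Rightarrow 1$) together with Proposition~\ref{blockingsimple}(i) guarantees the interpolating tensors are genuine MPUs is the right justification for a step the paper leaves implicit.

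The one genuine gap is the composition half of (ii), which you explicitly defer as ``bookkeeping''; the missing idea is a squeeze argument, and your worry that the individual ranks do not multiply cleanly is resolved once the blocking factors are tracked. The paper blocks $k$ times so that $\mathcal U_k$, $\mathcal U'_k$ and $\mathcal W_k$ are all simple, writes $\mathcal U_{2k}$ and $\mathcal U'_{2k}$ in standard form, and assembles $\mathcal W_{4k}$ from them. Reading off the cut of Definition~\ref{defnrl} from this two-layer circuit gives the upper bounds $r''\le d_k^2\, r_k r'_k$ and $\ell''\le d_k^2\,\ell_k\ell'_k$ (the factor $d_k^2$ accounts for the physical legs sitting directly on the cut). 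Since $\mathcal W_{4k}$ is simple, Theorem~\ref{ThmFund1} forces $r''\ell''=d_k^8$, which equals the product of the two upper bounds because $r_k\ell_k=r'_k\ell'_k=d_k^2$; hence both bounds are saturated, $r''=d_k^2 r_kr'_k$ and $\ell''=d_k^2\ell_k\ell'_k$, and the index adds. Without this saturation step you cannot conclude even that the ratio $r''/\ell''$ is multiplicative, so this is a substantive step to supply rather than routine verification --- though it is the same mechanism already used in Proposition~\ref{index-well-defined}, so nothing beyond recognizing its applicability is needed.
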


\begin{proof}
(i) has been proven in Proposition \ref{index-well-defined}.
\\[0.5ex]
(ii) The case of tensoring is trivial. To prove it for the concatenation of two MPU, we consider ${\cal U}$ and ${\cal U'}$, with indices ${\rm ind}_{\cal U}$ and ${\rm ind}_{\cal U'}$, respectively. We define
 \be
 {\includegraphics[height=4.8em]{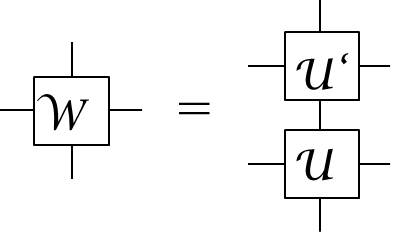}}
\ee
where on the left we have double auxiliary indices, and denote by ind$_{\cal W}$ its index. For that, we block $k$ times until ${\cal U}, {\cal U'}$ and ${\cal W}_k$ are simple. We denote by $r_k,\ell_k$ and $r_k',\ell_k'$ the corresponding ranks of ${\cal U}_k$ and ${\cal U}_k'$. We now express ${\cal U}_{2k}$ and ${\cal U}'_{2k}$ in SF and use them to build ${\cal W}_{4k}$. We consider the matrix by joining indices in order to compute its right and left ranks, $r''$ and $\ell''$,
 \be
 {\includegraphics[height=9.0em]{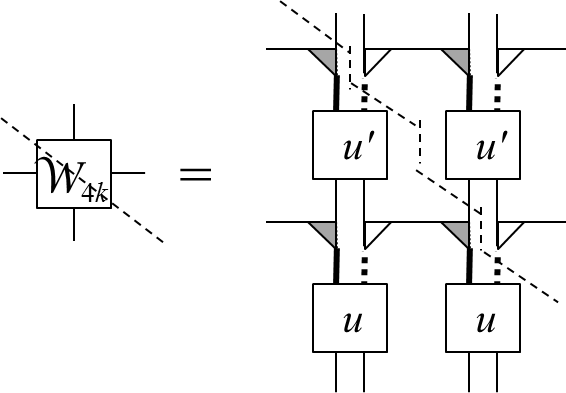}}
\ee
We have $r''\le d_k^2 r_k r_k'$ and, similarly, $\ell''\le d_k^2 \ell_k \ell_k'$. Since ${\cal W}_{4k}$ is simple, $r''\ell''=d_k^8$ we conclude that ${\rm ind}_{\cal W}=\frac{1}{2}(\log(r'')-\log(\ell''))={\rm ind}_{\cal U}+{\rm ind}_{\cal U'}$.
\\[0.5ex]
(iii) has been proven in Proposition \ref{prop:continuity-index}.
\\[0.5ex]
(iv) Let us consider two tensors ${\cal U}$ and ${\cal V}$ and by ${\rm
ind}_{\cal U}$ and ${\rm ind}_{\cal V}$ their indices.

(only if): Let us assume that ${\rm ind}_{\cal U}\ne {\rm ind}_{\cal V}$.
Using (i) and (ii), and the fact that the identity operator has index equal
to zero we have that the indices of ${\cal U}^{(p_a)}_k $  and ${\cal
V}^{(p_b)}_k$ coincide with those of ${\cal U}$ and ${\cal V}$, and thus
are different for any $k$. Thus, according to (iii), they cannot be
continuously deformed into each other, so that they are not equivalent.

(if): if ${\rm ind}_{\cal U}= {\rm ind}_{\cal V}$, then we can find some
$k$ such that both ${\cal U}^{(p_a)}_k$ and ${\cal V}^{(p_b)}_k$ are
simple and have the same physical dimension. Then, the corresponding $u$
and $v$ defined in (\ref{uuvv}) have the same input and output dimensions,
so that they can be smoothly connected to each other, and thus so can
${\cal U}^{(p_a)}_k$ and ${\cal V}^{(p_b)}_k$. Hence they are equivalent.
\end{proof}

Let us now show that any continuous path corresponds to a continuous path
in SF.
\begin{cor}
\label{coruvSF}
For every continuous path of tensors $\mathcal U(p)$ -- not necessarily in
CF -- there exists $k_0\le D^4$ and a continuous path $u(p)$ and $v(p)$
such that the MPU $U^{(2k_0N)}(p)$ generated by $\mathcal U(p)$ has a SF
given by $u(p)$ and $v(p)$. 
\end{cor}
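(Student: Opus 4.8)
The plan is to build the continuous bricks $u(p),v(p)$ by running the construction of the standard form from Section~\ref{sec:standard-form} once more, but now tracking continuity at every step, with Proposition~\ref{prop:continuity-index} supplying the decisive input. First I would fix a single blocking length that works for the whole path: by Proposition~\ref{blockingsimple} blocking $k_0\le D^4$ sites makes the CF of any MPU tensor simple, and by Proposition~\ref{prop:continuity-index} the ranks $r(p),\ell(p)$ of the associated simple tensors are constant on $[0,1]$. Hence one and the same $k_0$ works for all $p$, condition~2 of Theorem~\ref{ThmFund1} (namely $r\ell=d^2$ for the blocked physical dimension) holds throughout, and it suffices to exhibit continuous unitaries $u(p),v(p)$ realizing the standard form~(\ref{StandardForm}) of $\mathcal U_{k_0}(p)$, since blocking twice more then reproduces a decomposition of $U^{(2k_0N)}(p)$.

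Next I would produce a continuous simple representative in (a form equivalent to) CFII. Proposition~\ref{prop:continuity-index} already yields a continuous family $\hat{\mathcal U}(p)=L(p)\mathcal U_{k_0}(p)R(p)$ whose two matricizations $\mathcal M_1(p),\mathcal M_2(p)$ carry the constant ranks $r,\ell$. After the blocking the transfer operator is primitive, so its normalized left and right fixed points depend continuously on $p$; the gauge transformation bringing $\hat{\mathcal U}(p)$ to CFII is built continuously from these fixed points, giving a continuous path of simple CFII tensors generating the same MPU. Here I would note that the construction~(\ref{eq:sf-svd})--(\ref{uuvv}) only uses the right fixed point through the positive operator $(\Id\otimes\rho)$ in $X_1$, so a continuous, positive-definite $\rho(p)$ suffices and one never needs a (possibly discontinuous) diagonalizing basis for degenerate $\rho$; continuity thus rests on primitivity of the blocked transfer operator, not on the Jordan-decomposition step of Proposition~\ref{prop:continuity-index}, which need not be continuous.

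It then remains to perform the singular value decompositions~(\ref{eq:sf-svd}) continuously and read off $u(p),v(p)$ from~(\ref{uuvv}). The key move is to \emph{avoid the individual singular vectors}, which cannot be chosen continuously in general, and instead to track the two range subspaces of $\mathcal M_1(p)$ and $\mathcal M_2(p)$: because $r$ and $\ell$ are constant, the nonzero parts of the spectra of $\mathcal M_i(p)\mathcal M_i^\dagger(p)$ are bounded away from zero, so the spectral projectors onto these ranges are continuous. Using Kato's parallel transport one obtains continuous unitaries intertwining these projectors, hence continuous isometries $V_1(p),V_2(p)$ onto the ranges; any such choice produces, via~(\ref{eq:sf-svd})--(\ref{Z1Z2}), continuous $X_i(p),Y_i(p),Z_i(p)$ satisfying~(\ref{Y1Y1X1X1}) and~(\ref{YZ=1}), and therefore continuous $u(p),v(p)$. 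By Theorem~\ref{ThmFund1}, since $r\ell=d^2$, these are unitary for every $p$, and the standard form~(\ref{StandardForm}) of $\mathcal U_{2k_0}(p)$ is exactly the claimed decomposition of $U^{(2k_0N)}(p)$.

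The hard part will be precisely this last continuity of the SVD data: as $p$ varies the singular values of $\mathcal M_i(p)$ may cross or stay degenerate, so neither $V_i(p)$ nor $U_i(p)$ admits a canonical continuous choice, and different local choices differ by a $p$-dependent $U(r)$ (resp.\ $U(\ell)$) gauge on the bond space. The resolution, as indicated above, is that $u$ and $v$ depend on the decomposition only through the range subspaces up to this gauge, that these subspaces are continuous precisely because the ranks are pinned by Proposition~\ref{prop:continuity-index}, and that contractibility of the parameter interval $[0,1]$ removes any global obstruction to a continuous choice of frame. A minor additional check is that the continuous CFII gauge and the continuity of $\hat{\mathcal U}(p)$ established in Proposition~\ref{prop:continuity-index} compose into a continuous map all the way to $u(p),v(p)$, which is routine once the range-subspace continuity is in place.
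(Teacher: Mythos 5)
Your strategy founders at its second step: you need a continuous path of simple CFII tensors generating the same MPU, and such a path does not exist in general. The paper explicitly allows interpolations along which the bond dimension of the canonical form changes (see the remark preceding Prop.~\ref{prop:continuity-index} and the example of Eq.~(\ref{SFu1u3}), where one endpoint is a product of on-site unitaries with CF bond dimension one and the other involves shifts with larger CF bond dimension); a continuous family of CFII tensors whose bond dimension jumps is a contradiction in terms. Two subsidiary claims in that step are also incorrect. First, $\hat{\mathcal U}(p)=L(p)\mathcal U_{k_0}(p)R(p)$ does \emph{not} generate the same MPU as $\mathcal U(p)$: $L$ and $R$ are singular positive operators, not a gauge, and $\hat{\mathcal U}$ is used in Prop.~\ref{prop:continuity-index} only as a device to compute the ranks. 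Second, the transfer operator of a non-CF tensor is not primitive -- its fixed points $L(p),R(p)$ are rank-deficient, which is exactly why passing to CF requires the (discontinuous) Jordan-lemma projector construction rather than a gauge ``built continuously from the fixed points''. Consequently the objects $\rho(p)$, $X_i(p)$, $Y_i(p)$ to which you want to apply Kato's parallel transport are not available as continuous families, and it is never established that the $u(p),v(p)$ you produce are the SF unitaries of $U^{(2k_0N)}(p)$.

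The paper's proof works entirely at the level of the physical unitaries, which is the only place where all relevant dimensions are guaranteed constant along the path. It fixes $u(p),v(p)$ pointwise with no continuity, uses continuity of $U^{(2k_0)}(p)$ in trace norm together with the circuit structure $U=v_{23}v_{41}u_{12}u_{34}$ to show that $u_{12}(x)$ is close to $(r_1\otimes s_2)\,u_{12}(y)$ for nearby $x,y$ and suitable local operators (then unitarized with a controlled error) -- i.e.\ that $u(p)$ is continuous \emph{as a point of the quotient by the local gauge group} $U(\ell)\times U(r)$ -- and only then invokes the Montgomery--Yang slice theorem to lift this to a genuinely continuous representative, recovering $\hat v(p)$ from $U^{(2k_0)}(p)=\hat u(p)\mathbb S\hat v(p)\mathbb S$. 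Your closing intuition -- constant ranks plus contractibility of $[0,1]$ remove the obstruction to a continuous frame -- is the right sentiment, but it must be applied to the gauge orbit of $u(p)$ itself, whose continuity has to be extracted from the physical operators, not to SVD frames of a tensor-level matricization that cannot be chosen continuously.
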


\begin{proof}
Let $\hat{\mathcal U}(p)$ be the CFII of $\mathcal U(p)$. Choose $k_0\le
D^4$ such that for all $p$, $\hat{\mathcal U}_{2k_0}(p)$ is in SF, with
unitaries $u(p)$ and $v(p)$. Note that the Index Theorem implies that
$u(p):\mathbb C^d\otimes \mathbb C^d\to \mathbb C^\ell\otimes \mathbb C^r$ 
and 
$v(p):\mathbb C^r\otimes \mathbb C^\ell\to \mathbb C^d\otimes \mathbb C^d$ 
map between tensor product spaces of fixed dimensions,
independent of $p$. Consider $U(p):=U^{(2k_0)}(p)$, the
MPU generated by $\mathcal U(p)$ [and equivalently $\hat{\mathcal U}(p)$]
on $2k_0$ sites. Since $\mathcal U(p)$ is continuous, so is $U(p)$. Thus,
for any $x\in[0,1]$ and $\varepsilon>0$ there exists a $\delta>0$ such that for all
$|x-y|<\delta$, 
\[
\|U(x)U(y)^\dagger - \openone\|_1\le \varepsilon\ .
\]
Substituting $U(p)=v_{23}(p)v_{41}(p)u_{12}(p)u_{34}(p)$, where the
subscripts denote the $k_0$-blocked sites, and using unitary invariance of
the trace norm, this yields
\begin{equation}
\label{eq:uu-vv-cont}
\begin{aligned}
\|u_{12}^{\phantom\dagger}(x)u_{12}^\dagger(y)\otimes 
u_{34}^{\phantom\dagger}(x)u_{34}^\dagger(y) -\hspace*{3cm} \\
v_{23}^{\dagger}(x)v_{23}^{\phantom\dagger}(y)\otimes 
v_{41}^{\dagger}(x)v_{41}^{\phantom\dagger}(y)\|_1
\le \varepsilon\ .
\end{aligned}
\end{equation}
Now pick $\ket{\alpha}$, $\ket{\beta}$, $\ket{\gamma}$, $\ket{\delta}$
s.th.\ 
$C:=\bra{\alpha,\beta}_{34}
u_{34}^{\phantom\dagger}(x)\times u_{34}^\dagger(y)\ket{\gamma,\delta}_{34}$
satisfies $|C|\ge 1/\sqrt{d}$, where $d$ is the dimension of the blocked
site -- e.g., by choosing  $\ket{\alpha,\beta}$ the leading Schmidt vector of 
$u_{34}^{\phantom\dagger}(x)u_{34}^\dagger(y)\ket{\gamma,\delta}_{34}$.
Since $\|VXW^\dagger\|_1\le \|X\|_1$ for isometries $V$ and $W$, we obtain
with $V=\openone_{12}\otimes \bra{\alpha,\beta}_{34}$ and
$W^\dagger=\openone_{12}\otimes\ket{\gamma,\delta}_{34}$ from
Eq.~(\ref{eq:uu-vv-cont}) $\|C\,
u_{12}^{\phantom\dagger}(x)u_{12}^\dagger(y) -\tilde r_{1}(x,y)\otimes
\tilde s_{2}(x,y)\|_1\le\varepsilon$ [with $\tilde r_1(x,y)=\bra\beta_4
v_{14}^\dagger(x)v_{14}(y)\ket{\delta}_4$, and similarly for $\tilde s_2(x,y)$],
and thus
\[
\|u_{12}(x) - 
(r_{1}(x,y)\otimes s_{2}(x,y))\, u_{12}(y) \|_1\le
\sqrt{d}\, \varepsilon\ .
\]
We choose the relative norms of $r_1(x,y)$ and $s_2(x,y)$ such
that $\|r_1(x,y)\|_\infty=1$.

While $r_1(x,y)$ and $s_2(x,y)$ are generally not unitary, we can replace
them by the unitaries $r_1'(x,y)$ and $s_2'(x,y)$ obtained by setting their
singular values to $1$, with a continuity bound 
\[
\|u_{12}(x) - 
(r_{1}(x,y)\otimes s_{2}(x,y))\, u_{12}(y) \|_1\le
(4d^2+1)\sqrt{d}\, \varepsilon\ .
\]
This is easily shown from $\|u_{12}(x)u_{12}(y)^\dagger-r_1\otimes
s_2\|_1\le\sqrt{d}\varepsilon$, which (using $\|r_1\|_\infty=1$) can be
used to bound their singular values, yielding $\|r_1\otimes
s_1-r_1'\otimes s_1'\|_1\le 4d^2\sqrt{d}\varepsilon$. 

We thus see that $u_{12}(x)$, modulo local unitaries, is continuous in
$x$.  Using a classical result of Montgomery and Yang
\cite{montgomery:existence-of-a-slice} (see
Ref.~\cite{brendon:compact-transformation-groups} for a detailed explanation),
it follows that there exists a local unitary gauge $\hat r(x)\otimes \hat
s(x)$ such that $\hat u(x)=(\hat r(x)\otimes\hat s(x))u(x)$ is continuous.  It is
then trivial to see that the corresponding $\hat v(p)$ is continuous as well,
by using the continuity of $U^{(k_0)}(p)= \hat u(p)\mathbb S \hat v(p)\mathbb S$,
where $\mathbb S$ swaps the indices of $\hat v(p)$.  \end{proof}

\section{\label{sec:sym-local-char}
Symmetries: Local characterization}

We will now turn our interest towards symmetries, and study the effect
they have on the classification of MPUs, i.e., what happens when we impose
some symmetry $U^{(N)}=\mathcal S[U^{(N)}]$ on the allowed MPUs. To this
end, we will use the standard form for MPUs (Definition~\ref{SF}) together
with the  local characterization of different MPU representations of the
same unitary -- namely $\mathcal U$ and $\mathcal S[\mathcal U]$ --
established in Theorem~\ref{FundamentalMPU}.

We will consider the following three symmetries:
\begin{enumerate}
\item Conjugation $U^{(N)}=\overline{U^{(N)}}$
\item Time reversal $U^{(N)}={U^{(N)}}^\dagger$
\item Transposition $U^{(N)}={U^{(N)}}^{T}$
\end{enumerate}
The ideas used can however be adapted, for instance to combine the above
symmetries with local symmetry actions, such as in the case of
topological insulators (where additionally the spin is flipped), but also
to systems with other locally characterized symmetries.

In this section, we provide local characterizations of the three
aforementioned symmetries based on Theorem~\ref{FundamentalMPU}. We then
apply these local characterizations in Sec.~\ref{sec:sym-conjugation} to
classify the equivalence classes of MPUs under symmetries.  Finally, we
discuss a range of examples in Sec.~\ref{sec:sym-examples}.

The general procedure to analyze symmetries in MPU is analogous to that
used for MPS~\cite{sanz:mps-syms}. Let us assume that for all $N$, the MPU
generated by a tensor ${\cal U}$ is invariant under some symmetry, that
is, $\tilde U^{(N)}={\cal S}[U^{(N)}]=U^{(N)}$. If we denote by $\tilde
{\cal U}$ the tensor generating $\tilde U^{(N)}$, we have that both ${\cal
U}$ and $\tilde{\cal U}$ generate the same family of MPU. Thus, if we
block until we obtain the SF in both of them, then the resulting tensors
have to be related by virtue of Theorem \ref{FundamentalMPU}. More explicitly, we
will consider a tensor of the form (\ref{StandardForm}), and define $u$
and $v$ as in (\ref{uuvv}).  Thus, $u$ and $v$
fully characterize the MPU, and we just have to find the restrictions
imposed by the symmetry. We will say that $u,v$ define the SF and that
they generate
the MPU.

\subsection{Conjugation}

We will consider MPU that are invariant under complex conjugation, that
is, $\forall N: U^{(N)}=\overline{U^{(N)}}$. We will consider the SF
defined by $u$ and $v$. This implies that $U^{(N)}$ and
$\overline{U^{(N)}}$ must have the same index, since their corresponding
dimensions $r$ and $\ell$ are equal.

\begin{prop}
\label{prop1conj}
The MPU generated by a tensor in SF are invariant under conjugation iff
there exist unitaries $x,y$ fulfilling either $x=x^T,y=y^T$, or $x=-x^T,y=-y^T$, such that
\begin{subequations}
 \label{conjugation}
 \bea
 \label{uconjugation}
 \bar u&=&(x\otimes y) u\ ,\\
 \label{vconjugation}
 \bar v&=& v (y^\dagger\otimes x^\dagger) \ ,
 \eea
 \end{subequations}
\end{prop}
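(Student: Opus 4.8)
The plan is to derive both directions from the Fundamental Theorem of MPU (Theorem~\ref{FundamentalMPU}), exploiting that complex conjugation commutes with the passage to the standard form. First I would note that the conjugated family $\overline{U^{(N)}}$ is precisely the MPU generated by the conjugate tensor $\bar{\mathcal U}$, and that conjugating a tensor in standard form (\ref{StandardForm}) conjugates its whole singular-value decomposition entrywise while leaving the (real) singular values and the dimensions $r,\ell$ untouched; hence $\bar{\mathcal U}$ is again in standard form, now with defining unitaries $\bar u$ and $\bar v$. The hypothesis $U^{(N)}=\overline{U^{(N)}}$ for all $N$ thus says that $\mathcal U$ (data $u,v$) and $\bar{\mathcal U}$ (data $\bar u,\bar v$) generate the same MPU. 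Applying Theorem~\ref{FundamentalMPU} then furnishes unitaries $x,y,z$ relating the two standard forms through (\ref{SFuu})--(\ref{SFvv}); specializing these relations to the defining unitaries $u$ and $v$ -- where the physical legs carry no gauge and the bond gauge $z$ has already been absorbed in the singular-value decomposition, leaving only the gauges on the $r$- and $\ell$-spaces -- yields exactly (\ref{uconjugation}) and (\ref{vconjugation}). The converse direction is immediate: a pair $x,y$ satisfying (\ref{uconjugation})--(\ref{vconjugation}) reassembles into valid gauge relations of the type (\ref{SFuu})--(\ref{SFvv}), so Theorem~\ref{FundamentalMPU} guarantees $U^{(N)}=\overline{U^{(N)}}$.

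The central remaining task is to pin down the transpose constraints. Here I would apply conjugation a second time to (\ref{uconjugation}): conjugating $\bar u=(x\otimes y)u$ gives $u=(\bar x\otimes\bar y)\bar u=(\bar x x\otimes \bar y y)\,u$. Since $u$ is unitary by Theorem~\ref{ThmFund1}, multiplying on the right by $u^\dagger$ forces $\bar x x\otimes\bar y y=\Id$, and because a tensor product equals the identity only when each factor is a reciprocal scalar multiple of the identity, one obtains $\bar x x=\mu\,\Id$ and $\bar y y=\mu^{-1}\,\Id$ for some scalar $\mu$. Observing that $x\bar x=x(\bar x x)x^{-1}$ is similar to $\bar x x$, its conjugate relation $x\bar x=\bar\mu\,\Id$ forces $\mu=\bar\mu$, while unitarity of $\bar x x$ forces $|\mu|=1$, so $\mu=\pm1$. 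Finally, rewriting $\bar x x=\mu\,\Id$ as $\bar x=\mu x^{\dagger}=\mu\,\bar x^{T}$ and conjugating yields $x=\mu x^{T}$, and identically $y=\mu y^{T}$ with the \emph{same} sign $\mu$. The cases $\mu=+1$ and $\mu=-1$ are then precisely $x=x^T,\,y=y^T$ and $x=-x^T,\,y=-y^T$, so in particular the symmetry types of $x$ and $y$ must agree, as claimed.

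I expect the main obstacle to lie in the bookkeeping of the first paragraph, namely verifying carefully that the three-unitary gauge $(x,y,z)$ delivered by Theorem~\ref{FundamentalMPU} collapses to the two-unitary form (\ref{uconjugation})--(\ref{vconjugation}) at the level of $u$ and $v$, i.e.\ that $z$ is not an independent degree of freedom there but is fixed once the CFII normalization is imposed. By contrast, the sign-matching of $x$ and $y$ -- often the delicate point in such symmetry analyses -- is here automatic: it follows from $\bar x x\otimes\bar y y=\Id$ forcing reciprocal scalars together with $\mu=\pm1$, and requires no appeal to uniqueness of the gauge beyond the unitarity of $u$ established in Theorem~\ref{ThmFund1}.
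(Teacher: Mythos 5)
Your proposal is correct and follows essentially the same route as the paper: apply the Fundamental Theorem of MPU to $\mathcal U$ and $\bar{\mathcal U}$ to obtain (\ref{uconjugation})--(\ref{vconjugation}), then apply conjugation twice to get $\bar x x\otimes\bar y y=\openone$ and deduce $x=\pm x^T$, $y=\pm y^T$ with matching sign (the paper phrases the last step as $x=e^{i\phi}x^T$, $\openone=xx^\dagger=e^{2i\phi}\openone$, which is equivalent to your similarity-plus-unitarity argument for $\mu=\pm1$). The "bookkeeping" issue you flag about the gauge $z$ is also passed over silently in the paper's proof, which simply reads off the relations for $u$ and $v$ from Theorem~\ref{FundamentalMPU}.
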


\begin{proof}
(if) It is immediate.\\[0.5ex]
(only if) Let us assume that $U^{(N)}=\overline{U^{(N)}}\;\forall N$, where the
conjugation is taken in the original basis where the MPU is defined. Using
Theorem \ref{FundamentalMPU}, we obtain (\ref{uconjugation}) and
(\ref{vconjugation}) for some unitaries $x,y$. Using (\ref{uconjugation})
twice, we obtain $\bar x x \otimes \bar y y =\openone$, which is
equivalent to $x=e^{i\phi} x^T$ and $y=e^{-i\phi} y^T$. This implies
$\bar x = e^{-i\phi} x^\dagger$ and thus $\openone = xx^\dagger =
e^{2i\phi} x^T\bar{x}=e^{2i\phi}\openone$, such that $e^{i\phi}=\pm1$.
\end{proof}

Let us now have a closer look at the structure of symmetric and
skew-symmetric unitary
matrices.

\begin{lem}
\label{lemma:conjclass-normalform-continuous}
If a unitary matrix $x$ fulfills $x=x^T$ or $x=-x^T$, then there exist a
symmetric unitary matrix $S$, $S=S^T$, and a $\tilde\Lambda$ 
such that $x=S^T\tilde\Lambda S$. Moreover,
$\tilde\Lambda$ is real and symmetric (for $x=x^T$) or skew-symmetric (for
$x=-x^T$), and $\det(\tilde\Lambda)=1$.
\end{lem}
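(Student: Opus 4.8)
The plan is to reduce the statement to the simultaneous diagonalization (or block-diagonalization) of the real and imaginary parts of $x$. First I would write $x=A+iB$ with $A,B$ real. The symmetry hypothesis forces $A^T=A,\ B^T=B$ when $x=x^T$, and $A^T=-A,\ B^T=-B$ when $x=-x^T$; so $A,B$ are both real symmetric, or both real skew-symmetric. Expanding $xx^\dagger=\Id$ and using that $A,B$ are real, the vanishing of the imaginary part yields $AB=BA$ in both cases, while the real part gives $A^2+B^2=\Id$ for $x=x^T$ and $A^2+B^2=-\Id$ for $x=-x^T$. Note that in the skew-symmetric case the dimension must be even, since an odd-dimensional skew-symmetric matrix is singular and hence cannot be unitary.

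Next, since $A$ and $B$ are commuting real symmetric (resp.\ skew-symmetric) matrices, I would invoke the standard simultaneous normal form under a single real orthogonal conjugation $O$. In the symmetric case this diagonalizes both, so that $O^TxO=D=\mathrm{diag}(e^{i\theta_j})$ is a diagonal unitary matrix. In the skew-symmetric case it brings both to block-diagonal form with $2\times2$ blocks, giving $O^TxO=\bigoplus_j e^{i\phi_j}J$ with $J=\begin{pmatrix}0&1\\-1&0\end{pmatrix}$, where unitarity of $x$ forces each block coefficient to have modulus one.

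The final step is to factor the phases symmetrically and conjugate back. In the symmetric case I set $S_0=\mathrm{diag}(e^{i\theta_j/2})$ and $\tilde\Lambda_0=\Id$, so $D=S_0\tilde\Lambda_0 S_0$. In the skew-symmetric case I set $S_0=\bigoplus_j e^{i\phi_j/2}\Id_2$ and $\tilde\Lambda_0=\bigoplus_j J$, which again gives $O^TxO=S_0\tilde\Lambda_0 S_0$ since $S_0$ is scalar on each block. Defining $S=OS_0O^T$ and $\tilde\Lambda=O\tilde\Lambda_0 O^T$, a direct check shows $x=S\tilde\Lambda S=S^T\tilde\Lambda S$: here $S$ is unitary (a product of unitaries) and symmetric (since $S_0$ is diagonal and $O$ is real), while $\tilde\Lambda$ is real and inherits the symmetry type of $\tilde\Lambda_0$ under conjugation by the real matrix $O$. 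Finally $\det\tilde\Lambda=\det\tilde\Lambda_0$, which is $1$ trivially when $\tilde\Lambda_0=\Id$ and equals $\prod_j\det J=1$ in the skew case.

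The routine computations (the real/imaginary decomposition and the commutation identities) are short; the one step deserving care is the simultaneous orthogonal block-diagonalization of commuting real skew-symmetric matrices into $2\times2$ blocks all proportional to the same $J$. I expect this—together with the observation that the dimension is necessarily even, so the block tiling is complete and $\det\tilde\Lambda=1$ comes out automatically—to be the main point to get right. Everything else (unitarity and symmetry of $S$, the symmetry type and determinant of $\tilde\Lambda$) follows by direct verification.
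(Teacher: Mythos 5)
Your proof is correct, but it takes a different technical route from the paper. You split $x=A+iB$ into real and imaginary parts, derive $AB=BA$ and $A^2+B^2=\pm\Id$ from unitarity, and then invoke the simultaneous \emph{real orthogonal} normal form of a commuting pair of real symmetric (resp.\ skew-symmetric) matrices; the factorization $x=S^T\tilde\Lambda S$ is then read off block by block. The paper instead works directly with the complex spectral decomposition of the unitary $x$: for $x=x^T$ the spectral projectors satisfy $P_E=P_E^T$ and $S=\sum_E e^{-iE/2}P_E$ is a symmetric unitary square root with $\tilde\Lambda=\Id$; for $x=-x^T$ the eigenvalues pair up as $\{e^{-iE},-e^{-iE}\}$ with projectors $\{P_E,P_E^T\}$, and one sets $S=e^{-i\pi/4}\sum_E e^{-iE/2}(P_E+P_E^T)$, $\tilde\Lambda=i\sum_E(P_E-P_E^T)$. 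The two constructions produce the same objects (your $O(\bigoplus_j J)O^T$ is exactly the paper's $i\sum_E(P_E-P_E^T)$), but the paper's version avoids the real quasi-diagonalization machinery entirely, while yours makes the real structure explicit. The one step you rightly flag as delicate does go through: in the simultaneous orthogonal block-diagonalization of the commuting normal pair $(A,B)$, every $2\times2$ real skew-symmetric block is automatically a multiple of the \emph{same} $J$ once both are expressed in the same basis, and $1\times1$ blocks (i.e., common kernel vectors of $A$ and $B$) are excluded by $A^2+B^2=-\Id$ --- it is this relation, not merely the evenness of the dimension, that guarantees the tiling by $2\times2$ blocks is complete, and it would be worth stating that explicitly.
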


\begin{proof}
Let us consider first the case $x=x^T$. Since $x$ is unitary, it has a 
spectral decomposition $x= \sum_E e^{-i E} P_E$ with 
$E\in[0,2\pi)$, where the $\{P_E\}_E$ form a complete set of mutually
orthogonal orthogonal projectors, and $P_E=P_E^T$. 
Defining
\[
 S= \sum_E e^{-iE/2} P_E
\]
and $\tilde\Lambda=\openone$, we have indeed that $S$ is unitary,
$S=S^T$, and $x=S^T\tilde\Lambda S$.

Let us now consider $x=-x^T$. Since $x$ is unitary, we can write
\[
 x= \sum_E e^{-iE} (P_E - P_E^T)
\]
where $E\in[0,\pi)$, and the $\{P_E,P_E^T\}_E$ form a complete set of
mutually orthogonal orthogonal projectors. Defining
\[
 S= e^{-i\pi/4}\sum_E e^{-iE/2} (P_E + P_E^T)\ ,
\]
and
\[
 \tilde \Lambda = i \sum_E (P_E - P_E^T)
\]
we have that $S$ is unitary, $S=S^T$, and $x=S^T\tilde\Lambda S$.
Moreover, $\bar{\tilde \Lambda}=-i\sum (\bar P_E-P_E^\dagger) =
-i\sum(P_E^T-P_E)=\tilde\Lambda$, i.e., $\tilde\Lambda$ is real, and
$\tilde\Lambda^T=-\tilde\Lambda$.  Finally, since the eigenvalues of
$\tilde\Lambda$ come in pairs $\pm i$, $\det(\tilde\Lambda)=1$.
\end{proof}

We now show that there always exist a gauge such that $u$ and $v$ can be choosen
with special properties. This will be the basis of the classification of phases
that we will carry out in the next section.

\begin{prop}
\label{corconj}
If the MPU generated by a tensor in SF are invariant under complex
conjugation, then one can choose the SF with unitaries $u'$
and $v'$ such that:
\begin{itemize}
\item Case I: If $x=x^T$ and $y=y^T$ in Prop.~\ref{prop1conj} then
    \be
    \label{ortuv}
    \overline{u'} = u'\,, \quad \overline{v'}=v',
    \ee
    i.e.\ they are orthogonal. The choice of $u'$ and $v'$
    is unique up to local orthogonal transformations; that is, if $u''$ and $v''$ are orthogonal
    and generate the same MPU, then there exist $O_{r,\ell}$ orthogonal, such that
    $u''=(O_\ell\otimes O_r) u'$ and $v''=v' (O_r^T\otimes O_\ell^T)$.
\item Case II: If $x=-x^T$ and $y=-y^T$ in Prop.~\ref{prop1conj}, then
    \be
    \label{sympuv}
    \overline{u'} = (\Sigma_{\ell}\otimes\Sigma_r) u'\ ,\quad
    \overline{v'}= v'(\Sigma_r\otimes\Sigma_\ell)\ ,
    \ee
    where $\Sigma_x=\openone_{x/2}\otimes Y$, $x=r,\ell$,
    and $Y=\left(\begin{smallmatrix}0&1\\-1&0\end{smallmatrix}\right)$.
    This choice is unique up to local symplectic transformations;
    that is, if $u''$ and $v''$ fulfill (\ref{sympuv}) and generate the same MPU, then
    there exist $S_{r,\ell}$  fulfilling $S_x^T \Sigma_x  S_x=\Sigma_x$
    for $x=r,\ell$ such that $u''=(S_\ell\otimes S_r) u'$ and $v''=v' (S_r^T\otimes S_\ell^T)$.
    Note that this case can only occur if $\ell$ and $r$ are even.
\end{itemize}
\end{prop}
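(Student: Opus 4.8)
The plan is to exploit the gauge freedom on the pair $(u,v)$ supplied by the Fundamental Theorem (Theorem~\ref{FundamentalMPU}): two standard forms generating the same MPU differ by bond unitaries $Q$ and $P$ on the $\ell$- and $r$-bonds, acting as $u\mapsto (Q\otimes P)\,u$ and $v\mapsto v\,(P^\dagger\otimes Q^\dagger)$. Starting from the relations $\bar u=(x\otimes y)u$ and $\bar v=v(y^\dagger\otimes x^\dagger)$ of Proposition~\ref{prop1conj}, I would look for $Q,P$ that conjugate $x$ and $y$ into the trivial form $\openone$ (Case~I) or into the standard symplectic forms $\Sigma_\ell,\Sigma_r$ (Case~II). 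Writing $u'=(Q\otimes P)u$ and demanding (\ref{ortuv}) resp.\ (\ref{sympuv}), a short computation using $\bar Q^{-1}=Q^T$ shows that the requirement is exactly $Q^T\Lambda_\ell Q=x$ and $P^T\Lambda_r P=y$, with $\Lambda=\openone$ in Case~I and $\Lambda=\Sigma$ in Case~II.

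These equations are solved directly with Lemma~\ref{lemma:conjclass-normalform-continuous}. In Case~I it gives $x=S^TS$ for a symmetric unitary $S$, so I set $Q=S$ (and $P$ from the analogous factorization of $y$); then $\bar{u'}=u'$, i.e.\ $u'$ is orthogonal. In Case~II the Lemma yields $x=S^T\tilde\Lambda S$ with $\tilde\Lambda$ real, skew-symmetric, unitary and $\det=1$; since such a matrix has eigenvalues $\pm i$ in pairs it is real-orthogonally conjugate to $\Sigma_\ell$, say $\tilde\Lambda=O^T\Sigma_\ell O$, so $Q=OS$ solves $Q^T\Sigma_\ell Q=x$. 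The same normal form forces $\ell$, and likewise $r$, to be even, since a skew-symmetric unitary cannot exist in odd dimension. It then remains to check that the $v$-relation (\ref{vconjugation}) is automatically carried into the desired form; the key identity here is $\Sigma^\dagger=-\Sigma$, so that propagating the gauge through (\ref{vconjugation}) produces a factor $-\Sigma_r$ on one bond and $-\Sigma_\ell$ on the other, and the two signs cancel, leaving precisely (\ref{sympuv}) [in Case~I the same computation gives $\overline{v'}=v'$].

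For uniqueness I would take two admissible choices $(u',v')$ and $(u'',v'')$ of the same MPU, related by gauge unitaries $Q,P$ as above, and impose that both satisfy the canonical condition. Using that $u'$ is a (square) unitary since $r\ell=d^2$, the two conditions collapse to the single operator identity $\bar Q\,\Lambda_\ell\otimes\bar P\,\Lambda_r=\Lambda_\ell Q\otimes\Lambda_r P$, whence $\bar Q\,\Lambda_\ell=\mu\,\Lambda_\ell Q$ and $\bar P\,\Lambda_r=\mu^{-1}\,\Lambda_r P$ for a single phase $\mu\in U(1)$. Rearranging gives $Q^T\Lambda_\ell Q=\mu^{-1}\Lambda_\ell$ and $P^T\Lambda_r P=\mu\,\Lambda_r$, which are the orthogonality (Case~I) resp.\ symplecticity (Case~II) conditions up to the nuisance phase $\mu$.

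The main obstacle, and the step I would treat most carefully, is removing this phase. Because the physical gauge is the product $Q\otimes P$, I am free to rephase $Q\mapsto e^{i\alpha}Q$ and $P\mapsto e^{-i\alpha}P$ without changing the action on $u$ or $v$. The delicate point is that the \emph{same} $\alpha$ must normalize both constraints: this works precisely because the $\ell$- and $r$-conditions carry the reciprocal phases $\mu^{-1}$ and $\mu$, which the opposite rephasings $e^{2i\alpha}$ and $e^{-2i\alpha}$ correct to $1$ under the single choice $\alpha=\tfrac12\arg\mu$. The resulting $Q,P$ are then genuine orthogonal ($O_{\ell,r}$, Case~I) or symplectic ($S_{\ell,r}$ with $S_x^T\Sigma_x S_x=\Sigma_x$, Case~II) matrices, yielding the claimed uniqueness. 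I would also keep explicit track of the transpose-versus-dagger bookkeeping, exploiting $S^\dagger=\Sigma^{-1}S^T\Sigma$ for symplectic unitaries, so that the final gauge matches the form written in (\ref{ortuv})–(\ref{sympuv}).
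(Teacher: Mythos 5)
Your proposal is correct and follows essentially the same route as the paper: the same gauge transformation $u\mapsto(Q\otimes P)u$, the same reduction to solving $x=Q^T\Lambda_\ell Q$ via Lemma~\ref{lemma:conjclass-normalform-continuous} (with the extra orthogonal conjugation $O^T\Sigma O=\tilde\Lambda$ in Case~II), and the same phase-splitting trick $Q\mapsto e^{i\alpha}Q$, $P\mapsto e^{-i\alpha}P$ to turn the residual gauge into genuinely orthogonal or symplectic matrices. Your treatment of the nuisance phase $\mu$ in the uniqueness part is just a slightly more explicit rendering of the paper's choice $O_\ell=e^{-i\phi/2}a$, $O_r=e^{i\phi/2}b$.
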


\begin{proof}
First, it holds that every unitary with $x=\pm x^T$ can be written as 
$x=U^T \Lambda U$ with $U$ unitary, where for $x=x^T$, $\Lambda=\openone$,
while for $x=-x^T$, $\Lambda=\Sigma$. This follows directly from 
Lemma~\ref{lemma:conjclass-normalform-continuous} and its proof: For
$x=x^T$, this is immediate with $U=S$. For $x=-x^T$, we use that
$\tilde\Lambda$ is real and skew-symmetric and can thus be
block-diagonalized with an orthogonal transformation $O$, $O^T\Sigma
O=\tilde\Lambda$, and then choose $U=OS$. (Alternatively, this can be
derived from the Autonne-Takagi factorization or Youla's decomposition of
(skew-)symmetric matrices under unitary congruence, respectively, using
the unitarity of $x$.)

Now consider $u$ and $v$ describing an MPU satisfying
\eqref{conjugation}, and write $x=U^T\Lambda_\ell U$ and $y=V^T\Lambda_r V$ with
$U$, $V$ unitary. Define $u':=(U\otimes V)u$ and $
v':=v(V^\dagger\otimes U^\dagger)$.  Clearly, $u'$ and $v'$
describe the same MPU, and it is straightforward to check
that $\bar{u'}=(\Lambda_\ell\otimes \Lambda_r){u'}$ and $\bar{v'}=\
v'(\Lambda_r^T\otimes \Lambda_\ell^T)=v'(\Lambda_r\otimes\Lambda_\ell)$,
which proves the first statement for both cases.

Now, if $u'',v''$ generate the same MPU as $u',v'$, according to
Theorem~\ref{FundamentalMPU} they must be related by some unitaries, ie
$u''=(a\otimes b)u'$ and $v''=v'(b^\dagger\otimes a^\dagger)$.  If $u''$
and $v''$ are orthogonal, then so is $a\otimes b$, and thus
$\bar{a}=e^{i\phi}a$, $\bar{b}=e^{-i\phi}b$. Choosing
$O_\ell=e^{-i\phi/2}a$ and $O_r=e^{i\phi/2}b$ finishes the proof for Case
I. If $u''$ and $v''$ fulfill (\ref{sympuv}), then
it is straightforward to show that $(a\otimes b)
(\Sigma_\ell\otimes\Sigma_r)(a\otimes b)=
(\Sigma_\ell\otimes\Sigma_r)$, and thus $a$ and $b$ can be chosen
symplectic by an appropriate phase choice, as
announced in Case II.
\end{proof}

The second case in this proposition may look more intricate that the first one. However,
there exists a simple characterization for the unitaries $u',v'$.

\begin{cor}
\label{cor:sympl-to-orthogonal}
In Case II of Proposition \ref{corconj} we can write
 \begin{subequations}
\label{upvpconj}
 \bea
 u' &=& (\openone_{r\ell/4}\otimes R) {\tilde u}\\
 v' &=& {\tilde v}(\openone_{r\ell/4}\otimes R)
 \eea
 \end{subequations}
with $R=\frac{1-i}{2} [\Id_4 + i(Y\otimes Y)]$ unitary, and ${\tilde u}$ and
${\tilde v}$ orthogonal, where the ordering of the tensor product is such
that the positions of the $Y$ coincide with those in $\Sigma_x$ in (\ref{sympuv}).
\end{cor}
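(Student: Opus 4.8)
The plan is to absorb the symplectic twist $\Sigma_\ell\otimes\Sigma_r$ of Case~II into a single, fixed local unitary change of basis, after which the conjugation conditions \eqref{sympuv} degenerate to plain reality. Concretely, I would first reorder the legs so that the two factors carrying the $Y$'s in $\Sigma_\ell=\openone_{\ell/2}\otimes Y$ and $\Sigma_r=\openone_{r/2}\otimes Y$ are grouped together; in that ordering the first relation of \eqref{sympuv} reads $\overline{u'}=(\openone_{r\ell/4}\otimes(Y\otimes Y))u'$ and the second reads $\overline{v'}=v'(\openone_{r\ell/4}\otimes(Y\otimes Y))$. The task then reduces to finding one unitary $R$ on $\mathbb{C}^4$ such that conjugating by $\openone_{r\ell/4}\otimes R$ removes the intertwiner $Y\otimes Y$.

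The key observation is that $Y\otimes Y$ is an involution: since $Y^2=-\openone_2$ we have $(Y\otimes Y)^2=\openone_4$, so the algebra spanned by $\openone_4$ and $Y\otimes Y$ is two-dimensional and $Y\otimes Y$ admits a symmetric unitary square root. I would simply exhibit it as $R=\tfrac{1-i}{2}[\openone_4+i(Y\otimes Y)]$ and record the properties that drive everything: $R$ is unitary, it is symmetric, $R^T=R$ (because $Y^T=-Y$ gives $(Y\otimes Y)^T=Y\otimes Y$), and $R^2=Y\otimes Y$. These combine into the single identity $\overline R=(Y\otimes Y)R$, equivalently $R^\dagger=(Y\otimes Y)R$, which is the only computation needed and follows at once from $(Y\otimes Y)^2=\openone_4$.

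With $R$ in hand I would set $\tilde u:=(\openone_{r\ell/4}\otimes R^\dagger)u'$ and $\tilde v:=v'(\openone_{r\ell/4}\otimes R^\dagger)$, which are manifestly unitary and are exactly the tensors of \eqref{upvpconj} after solving for $u'$ and $v'$. Substituting into the reordered form of \eqref{sympuv} and using $\overline{R^\dagger}=R^T=R$ together with $R^\dagger=(Y\otimes Y)R$, one finds $\overline{\tilde u}=(\openone\otimes R^T(Y\otimes Y)R)\tilde u=(\openone\otimes RR^\dagger)\tilde u=\tilde u$, and the identical manipulation on the input legs gives $\overline{\tilde v}=\tilde v$. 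Thus $\tilde u$ and $\tilde v$ are orthogonal, which is the claim.

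The calculations are entirely routine once $R$ is written down, so the real content is recognizing that the symplectic involution $Y\otimes Y$ has a symmetric unitary square root and producing it explicitly. The one point demanding care is the tensor-factor bookkeeping flagged in the statement: one must check that the same $R$ acts on precisely the two qubit slots hosting the $Y$'s of $\Sigma_\ell$ and $\Sigma_r$, and that this single $R$ serves simultaneously for the output legs of $u'$ and the input legs of $v'$. The latter is automatic, since both reality conditions collapse to the identical relation $R^\dagger=(Y\otimes Y)R$; I expect this alignment of conventions, rather than any analytic difficulty, to be the main thing to get right.
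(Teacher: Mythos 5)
Your proof is correct and is essentially the paper's argument in expanded form: the paper's entire proof is the observation that $R^T(Y\otimes Y)R=\openone_4$, which is exactly your identity $R^\dagger=(Y\otimes Y)R$ combined with $R^T=R$, and your verification that $\tilde u=(\openone\otimes R^\dagger)u'$ and $\tilde v=v'(\openone\otimes R^\dagger)$ are real unitaries is the intended route. The only added value of your write-up is that it makes the eigenspace structure of the involution $Y\otimes Y$ and the tensor-leg bookkeeping explicit, which the paper leaves to the reader.
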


\begin{proof}
This can be obtained by noticing that $R^T(Y\otimes Y)R=\openone_4$.
\end{proof}

We emphasize, however, that the resulting ${\tilde u}$ and ${\tilde v}$ do
not generate the same MPU.  Indeed, we will see that the MPU generated by
$\tilde{u}$, $\tilde{v}$ and by $u'$, $v'$ are in fact in different
phases under conjugation.

\subsection{Time-reversal}

We consider now the case where $U^{(N)} = U^{(N)\dagger}$. The first thing
to notice is that this is impossible unless the index of the
generating tensor vanishes. Indeed, if it is non-zero, the tensors
generating $U^{(N)}$ and $U^{(N)\dagger}$ cannot be equivalent, since
their indices are opposite, a result that directly follows from Theorem
\ref{IndexTh} and the fact that $U^{(N)}U^{(N)\dagger}=\Id$ has index
equal to zero.

In the following we will consider the SF so that there are two spins per
site, $H_d=H_{d_0}\otimes H_{d_0}$, i.e.\ in site $n$ we have spins
$2n-1$ and spin $2n$.  We will use a subscript to indicate the spin when
required.

\begin{prop}
\label{PropChiral}
The MPU generated by a tensor in SF are invariant under chirality,
$U^{(N)}=U^{(N)\dagger}$ for $N$ even, iff there exists a unitary $x$ such
that
\begin{equation}
\label{UUdaggertrick}
 (\Id_1\otimes v_{23} \otimes \Id_4)(u_{12}\otimes u_{34})=
\pm
 (x_{12}\otimes x_{34}^\dagger) (\Id_1\otimes v_{23}^\dagger \otimes \Id_4)
\end{equation}
\end{prop}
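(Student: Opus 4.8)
The plan is to use the Fundamental Theorem of MPU (Theorem~\ref{FundamentalMPU}) together with the fact that $U^{(N)\dagger}$ is itself an MPU whose standard form can be read off directly from that of $U^{(N)}$. Writing $U^{(N)}$ in the brickwork dictated by the standard form (\ref{StandardForm})---a layer of $u$'s acting within the two-spin sites followed by a layer of $v$'s acting across them---the adjoint $U^{(N)\dagger}$ is the same circuit reversed and daggered, i.e.\ a layer of $v^\dagger$'s followed by a layer of $u^\dagger$'s. Shifting the lattice by one spin, this is again a tensor in standard form, now with ``$u$-gate'' $v^\dagger$ and ``$v$-gate'' $u^\dagger$. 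The chirality hypothesis $U^{(N)}=U^{(N)\dagger}$ (all even $N$) then says that these two standard forms generate the same MPU, so Theorem~\ref{FundamentalMPU} must relate them by a local gauge.

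For the (if) direction I would verify (\ref{UUdaggertrick}) directly. Read as a local move, it trades the two $u$'s adjacent to a given $v$ for a $v^\dagger$ flanked by $x$ and $x^\dagger$. Because its four-spin window advances by exactly one site as one moves around the ring, the $x^\dagger$ produced on a given site by one application meets the $x$ produced on the \emph{same} site by the neighbouring application; under periodic boundary conditions these cancel in pairs, $x^\dagger x=\Id$, turning the forward brickwork of $U^{(N)}$ into the adjoint brickwork of $U^{(N)\dagger}$. The accumulated sign is $(\pm1)^N$, which equals $+1$ precisely because $N$ is even.

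The (only if) direction carries the main content. Applying Theorem~\ref{FundamentalMPU} to the two standard forms identified above produces unitaries $x,y,z$ intertwining $u,v$ with $v^\dagger,u^\dagger$ as in (\ref{SFuu})--(\ref{SFvv}). I would then eliminate the auxiliary-leg unitaries---using that $z$ is forced unitary by the CFII condition noted after Theorem~\ref{FundamentalMPU}, and that the one-spin shift identifies the gauge on the left and right legs---so as to collapse the two separate relations into the single four-spin identity (\ref{UUdaggertrick}), with the physical gauge appearing as $x_{12}$ on one site and $x_{34}^\dagger$ on the neighbouring one. The $\pm$ sign is then fixed by consistency: applying the construction twice returns $U^{(N)\dagger}$ to $(U^{(N)\dagger})^\dagger=U^{(N)}$, and, exactly as in the proof of Proposition~\ref{prop1conj}, this double application forces the residual phase relating $x$ to itself to square to one.

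The step I expect to be the main obstacle is the bookkeeping of the one-spin shift in the (only if) direction: correctly matching the left and right auxiliary legs of the two standard forms, tracking how the bond unitaries $y,z$ of Theorem~\ref{FundamentalMPU} are absorbed into a single physical $x$, and pinning down the sign. The telescoping in the (if) direction and the translation of the abstract gauge relations into the explicit four-spin equation are otherwise routine graphical manipulations.
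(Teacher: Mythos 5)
Your overall strategy---compare the standard form of $U^{(N)}$ with a standard form of $U^{(N)\dagger}$ via Theorem~\ref{FundamentalMPU}---is the paper's strategy too, but the step you defer as ``bookkeeping of the one-spin shift'' is exactly where the argument breaks, and the paper needs a specific device to get past it. The brickwork you identify for $U^{(N)\dagger}$ (a $v^\dagger$-layer followed by a $u^\dagger$-layer) only becomes a standard form after shifting the unit cell by one spin, and with respect to that shifted cell $U^{(N)}$ itself is \emph{not} in standard form (its within-cell layer acts second rather than first). So the two simple tensors you want to compare live on different site decompositions, and the gauge relations (\ref{SFuu})--(\ref{SFvv}) do not even typecheck: they would equate $u_{12}$ with a dressed $v^\dagger_{23}$ acting on a different pair of spins. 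The paper's resolution is to block two sites, putting both operators into SF on the \emph{same} blocked lattice: $u^{(2)}_{1234}=v_{23}u_{12}u_{34}$, $v^{(2)}_{3456}=v_{45}$ for $U^{(N)}$, and $\tilde u^{(2)}_{1234}=v^\dagger_{23}$, $\tilde v^{(2)}_{3456}=(u^\dagger_{34}\otimes u^\dagger_{56})v^\dagger_{45}$ for $U^{(N)\dagger}$. Then (\ref{SFuu}) gives $v_{23}u_{12}u_{34}=(x_{12}\otimes y_{34})v^\dagger_{23}$, (\ref{SFvv}) gives $v_{23}=(u^\dagger_{12}\otimes u^\dagger_{34})v^\dagger_{23}(y^\dagger_{12}\otimes x^\dagger_{34})$, and substituting the dagger of the second into the first yields $x_{12}y_{12}\otimes y_{34}x_{34}=\openone$, hence $y=\pm x^\dagger$ in one stroke. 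Note this is not your ``apply the construction twice'' argument: the sign is fixed by combining the two relations of a \emph{single} application of the Fundamental Theorem, available here precisely because the symmetry exchanges the roles of the $u$- and $v$-gates.

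Your (if) direction is also mechanically wrong as described. Adjacent four-spin windows of (\ref{UUdaggertrick}) share a $u$-gate, so the move cannot be applied at every $v$ around the ring, and the $x$ and $x^\dagger$ landing on neighbouring sites never meet to cancel as $x^\dagger x=\Id$. What actually happens is a two-round argument: apply (\ref{UUdaggertrick}) at alternating $v$'s (consuming all the $u$'s), then apply its \emph{daggered} version at the remaining $v$'s, which absorbs the leftover $x$'s and produces the $u^\dagger$-layer of $U^{(N)\dagger}$. Evenness of $N$ is needed so that the alternation closes consistently around the ring---not merely so that $(\pm1)^N=+1$---and the two rounds each contribute $\sigma^{N/2}$, i.e.\ the phases add up rather than cancel, which is exactly why time reversal forces $\sigma=\pm1$ while transposition (Prop.~\ref{prop:local-char-transposition}) tolerates an arbitrary phase.
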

\noindent
We will discuss the relevance of the $\pm1 $ phase for the classification
of phases in Sec.~\ref{sec:necessary-cond-time-rev}.

\begin{proof}
(if) For even $N$, we have
\be
 \label{eq:dagger-sym-cond}
 {\includegraphics[width=\columnwidth]{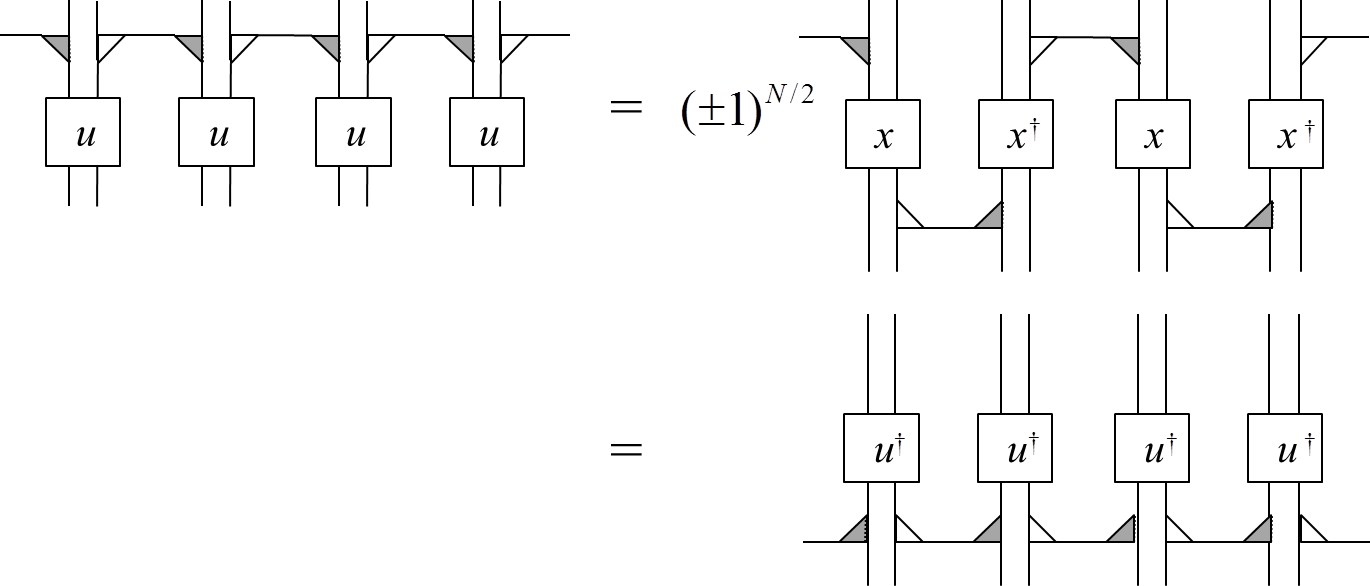}}
\ee
where we first used (\ref{UUdaggertrick}) and then its daggered version.
\\[0.5ex]
(only if) The tensor generating $U^{(N)\dagger}$ is not in SF, so that we cannot use the fundamental theorem. The way around this is to block two tensors, so that
 \be
  \label{eq:block-twosites}
 {\includegraphics[height=4.5em]{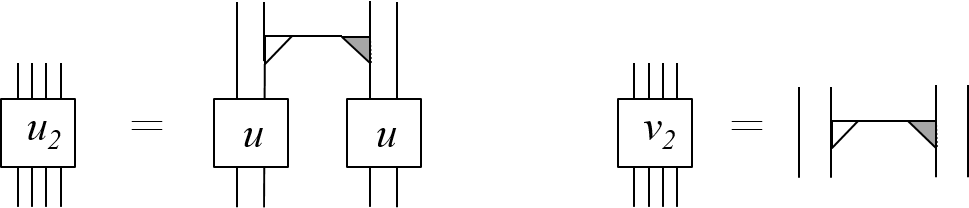}}
\ee
Using Theorem \ref{FundamentalMPU} [Eq. (\ref{SFuu})] for the blocked MPU,
we have $(\Id_1\otimes v_{23} \otimes \Id_4)(u_{12}\otimes u_{34})=
(x_{12}\otimes y_{34}) (\Id_1\otimes v_{23}^\dagger \otimes \Id_4)$.
Analogously, using Eq.~(\ref{SFvv}) we deduce that $(\Id_1\otimes v_{23}
\otimes \Id_4)=(u_{12}^\dagger\otimes u_{34}^\dagger)(\Id_1\otimes
v_{23}^\dagger \otimes \Id_4) (y_{12}^\dagger\otimes x_{34}^\dagger)$.
Substituting the dagger of the second equation into the first one, we
obtain $x_{12}y_{12}\otimes y_{34}x_{34}=\openone$, and thus $y=\pm
x^\dagger$.
\end{proof}

\subsection{Transposition}

Now we will assume that $U^{(N)}=U^{(N)T}$. Again, we can
only have this symmetry if the index vanishes, since the indices of
$U^{(N)}$ and ${U^{(N)}}^T$ are opposite -- this can be seen by blocking as in
Eq.~(\ref{eq:block-twosites}) and using Definition~\ref{def:index} of the
index in terms of $r$ and $\ell$.

\begin{prop}
\label{prop:local-char-transposition}
The MPU generated by a tensor in SF are invariant under transposition,
$U^{(N)}=U^{(N)T}$ for $N$ even, iff there exists a unitary, $x$, such that 
\be
 \label{UUtransposetrick}
 (\Id_1\otimes v_{23} \otimes \Id_4)(u_{12}\otimes u_{34})=
e^{i\phi}
 (x_{12}\otimes x_{34}^T) (\Id_1\otimes v_{23}^T \otimes \Id_4)
 \ee
\end{prop}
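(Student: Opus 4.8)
The plan is to mirror the proof strategy already used for the time-reversal symmetry in Proposition~\ref{PropChiral}, adapting it from the Hermitian-conjugate symmetry $U^{(N)}=U^{(N)\dagger}$ to the transpose symmetry $U^{(N)}=U^{(N)T}$. The key observation is that both symmetries map an MPU to another MPU whose generating tensor (after blocking two sites to restore a standard form) must be related to the original by the Fundamental Theorem of MPU (Theorem~\ref{FundamentalMPU}). The only structural difference is that transposition reverses the order of tensor factors and conjugates-then-transposes rather than just conjugating, which is why $v_{23}^T$ (rather than $v_{23}^\dagger$) and $x_{34}^T$ (rather than $x_{34}^\dagger$) appear on the right-hand side of Eq.~(\ref{UUtransposetrick}), and why the allowed phase is a general $e^{i\phi}$ rather than just $\pm1$.

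First I would prove the (if) direction, which is the easy one: assuming Eq.~(\ref{UUtransposetrick}) holds, I would build up $U^{(N)}$ for even $N$ out of the standard-form pieces $u$ and $v$, apply (\ref{UUtransposetrick}) once and then its transposed version, exactly as in the diagram (\ref{eq:dagger-sym-cond}) of Proposition~\ref{PropChiral}. The phases $e^{i\phi}$ and $e^{-i\phi}$ from the two applications cancel, and the interior $x_{12}\otimes x_{34}^T$ factors telescope and annihilate against their transposes, leaving $U^{(N)}=U^{(N)T}$. This is a direct graphical manipulation with no real obstacle.

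For the (only if) direction I would argue as follows. Since $U^{(N)}=U^{(N)T}$, the tensor $\mathcal U^T$ obtained by transposing physical indices generates the same family of MPU as $\mathcal U$. As in the time-reversal case, $\mathcal U^T$ is not itself in standard form, so I would first block two sites [as in Eq.~(\ref{eq:block-twosites})] to bring both tensors into standard form, and then invoke Theorem~\ref{FundamentalMPU}. Its two relations (\ref{SFuu}) and (\ref{SFvv}) give, for the blocked tensors, one identity expressing $(\Id_1\otimes v_{23}\otimes\Id_4)(u_{12}\otimes u_{34})$ in terms of $(x_{12}\otimes y_{34})(\Id_1\otimes v_{23}^T\otimes\Id_4)$, and a companion identity from (\ref{SFvv}). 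Substituting the transpose of the second into the first yields a constraint of the form $x_{12}y_{12}\otimes y_{34}x_{34}=\openone$, from which $y=e^{i\phi}x^T$ for some phase; one then absorbs the constant into (\ref{UUtransposetrick}).

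The main obstacle I expect is bookkeeping the correct conjugation/transposition pattern in applying Theorem~\ref{FundamentalMPU}. In the time-reversal (Hermitian) case the intertwiner $z$ appears as an honest unitary relating two CFII tensors, and the relation closed with $y=\pm x^\dagger$. Under transposition the relevant symmetry operation is complex-linear rather than antilinear, so the analogue of $\bar x x=\openone$ becomes $x y^T$-type relations in which no complex conjugation pins the phase; hence a full continuous $U(1)$ phase $e^{i\phi}$ survives rather than being quantized to $\pm1$. I would need to track carefully which factors are transposed when the symmetry is applied on the blocked tensor, making sure the index-reversal (the swap of $r$ and $\ell$, already noted in the text) is correctly reflected in the placement of $v_{23}^T$, and confirming that the surviving freedom is exactly a single phase.
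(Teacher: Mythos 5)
Your proposal takes essentially the same route as the paper, whose entire proof of this proposition is the remark that it is ``completely analogous to that of Proposition~\ref{PropChiral}, just that the phase cannot be fixed to $\pm 1$''; you carry out exactly that adaptation, correctly identifying both why the phase cancels in the (if) direction and why the trace/conjugation argument that quantizes $\sigma$ to $\pm1$ in the time-reversal case degenerates to a vacuous constraint here, leaving a full $U(1)$ freedom. No gaps.
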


\begin{proof} The proof is completely analogous to that of Proposition
\ref{PropChiral}, just that the phase cannot be fixed to $\pm 1$. This can
also be understood from the fact that in the derivation of the ``if''
direction, the phase cancels out, unlike in the proof of
Prop.~\ref{PropChiral} where it adds up.
\end{proof}

Note that the phase $e^{i\phi}$ in Eq.~(\ref{UUtransposetrick}) bears no
relevance for the classification of phases: It can be continuously changed
to $\phi=0$ while keeping the symmetry under transposition by replacing
$u\leadsto e^{-i\theta/2}u$ and changing $\theta$ from $0$ to $\phi$.

\subsection{\label{othersymmetries}
Other symmetries}

We conclude this section by noting that other important symmetries
involving any local unitary operator, $Q=Q^\dagger=Q^T$, can be reduced to
the ones above. For instance:
\begin{subequations}
 \label{SSymmetries}
 \bea
 \label{SSymmetries-dagger}
 \left[Q^{\otimes N}\right] U^{(N)\dagger} \left[Q^{\otimes N}\right]&=& U^{(N)}\\
 \label{SSymmetries-transpose}
  \left[Q^{\otimes N}\right] U^{(N)T} \left[Q^{\otimes N}\right]&=& U^{(N)}
 \eea
 \end{subequations}
can be reduced to the previous ones by defining $\tilde {\cal U}=  {\cal
U}Q$, so that the MPU generated by those new tensors are invariant under
time reversal and transposition. The most interesting case is the one where
$Q=\mathbb S$, i.e.\ the swap operator
\be
\mathbb S(a \otimes b)\mathbb S=b\otimes a
\ee
exchanging the two physical indices in the
SF (\ref{StandardForm}), as we will see in some examples.

We conclude this section by just mentioning that a similar approach to the
one developed here, based on the fundamental theorem of MPUs can be
applied to other symmetries, like reflection or global symmetries.


\section{\label{sec:sym-conjugation}Symmetries: Equivalences}

In the following, we consider the equivalence of MPUs under
symmetry constraints. For the case of conjugation, we will classify all possible phases.
In the other cases, we will give necessary conditions for two MPUs to be in the same
phase. In the next section we will illustrate these results with some examples.

In full analogy to the definition of strict equivalence
(Def.~\ref{def:strictly-equivalent-tensors}) and equivalence
(Def.~\ref{def:equivalent-tensors}) of MPU tensors, we will consider the
following two notions of equivalence under symmetry:

\begin{defn}
\label{def:strictly-equivalent-symmetry}
Two tensors ${\cal U}$ and ${\cal V}$ are {\em strictly equivalent under
the symmetry $\mathcal S$} if $d_a=d_b$ and there exist a continuous path
${\cal W}(p)$, $p\in[0,1]$, not necessarily in canonical form, 
such that ${\cal W}(0)={\cal U}$ and
${\cal W}(1)={\cal V}$, where the MPU $W(p)^{(N)}$ defined by $\mathcal
W(p)$ is invariant under $\mathcal S$, $\mathcal
S[W(p)^{(N)}]=S[W(p)^{(N)}]$.
\end{defn}

\begin{defn}
\label{def:equivalent-symmetry}
Two tensors ${\cal U}$ and ${\cal V}$ are {\em equivalent under the
symmetry $\mathcal S$} if there exists some $k\in \mathds{N}$ and $p_a$,
$p_b$ such that ${\cal U}^{(p_a)}_{k}$ and ${\cal V}^{(p_b)}_{k}$ (the
tensors obtained by blocking $k$ sites and adding ancillas of dimension
$p_a$ and $p_b$, respectively) are strictly equivalent under the symmetry
$\mathcal S$.
\end{defn}

As we will see below, unlike the case without symmetries, these definitions give different results. The motivation of keeping both definitions is that the second involve that $N$ is a multiple of $k$, and thus may be restrictive when considering finite systems.
Since we will assume SF for all the MPUs we consider here, we will characterize the equivalence under symmetries in terms of the unitaries $u,v$ that define them:

\begin{lem}
\label{lem:equivalent-symmetry}
Two MPU, $U_{1,2}^{(N)}$ described by simple tensors are strictly
equivalent under the symmetry $\mathcal S$ iff there exist two continuous families
of unitary matrices $\tilde u(p),\tilde v(p)$, $p\in[0,1]$ such that $\tilde u(0)=\tilde u_1$,
$\tilde v(0)=\tilde v_1$, generate $U_{1}^{(N)}$, and  $\tilde u(1)=\tilde u_2$, and $\tilde v(1)=\tilde v_2$ generate $U_{2}^{(N)}$
and the MPU generated by $\tilde u(p),\tilde v(p)$ are invariant under $\mathcal S$.
\end{lem}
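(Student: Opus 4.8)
The plan is to read this lemma as the symmetry-aware repackaging of Corollary~\ref{coruvSF}, which already converts an arbitrary continuous path of MPU tensors into a continuous path of standard-form unitaries $u(p),v(p)$. The only thing the symmetry adds is the observation that both descriptions generate the \emph{same} MPU at each value of $p$, so invariance under $\mathcal S$ transfers automatically between the two pictures. I would therefore prove the two implications separately, in each case transporting the symmetry for free along this identification.

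For the ``if'' direction I would start from the continuous families $\tilde u(p),\tilde v(p)$ and assemble the standard-form tensor $\mathcal W(p)$ via (\ref{StandardForm}). Since the standard form is a fixed continuous function of the unitaries $u,v$ (its entries are bilinear in those of $u$ and $v$), the map $p\mapsto\mathcal W(p)$ is a continuous path of tensors, trivially allowed to be ``not necessarily in CF''. Its generated MPU equals the MPU generated by $\tilde u(p),\tilde v(p)$, hence is invariant under $\mathcal S$ for every $p$, and at the endpoints it is $U_1^{(N)}$ and $U_2^{(N)}$. As we work throughout in standard form, we may take $\mathcal U$ and $\mathcal V$ to be the standard-form tensors of $\tilde u_1,\tilde v_1$ and $\tilde u_2,\tilde v_2$, so that $\mathcal W(0)=\mathcal U$ and $\mathcal W(1)=\mathcal V$ and Definition~\ref{def:strictly-equivalent-symmetry} is met.

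For the ``only if'' direction I would feed the strict-equivalence path $\mathcal W(p)$ into Corollary~\ref{coruvSF}, obtaining $k_0\le D^4$ and continuous paths $u(p),v(p)$ that put the MPU on $2k_0$ blocked sites in standard form. Because $u(p),v(p)$ generate exactly this MPU, and $W(p)$ is $\mathcal S$-invariant on any number of sites, in particular on the $2k_0$-blocked sites, the MPU generated by $u(p),v(p)$ is $\mathcal S$-invariant throughout. Taking $\tilde u(p)=u(p)$ and $\tilde v(p)=v(p)$ then yields the required families, with endpoints generating $U_1^{(N)}$ and $U_2^{(N)}$.

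The step I expect to be most delicate is the blocking bookkeeping at the endpoints: Corollary~\ref{coruvSF} produces standard-form data only after blocking $2k_0$ sites, whereas the simple tensors underlying $U_1,U_2$ carry their standard forms after blocking two sites, so the $\tilde u(0)$ obtained is, strictly speaking, the SF unitary on $2k_0$-blocked sites rather than $\tilde u_1$ itself. I would reconcile this by noting that the simple property and the index are invariant under further blocking (the Corollary following Theorem~\ref{ThmFund1}, and Proposition~\ref{index-well-defined}), so the larger unitaries are legitimate SF data for the same MPU; any remaining mismatch is a local gauge, which by Theorem~\ref{FundamentalMPU} is implemented by unitaries and can therefore be switched off along a symmetry-preserving segment of the path, since along such a segment the generated MPU stays constantly equal to the already $\mathcal S$-invariant $U_1^{(N)}$ (resp.\ $U_2^{(N)}$).
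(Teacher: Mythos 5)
Your proposal is correct and follows essentially the same route as the paper, which proves this lemma in one line by invoking Corollary~\ref{coruvSF}; your two directions are just the explicit unpacking of that reference (building the standard-form tensor path for ``if'', and feeding the strict-equivalence path into the corollary for ``only if''). The extra care you take with the blocking/gauge bookkeeping at the endpoints is a legitimate filling-in of a detail the paper leaves implicit, and your resolution (connecting the residual local gauge to the identity along a segment where the generated MPU, and hence the symmetry, is unchanged) is sound.
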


\begin{proof}
The proof follows directly from Corollary \ref{coruvSF}.
\end{proof}

\subsection{Equivalences under Conjugation}

We consider two simple tensors, ${\cal U}_{1,2}$ generating MPUs that are
invariant under conjugation and are interested in determining when they
are equivalent or strictly equivalent under that symmetry. In SF, they are
generated by some $u_1,v_1$ and $u_2,v_2$, which according to
Theorem~\ref{FundamentalMPU} are only fixed up to a unitary gauge
$u_i\leadsto (x\otimes y)u_i$, $v_i=v_i(y^\dagger\otimes x^\dagger)$.
According to Lemma~\ref{lem:equivalent-symmetry}, we thus want to know the
precise conditions under which we can choose $u_1,v_1$ and $u_2,v_2$ such
that they can be connected with a continuous path $u(p),v(p)$ which define
MPUs with the same symmetry.  We will consider two cases: (i) at least one
of $r,\ell$ are odd; (ii) both $r$ and $\ell$ are even.

\subsubsection{At least one of $r$, $\ell$ are odd}

If at least one of $r$, $\ell$ is odd, only Case I in Proposition
\ref{corconj} can occur. 

\begin{thm}
\label{conjphs}
If either $r$ or $\ell$ is odd,
${\cal U}_{1,2}$ are strictly equivalent under conjugation symmetry iff 
\[
 {\rm det}(u_1v_1)={\rm det}(u_2v_2)\ .
\]
\end{thm}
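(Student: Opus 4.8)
The plan is to use the orthogonal normal form of Proposition~\ref{corconj} (Case~I, which is the only possibility when $r$ or $\ell$ is odd) together with the gauge freedom it leaves, and to track a single $\mathds{Z}_2$-valued invariant, $\det(uv):=\det(u)\det(v)$. First I would record its two basic properties. Under the gauge freedom of Theorem~\ref{FundamentalMPU}, $u\leadsto(x\otimes y)u$ and $v\leadsto v(y^\dagger\otimes x^\dagger)$, the determinants pick up the factors $\det(x)^r\det(y)^\ell$ and its complex conjugate, which cancel since $x,y$ are unitary; hence $\det(uv)$ is gauge invariant. Evaluating it on an orthogonal representative $u',v'$ furnished by Proposition~\ref{corconj}, where $\det u',\det v'\in\{\pm1\}$, shows that $\det(uv)\in\{\pm1\}$ for every conjugation-invariant MPU in Case~I.

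For the ``only if'' direction I would invoke Lemma~\ref{lem:equivalent-symmetry}: strict equivalence under conjugation provides a continuous path $u(p),v(p)$ of unitaries, of fixed dimensions $r,\ell$ (Proposition~\ref{prop:continuity-index}), whose MPU is conjugation invariant for all $p$. Since one of $r,\ell$ stays odd, the path never leaves Case~I, so $p\mapsto\det(u(p)v(p))$ is a continuous map into the discrete set $\{\pm1\}$, hence constant, giving $\det(u_1v_1)=\det(u_2v_2)$.

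For the ``if'' direction I would reduce both MPUs to orthogonal representatives $u_i',v_i'\in O(d^2)$ and connect them by a path of \emph{orthogonal} pairs; each such pair, having the standard-form dimensions, generates a simple MPU (Theorem~\ref{ThmFund1}) that is conjugation invariant (take $x=y=\Id$ in Proposition~\ref{prop1conj}). The space $O(d^2)\times O(d^2)$ has four path components labelled by $(\det u',\det v')\in\{\pm1\}^2$, while the hypothesis fixes only the product $\det u'\det v'=\det(u_iv_i)$; a priori the two representatives may therefore sit in the components $(+,+)$ and $(-,-)$, or $(+,-)$ and $(-,+)$. The residual local-orthogonal gauge $u'\leadsto(O_\ell\otimes O_r)u'$, $v'\leadsto v'(O_r^T\otimes O_\ell^T)$ resolves this: it multiplies \emph{both} $\det u'$ and $\det v'$ by the same sign $\det(O_\ell)^r\det(O_r)^\ell$, and since $r$ or $\ell$ is odd this sign can be set to $-1$. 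Thus I can move $(+,+)\leftrightarrow(-,-)$ and $(+,-)\leftrightarrow(-,+)$ at will, placing both representatives in a common component of $O(d^2)\times O(d^2)$, where a path of orthogonal pairs exists; Lemma~\ref{lem:equivalent-symmetry} then yields strict equivalence.

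The main obstacle is the parity bookkeeping in the ``if'' direction: one must verify that the local-orthogonal gauge shifts $\det u'$ and $\det v'$ by the \emph{identical} factor $\det(O_\ell)^r\det(O_r)^\ell$, and that this factor can be made $-1$ exactly when $r$ or $\ell$ is odd. This is precisely where the oddness hypothesis enters, and its failure when $r,\ell$ are both even is what forces the separate, finer analysis of Case~II.
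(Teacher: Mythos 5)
Your proposal is correct and follows essentially the same route as the paper's proof: gauge invariance of $\det(uv)$ and its continuity for the ``only if'' direction, and for the ``if'' direction the reduction to orthogonal representatives via Case~I of Proposition~\ref{corconj}, the residual orthogonal gauge (which multiplies both determinants by $\det(O_\ell)^r\det(O_r)^\ell$, made $-1$ using the oddness of $r$ or $\ell$) to align the individual signs, and path-connectedness of the fixed-determinant orthogonal group. Your parity bookkeeping matches the paper's argument exactly.
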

\noindent
Note that 
$\det((x\otimes y)u_i\, v_i(y^\dagger\otimes x^\dagger)) = 
\det(x\otimes y)\times\det(u_i v_i)\det(y^\dagger\otimes x^\dagger) = 
\det(u_iv_i)$, i.e.\ $\det(u_iv_i)$ is gauge invariant (as required).

\begin{proof}
{(only if)} Starting from Proposition \ref{prop1conj} it immediately
follows that ${\rm det}(u_iv_i)=\pm 1$. Since the sign of this determinant
is a continuous function of $u$ and $v$, we conclude that if they are
equivalent this determinant must be the same. 
\\[0.5ex]
(if) We will explicitly construct the continuous path $u(p)$, $v(p)$. To
this end, we first choose a suitable gauge for the SF 
$u_i$ and $v_i$, $i=1,2$ of the endpoints $\mathcal U_{i}$.
 Since $r$ and $\ell$ are not both even, we are
in Case I of Proposition \ref{corconj}, i.e., we can choose a gauge where
$u_i$ and $v_i$ are orthogonal. Furthermore, if $\det(u_1)=-\det(u_2)$
[and thus
${\rm det}(v_1)=-{\rm det}(v_2)$], and w.l.o.g.\ $r$ odd, we can find an
orthogonal $\ell\times \ell$ matrix, $O_\ell$, with $\det(O_\ell)=-1$,
and change the gauge as $u_2\mapsto (O_\ell\otimes \openone_r) u_2$ and
$v_2\mapsto v_2 (\openone_r\otimes O_\ell^T)$, which multiplies their
determinants by $\det(O_\ell)^r=-1$. 

We thus see that we can always find a gauge with $u_i$, $v_i$ orthogonal
and $\det(u_1)=\det(u_2)$, $\det(v_1)=\det(v_2)$.  Then, since the
group of orthogonal transformations with a fixed determinant is simply
connected, we can always find a smooth orthogonal interpolation $u(p)$,
$v(p)$ between them.  But the fact that it is orthogonal, according to
Proposition \ref{prop1conj}, implies that it satisfies the symmetry, so
that this finishes the proof.  
\end{proof}

\begin{cor}
All ${\cal U}$ are equivalent under conjugation symmetry.
\end{cor}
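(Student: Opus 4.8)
The plan is to combine Theorem~\ref{conjphs} with a stabilization argument. By Theorem~\ref{conjphs}, in the present case (at least one of $r,\ell$ odd) the strict-equivalence classes under conjugation are labelled by the single invariant $\epsilon:=\det(uv)\in\{+1,-1\}$. Hence two conjugation-invariant MPU sharing the same index and the same $\epsilon$ are already strictly equivalent, and it suffices to show that the two sectors $\epsilon=+1$ and $\epsilon=-1$ merge under the weaker notion of equivalence of Definition~\ref{def:equivalent-symmetry}. First I would record why the mild operations cannot help directly: blocking leaves $\det(uv)$ untouched (as one sees by comparing $\det U^{(N)}$ computed from the block and super-block brick-wall decompositions), and tensoring with an identity ancilla $\openone_{p}$ multiplies $u$ by the identity's SF unitary, giving $\det(u'v')=\epsilon^{\,p^2}$, which still equals $\epsilon$ for $p$ odd. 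Thus $\epsilon$ can only be affected by changing the parity of the physical dimension.

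Next I would stabilize by attaching an even-dimensional ancilla. Given two conjugation-invariant tensors $\mathcal U_1,\mathcal U_2$ of the same index, choose $p_a,p_b$ \emph{even} with $p_ad_1=p_bd_2$ (always possible, e.g.\ by doubling any solution of $p_ad_1=p_bd_2$) and pass to $\mathcal U_1^{(p_a)}$ and $\mathcal U_2^{(p_b)}$. Tensoring with $\openone_{p_{a,b}}$ preserves conjugation invariance (the identity is real) and, by Theorem~\ref{IndexTh}, preserves the index. Since the ranks multiply under tensoring, the new ranks are $r'=p\,r$ and $\ell'=p\,\ell$ with $p$ even, so both are now even and we have landed in Case~II of Proposition~\ref{corconj}, independently of the original value of $\epsilon$.

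It remains to show that in Case~II there is no surviving discrete obstruction, i.e.\ that any two conjugation-invariant MPU of the same index with $r,\ell$ even are strictly equivalent; this is the crux and the main obstacle. The key point is that in Case~II the residual gauge group is no longer orthogonal but \emph{symplectic}: by Proposition~\ref{corconj} the remaining freedom is $u'\mapsto(S_\ell\otimes S_r)u'$ with $S_x^T\Sigma_xS_x=\Sigma_x$, and the group of such $S_x$ is connected, in contrast to the orthogonal group whose two components produced the $\mathbb Z_2$ invariant $\epsilon$ in Case~I. Consequently the sign $\epsilon$ is no longer a locally constant function on the space of symmetric SF representatives, and one can build a conjugation-symmetric continuous path joining the two stabilized tensors. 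Concretely, I would use Corollary~\ref{cor:sympl-to-orthogonal} to write the Case~II unitaries as $u'=(\openone\otimes R)\tilde u$ with $\tilde u$ orthogonal, deform $\tilde u$ within the connected symplectic gauge orbit, and read off the resulting symmetric path via Lemma~\ref{lem:equivalent-symmetry}. This shows that $\mathcal U_1^{(p_a)}$ and $\mathcal U_2^{(p_b)}$ are strictly equivalent under conjugation once blocked into SF, so that by Definition~\ref{def:equivalent-symmetry} all conjugation-invariant MPU of a given index (equivalently, of given $r,\ell$) are equivalent under conjugation symmetry, as claimed.
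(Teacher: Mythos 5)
Your proof goes wrong at its very first substantive claim: blocking does \emph{not} leave $\det(uv)$ untouched --- it trivializes it, and this is precisely the mechanism the paper's own argument relies on. Writing the blocked SF as $u^{(2)}_{11',22'}=v_{1'2}u_{11'}u_{22'}$ and $v^{(2)}_{11',22'}=v_{1'2}$, each factor acts on a $d^4=(r\ell)^2$-dimensional space, so $\det(u^{(2)}v^{(2)})=\det(uv)^{2r\ell}=1$ irrespective of the sign of $\det(uv)$, since the exponent $2r\ell$ is always even. Your parenthetical justification via the global determinant is in fact consistent with this and not with your claim: $\det U^{(N)}=(\det(uv))^{Nd^{N-2}}=(\det(u^{(2)}v^{(2)}))^{(N/2)d^{N-4}}$ forces $\det(u^{(2)}v^{(2)})=\det(uv)^{2d^2}$, not $\det(uv)$. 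Once this is recognized, the corollary follows in one line from Theorem~\ref{conjphs}: after a single blocking both tensors carry the invariant $+1$, hence are strictly equivalent, hence the original tensors are equivalent. No ancillas are needed.

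The stabilization route you substitute is also broken in two places. First, tensoring with $\openone_p$ does not land you in Case~II: the case is determined by whether the gauge unitaries $x,y$ of Prop.~\ref{prop1conj} are symmetric or skew-symmetric, not by the parity of $r$ and $\ell$ (evenness of both is only a necessary condition for Case~II). The identity ancilla is real, contributes a symmetric gauge, and therefore keeps a Case~I tensor in Case~I. Second, even if Case~II could be reached, it is not obstruction-free: Theorem~\ref{prop:conjsym-strict-equiv} exhibits the surviving invariants $\det(u)=\pm1$ and $\det(v)=\pm1$ there, and Corollary~\ref{cor:equiv-under-conjugation} shows that Case~I and Case~II remain \emph{inequivalent} even after blocking and adding ancillas. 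Connectedness of the compact symplectic gauge group does not imply triviality, because the discrete invariant lives in the connected components of the set of unitaries satisfying (\ref{sympuv}), not in the components of the gauge orbit acting on a fixed representative.
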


\begin{proof}
Given $u,v$ generating MPU with conjugation symmetry, if we block once we
will have that the new $u^{(2)},v^{(2)}$ can be written as
$u^{(2)}_{11',22'}= v_{1'2}u_{11'}u_{22'}$ and
$v^{(2)}_{11',22'}=v_{1'2}$, so that ${\rm det}(u^{(2)}v^{(2)})={\rm det}(
uv)^{2 r\ell}=1$. By applying the previous proposition we arrive at the
conclusion.  \end{proof}

\subsubsection{$r$ and $\ell$ are both even}

Now, both Case I and Case II in Proposition \ref{corconj} can occur. Since
we can continuously change the gauge for $u,v$ without affecting the
symmetry, we will assume that $u,v$ fulfill (\ref{ortuv}) and
(\ref{sympuv}) in Case I and II, respectively. With this choice, 
\be
 {\rm det}(u_i)=\pm1\ ,\quad {\rm det}(v_i)=\pm1 \,
 \ee
as follows directly from (\ref{ortuv}) and (\ref{sympuv}). Moreover,
$\det(u_i)$ and $\det(v_i)$ do not depend on the specific choice of $u_i$
and $v_i$, since the determinant of the remaining orthogonal or sympectic
degree of freedom is $\det(O_\ell\otimes O_r)=\det(S_\ell\otimes S_r)=+1$
(since $r$ and $\ell$ are even).  We will henceforth use this choice of
$u_i$ and $v_i$ for the SF of $\mathcal U_i$.

\begin{thm} 
\label{prop:conjsym-strict-equiv}
If $r$ and $\ell$ are both even, two MPU
${\cal U}_{1,2}$, with $u_i$ and $v_i$ for their SF fulfilling 
(\ref{ortuv}) or (\ref{sympuv}), respectively,
are strictly equivalent under conjugation symmetry iff
\[
 \det(u_1)=\det(u_2), \quad \det(v_1)=\det(v_2)
\]
and moreover they correspond to the same case in Proposition
\ref{corconj}.
\end{thm}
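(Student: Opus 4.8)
The plan is to prove both directions. First I would note the setup: by Proposition~\ref{corconj}, within each case (I or II) the unitaries $u_i,v_i$ are fixed up to a local gauge that is orthogonal (Case I) or symplectic (Case II), and by the remark preceding the statement, $\det(u_i)$ and $\det(v_i)$ are gauge-independent invariants taking values $\pm1$. For the ``only if'' direction, I would invoke Lemma~\ref{lem:equivalent-symmetry}, which reduces strict equivalence to the existence of a continuous path $u(p),v(p)$ of symmetric MPUs connecting the endpoints. Along such a path the ``case'' (I vs.\ II) is a discrete invariant: Case~II requires $r,\ell$ even and, more importantly, the sign $e^{i\phi}=\pm1$ in Prop.~\ref{prop1conj} (equivalently whether $x,y$ are symmetric or skew-symmetric) cannot jump continuously, so the two endpoints must lie in the same case. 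Within a fixed case, $\det(u(p))$ and $\det(v(p))$ are continuous functions taking discrete values $\pm1$, hence constant, forcing $\det(u_1)=\det(u_2)$ and $\det(v_1)=\det(v_2)$.

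For the ``if'' direction I would construct the path explicitly, treating the two cases separately. In Case~I, with $u_i,v_i$ chosen orthogonal, the matched determinants place both $(u_1,u_2)$ and $(v_1,v_2)$ in the same connected component of the orthogonal group (recall $\mathrm{O}(n)$ has two components distinguished by the determinant, each of which is connected). I would then interpolate $u(p)$ and $v(p)$ within these fixed-determinant components; but since $u$ and $v$ are not independent (they jointly generate a single MPU via Theorem~\ref{FundamentalMPU}), the interpolation must be done consistently. The cleanest route is to interpolate the local gauge: having fixed orthogonal representatives with equal determinants, the residual freedom is exactly a local orthogonal gauge $O_\ell\otimes O_r$, and I would connect the two representatives by a continuous orthogonal gauge path, which by Prop.~\ref{prop1conj} preserves the conjugation symmetry throughout.

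In Case~II, the analogous statement uses the symplectic group. With $u_i,v_i$ chosen to satisfy \eqref{sympuv}, the residual gauge is a local symplectic transformation $S_\ell\otimes S_r$ with $S_x^T\Sigma_x S_x=\Sigma_x$. The key fact I would use is that the symplectic group $\mathrm{Sp}(2m,\mathbb{R})$ (or its relevant unitary-symplectic refinement) is connected, so any two symplectic gauges are joined by a continuous symplectic path, again preserving the symmetry by Prop.~\ref{corconj}. An alternative, possibly more transparent, route is to use Corollary~\ref{cor:sympl-to-orthogonal} to rewrite $u',v'$ via the fixed unitary $R$ in terms of orthogonal $\tilde u,\tilde v$, reducing the connectivity question in Case~II to the orthogonal connectivity already understood in Case~I.

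The main obstacle I expect is the coupling between $u$ and $v$: they cannot be deformed independently because together they must generate a genuine MPU at every point of the path, and the symmetry condition \eqref{conjugation} links their deformations through the \emph{same} gauge $(x,y)$. The careful part of the argument is therefore to phrase the interpolation as a path in the gauge group (orthogonal or symplectic) rather than as independent paths for $u$ and $v$, thereby automatically preserving both the MPU structure (via Theorem~\ref{FundamentalMPU}) and the conjugation symmetry (via Propositions~\ref{prop1conj} and~\ref{corconj}) at every intermediate $p$. Establishing that the relevant gauge group has exactly the connected-component structure dictated by the determinants (two components in Case~I, one in Case~II) is what makes the determinant condition both necessary and sufficient.
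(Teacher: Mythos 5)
Your overall strategy (necessity from the continuity of a discrete invariant, sufficiency from connectedness of fixed-determinant components) has the same shape as the paper's, but both directions contain genuine gaps. In the ``only if'' direction, the step ``within a fixed case, $\det(u(p))$ and $\det(v(p))$ are continuous functions taking discrete values $\pm1$'' is precisely the hard part, and you have not established it. Lemma~\ref{lem:equivalent-symmetry} gives a continuous path of unitaries $u(p),v(p)$, but these are \emph{not} in the normalized gauge of Prop.~\ref{corconj}; the determinant of a general unitary along the path is a continuously varying phase, not $\pm1$. What takes values $\pm1$ is the determinant of a \emph{normalized representative} at each $p$, and to conclude it is continuous in $p$ you must exhibit a continuous normalizing gauge. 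The paper does this by choosing $x(p)\otimes y(p)=\overline{u(p)}\,u(p)^\dagger$ continuously, factoring $x(p)=S_x(p)^T\tilde\Lambda_x(p)S_x(p)$ via Lemma~\ref{lemma:conjclass-normalform-continuous}, and gauging with $S_x(p)\otimes S_y(p)$; the determinant of the gauged $u'(p)$ is continuous only because it involves $\det[S_x(p)]^r\det[S_y(p)]^\ell$ with $r,\ell$ even (only $\det[S]^2$ is continuous, $S$ being a square root of $x$). Your argument never invokes the evenness of $r$ and $\ell$ in this direction, which is a warning sign: for odd $r$ or $\ell$ the individual determinants are \emph{not} invariants and only $\det(uv)$ survives (Theorem~\ref{conjphs}).

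In the ``if'' direction, your ``cleanest route'' of interpolating the local gauge cannot work: a gauge path connects different SF representatives of the \emph{same} MPU (Theorem~\ref{FundamentalMPU}), whereas $\mathcal U_1$ and $\mathcal U_2$ are in general distinct MPUs, so $u_1$ and $u_2$ are not gauge-related. The coupling between $u$ and $v$ that worries you is not an obstacle: the SF (\ref{StandardForm}) is a depth-two circuit, so \emph{any} pair of unitaries $(u,v)$ of the right dimensions defines a valid MPU, and if both are real (orthogonal) the MPU is conjugation-symmetric by Prop.~\ref{prop1conj}. The correct construction---which your first instinct actually states before you retreat from it---is to interpolate $u_1\to u_2$ and $v_1\to v_2$ \emph{independently} inside the connected fixed-determinant components of the orthogonal group, handling Case II by first stripping off the fixed factor $R$ via Corollary~\ref{cor:sympl-to-orthogonal} as you suggest. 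Finally, your parenthetical ``two components in Case I, one in Case II'' contradicts the very statement you are proving: Case II also splits into classes labelled by $\det(u)$ and $\det(v)$, because the residual symplectic gauge has determinant $+1$ and hence cannot change them.
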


\begin{proof}
(only if)
First, let us assume $\mathcal U_{1,2}$ correspond to different cases in
Prop.~\ref{corconj}.  Then, according to  Prop.~\ref{prop1conj} since
$y\otimes x=v^T v=\mathbb S\bar u u^\dagger \mathbb{S}^\dagger$ (with
$\mathbb S$ the swap), $x\otimes y$ is unitary and continuous in $u,v$,
and thus, we can choose $x$ and $y$ unitary and continuous as well, so
that they cannot change from symmetric to skew-symmetric.  

Let us thus now consider the case where $\mathcal U_{1,2}$ correspond to
the same case in Prop.~\ref{corconj}, and assume there is a path $u(p)$,
$v(p)$ that keeps the symmetry and that connects them continuously. Thus,
$u(p)$ and $v(p)$ fulfill Proposition \ref{prop1conj} with some $x(p)$,
$y(p)$. Since we can
write $x(p)\otimes y(p)=\bar u(p)u(p)^\dagger$, they can also be chosen
continuous. 
Using Lemma~\ref{lemma:conjclass-normalform-continuous},
we can write $x(p)=S_x(p)^T \tilde \Lambda_x(p) S_x(p)$ and $y(p)=S_y(p)^T
\tilde \Lambda_y(p) S_y(p)$ with $S_{x,y}(p)=S_{x,y}(p)^T$ unitary.  Since
$\det[\tilde\Lambda_{x,y}(p)]=1$, we have that
$\det[S_{x,y}(p)]^2$ is continuous as well.
Performing the gauge transformation
\begin{align*}
 u'(p)&=[S_x(p)\otimes S_y(p)]u(p),\\
 v'(p)&=v(p)[S_y(p)^\dagger\otimes S_x(p)^\dagger]
\end{align*}
we have that $u',v'$ generate the same MPU as $u,v$. Furthermore,
$\det[u'(p)]=\det[S_x(p)]^r \det[S_y(p)]^\ell \det[u(p)]$ is continuous
in $p$ since both $r$ and $\ell$ are even, and $u(p)$ is continuous.
But since $\overline{u'(p)}=[\tilde \Lambda_x(p)\otimes \tilde
\Lambda_y(p)] u'(p)$,  and $\det[\tilde\Lambda_{x,y}(p)]
=1$, we have that $\det[u'(p)]=\pm 1$. Thus, we must have
$\det[u'(0)]=\det[u'(1)]$. Finally, following the proof of
Prop.~\ref{corconj}, the gauge transformations $O_{x,y}(0)$,
$O_{x,y}(1)$ bringing $u'(0)$ and $u'(1)$ into the form
$u_1$ and $u_2$ fulfilling (\ref{ortuv}) or (\ref{sympuv}), 
respectively,
are orthogonal, and since $r$ and $\ell$ are even they
do not change the sign of the determinant, such that
$\det[u_1]=\det[u'(0)]=\det[u'(1)]=\det[u_2]$. The result for $v_1$ and
$v_2$ follows accordingly.
\\[0.5ex]
(if) Since both MPU are in the same case of Prop.~\ref{corconj}, we can
proceed as in Theorem~\ref{conjphs}. In Case I, since $u_1$ and $u_2$
are orthogonal and have the same determinant, they can be continuously
connected with $u(p)$ orthogonal. In Case II, we can always write
$u_{i},v_{i}$ in the form (\ref{upvpconj}), i.e.\ $u_i =
(\openone_{r\ell/4}\otimes R)\tilde u_i$ with some orthogonal $\tilde u_i$
with $\det(\tilde u_i)=\det(\tilde v_i)$, and correspondingly for $v_i$,
which can thus be connected by choosing a continuous path of orthogonal
matrices $\tilde u(p)$ and $\tilde v(p)$.
\end{proof}

\begin{cor}
\label{cor:equiv-under-conjugation}
${\cal U}_{1,2}$ are equivalent under conjugation symmetry iff they are in
the same case according to Proposition~\ref{corconj}.
\end{cor}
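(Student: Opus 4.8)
The plan is to reduce the statement to Theorem~\ref{prop:conjsym-strict-equiv}, which already characterizes \emph{strict} equivalence in the both-even regime, by isolating which of its two invariants survive the additional freedom of blocking and attaching ancillas. I expect the case (Case~I vs.\ Case~II of Proposition~\ref{corconj}) to be the only survivor, while the determinants $\det(u_i),\det(v_i)$ get trivialized. Throughout I work under the standing assumption that $r,\ell$ are both even; when one of them is odd only Case~I occurs and all MPU are equivalent by the corollary following Theorem~\ref{conjphs}, so the statement holds trivially there.

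The two facts I would establish first are: (a) the case is invariant under attaching an ancilla $\Id_p$, and (b) the case is invariant under blocking. Fact (a) is easy: in the canonical gauge of Proposition~\ref{corconj} the ancilla acts trivially on the bond, so the virtual symmetry of $\mathcal U\otimes\Id_p$ is $x\otimes\openone_p$, and $(x\otimes\openone_p)^T=\epsilon\,(x\otimes\openone_p)$ with the same sign $\epsilon=\pm1$. For fact (b) I use the blocked standard form $u^{(2)}_{11',22'}=v_{1'2}u_{11'}u_{22'}$ (as in the corollary after Theorem~\ref{conjphs}) together with the canonical relations $\bar u=(\Sigma_\ell\otimes\Sigma_r)u$ and $\bar v=v(\Sigma_r\otimes\Sigma_\ell)$, where $\Sigma^T=\epsilon\Sigma$, $\Sigma^2=\epsilon\openone$, and $\Sigma=\openone$ in Case~I. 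Conjugating $u^{(2)}$ and tracking where the $\Sigma$'s land, the two \emph{internal} (contracted) legs each carry a factor $\Sigma^2=\epsilon\openone$, contributing an overall harmless scalar $\epsilon^2=1$, whereas the two \emph{external} enlarged virtual legs retain a single $\Sigma_\ell$ and a single $\Sigma_r$, respectively. One thus finds $\overline{u^{(2)}}=(x^{(2)}\otimes y^{(2)})u^{(2)}$ with $x^{(2)}=\Sigma_\ell\otimes\openone_d$, whence $(x^{(2)})^T=\epsilon\,x^{(2)}$, so the case is preserved; iterating gives invariance under blocking by any $k$.

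With (a) and (b) in hand both directions follow quickly. For the ``only if'' direction, if $\mathcal U_1,\mathcal U_2$ are equivalent then for some $k,p_a,p_b$ the tensors $(\mathcal U_1)^{(p_a)}_{k}$ and $(\mathcal U_2)^{(p_b)}_{k}$ are strictly equivalent, hence by Theorem~\ref{prop:conjsym-strict-equiv} in the same case; by (a) and (b) this is the case of $\mathcal U_1$ and of $\mathcal U_2$, which therefore coincide. For the ``if'' direction I do not even need blocking: assuming $\mathcal U_1,\mathcal U_2$ lie in the same case, attach ancillas with $p_ad_a=p_bd_b$ and $p_a,p_b$ even. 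By (a) the case is unchanged, and since $\det(u_i^{(p)})=\det(u_i)^{p^2}\det(u_{\mathrm{anc}})^{d^2}=+1$ (using $p$ even, $d$ even, and $\det u_{\mathrm{anc}}=\pm1$ in the orthogonal gauge of the trivial ancilla), and likewise for $v$, the two determinant invariants now agree. Theorem~\ref{prop:conjsym-strict-equiv} then yields strict equivalence of the ancilla-augmented tensors, i.e.\ $\mathcal U_1$ and $\mathcal U_2$ are equivalent.

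The one genuinely non-routine step is fact (b), the case-invariance under blocking. The subtlety is precisely to keep track of which $\Sigma_\ell,\Sigma_r$ factors are contracted away (producing the scalar $\epsilon^2=1$) and which survive on the enlarged virtual legs (producing a \emph{single} $\Sigma$ tensored with an identity, and hence still skew in Case~II). It is this bookkeeping that defeats the naive expectation that the blocked symmetry should look like $x\otimes y$, which would always be symmetric, and that makes Case~II robust; the same computation reproduces Case~I upon setting every $\Sigma=\openone$. Everything else -- fact (a), the determinant count, and the two assembly arguments -- is a direct calculation given Theorems~\ref{FundamentalMPU} and~\ref{prop:conjsym-strict-equiv} and Proposition~\ref{corconj}.
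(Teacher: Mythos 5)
Your strategy is the same as the paper's: reduce the statement to Theorem~\ref{prop:conjsym-strict-equiv} by showing that the Case~I/Case~II dichotomy is an invariant of blocking (and of adding ancillas) while the determinant invariants become trivial. Your fact~(b) -- the bookkeeping showing that in $\overline{u^{(2)}}$ the two contracted legs each contribute $\Sigma^2=\epsilon\openone$, i.e.\ an overall factor $\epsilon^2=1$, while a single $\Sigma_\ell$ and a single $\Sigma_r$ survive on the outer enlarged legs -- is precisely the step the paper declares ``straightforward to check'' (recording $\Sigma_\ell^{(2)}=\Sigma_\ell\otimes\openone$ and $\Sigma_r^{(2)}=\openone\otimes\Sigma_r$), and your ``only if'' direction is fine.

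The one step I would not accept as written is the ``if'' direction. You trivialize $\det(u_i)$ and $\det(v_i)$ by attaching ancillas of even dimension $p_a=p_b=p$ to both MPUs. But in Definition~\ref{def:equivalent-tensors} the ancilla dimensions are the \emph{coprime} pair satisfying $p_ad_a=p_bd_b$; since the two MPUs here have the same physical dimension, this forces $p_a=p_b=1$, so no ancillas may be attached and your construction certifies equivalence only for a more permissive notion than the one defined. The repair is immediate and is exactly what the paper does: block once. With $u^{(2)}_{11',22'}=v_{1'2}u_{11'}u_{22'}$ and $v^{(2)}_{11',22'}=v_{1'2}$ one finds $\det(u^{(2)})=(\det(u)^2\det(v))^{r\ell}=1$ and $\det(v^{(2)})=\det(v)^{r\ell}=1$ because $r\ell$ is even, while by your own fact~(b) the case is unchanged; Theorem~\ref{prop:conjsym-strict-equiv} applied to the blocked tensors then gives strict equivalence, hence equivalence. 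With that substitution the argument is complete and coincides with the paper's.
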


\begin{proof}
First, choose $u$ and $v$ such that they satisfy (\ref{ortuv}) or
(\ref{sympuv}), respectively.  By blocking once, we obtain new
$u^{(2)}_{11',22'}= v_{1'2}u_{11'}u_{22'}$ and
$v^{(2)}_{11',22'}=v_{1'2}$, so that
$\det(u^{(2)})=(\det(u)^2\det(v))^{r\ell}=1$ and
$\det(v^{(2)})=(\det(v))^{r\ell}=1$.  Moreover, it is straightforward to
check that $u^{(2)}$ and $v^{(2)}$ still satisfy (\ref{ortuv}) or
(\ref{sympuv}) (in the latter case with
$\Sigma_{\ell}^{(2)}=\Sigma_\ell\otimes\openone_r$ and
$\Sigma_{r}^{(2)}=\openone_\ell\otimes \Sigma_r$), and thus by applying
Thm.~\ref{prop:conjsym-strict-equiv}, we arrive at the conclusion.
\end{proof}

\subsection{\label{sec:necessary-cond-time-rev}
Necessary conditions under time reversal symmetry}

In the following, we will consider necessary conditions for
equivalence under symmetries which can be derived from the local
characterization of symmetries in Section~\ref{sec:sym-local-char}. Since the local
characterization of transposition symmetry,
Prop.~\ref{prop:local-char-transposition}, leads to no non-trivial
invariants (the phase in Eq.~(\ref{UUtransposetrick}) can be changed
smoothly while keeping the symmetry, cf.~the comment after the
proposition), the remaining case of interest is time reversal symmetry.

Let us start by recalling the local characterization of time reversal
symmetry in Prop.~\ref{PropChiral} for an MPU in SF,
Eq.~(\ref{UUdaggertrick}), which we here write for conciseness with
subscripts denoting the sites each operator acts on:
\begin{equation}
 v_{23}u_{12}u_{34}=
 \sigma x_{12} x_{34}^\dagger  v_{23}^\dagger
\label{UUdaggertrick-2}
\end{equation}
with $\sigma=\pm1$.   Note that $\sigma$ can be extracted directly from
$u$ and $v$:
\begin{lem}
For a time reversal invariant MPU in SF represented by $u$ and $v$,
$\sigma$ in Eq.~(\ref{UUdaggertrick-2}) is given by
\begin{equation}
\label{eq:sigma-from-trace}
\sigma = \tfrac{1}{d^2}\mathrm{tr}[\mathbb S_{12,34}v_{23}u_{12}u_{34}v_{23}]\ ,
\end{equation}
where $\mathbb S_{12,34}$ swaps sites $12$ with $34$. 
\end{lem}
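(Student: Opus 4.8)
**

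The goal is to extract the scalar $\sigma$ from the relation $v_{23}u_{12}u_{34}= \sigma\, x_{12}x_{34}^\dagger v_{23}^\dagger$ by contracting it against a suitable operator that isolates $\sigma$ and kills the dependence on the unknown unitary $x$. The plan is to multiply both sides of the characterization~(\ref{UUdaggertrick-2}) on the appropriate side by $v_{23}$ and then apply the swap $\mathbb S_{12,34}$, followed by taking a trace; the swap is precisely what allows the $x_{12}\otimes x_{34}^\dagger$ factor to collapse into something trace-invariant.

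First I would take the right-hand side, $\sigma\, x_{12}x_{34}^\dagger v_{23}^\dagger$, right-multiply by $v_{23}$ to turn $v_{23}^\dagger v_{23}=\Id$ into the identity (using that $v$ is unitary by Theorem~\ref{ThmFund1}), leaving $\sigma\, x_{12}x_{34}^\dagger$. Then the expression under the trace on the RHS becomes $\sigma\,\mathbb S_{12,34}\,x_{12}x_{34}^\dagger$. The key computation is that $\mathrm{tr}[\mathbb S_{12,34}\,(x_{12}\otimes x_{34}^\dagger)]$ equals $\mathrm{tr}[x\,(x^\dagger)^{T}]$ or, more to the point, reduces to $\mathrm{tr}[\Id]=d^2$: the standard identity $\mathrm{tr}[\mathbb S\,(A\otimes B)]=\mathrm{tr}[AB]$ for the swap between two copies of a $d^2$-dimensional space gives $\mathrm{tr}[x\,x^{\dagger}]\cdot(\text{appropriate pairing})$, which collapses to $d^2$ after using $x x^\dagger=\Id$. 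This is exactly the normalization $\tfrac{1}{d^2}$ appearing in~(\ref{eq:sigma-from-trace}), so the RHS yields $\sigma$.

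For the left-hand side I would simply substitute the LHS of~(\ref{UUdaggertrick-2}), namely $v_{23}u_{12}u_{34}$, into the trace $\tfrac{1}{d^2}\mathrm{tr}[\mathbb S_{12,34}\,v_{23}u_{12}u_{34}v_{23}]$; by the cyclicity of the trace and the equality~(\ref{UUdaggertrick-2}) this is the same trace evaluated on the RHS, so the two sides agree and the formula follows. The main subtlety — and the step I would be most careful about — is the swap identity $\mathrm{tr}[\mathbb S_{12,34}(x_{12}\otimes x_{34}^\dagger)]=d^2$: one must track that $\mathbb S_{12,34}$ swaps the \emph{composite} site $12$ with the composite site $34$ (each of dimension $d$), so that the pairing produced is $\mathrm{tr}[x\,(x^\dagger)^{?}]$ with the indices contracted so as to give $\mathrm{tr}[xx^\dagger]=\mathrm{tr}[\Id_{d^2}]=d^2$ rather than some $x$-dependent quantity. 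Verifying that the $x$-dependence genuinely cancels (so that $\sigma$ is well-defined and basis-independent) is where the real content lies; everything else is bookkeeping with unitarity of $u$ and $v$ and cyclicity of the trace.
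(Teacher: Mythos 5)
Your proposal is correct and follows essentially the same route as the paper: right-multiply Eq.~(\ref{UUdaggertrick-2}) by $v_{23}$ to obtain $v_{23}u_{12}u_{34}v_{23}=\sigma\,x_{12}x_{34}^\dagger$, then apply $\mathrm{tr}[\mathbb S(a\otimes b)]=\mathrm{tr}[ab]$ to get $\mathrm{tr}[\mathbb S_{12,34}\,x_{12}x_{34}^\dagger]=\mathrm{tr}[xx^\dagger]=d^2$. The only wobble is the passing mention of $\mathrm{tr}[x(x^\dagger)^T]$, which is not what the swap identity yields, but you correctly settle on $\mathrm{tr}[xx^\dagger]$, so the argument stands as written.
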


\begin{proof}
This follows directly from $v_{23}u_{12}u_{34}v_{23}=\sigma\,
x_{12}x_{34}^\dagger$, together with the fact that 
$\mathrm{tr}[\mathbb S(a\otimes b)]=\mathrm{tr}[ab]$ and thus 
$\mathrm{tr}[\mathbb
S_{12,34}x_{12}x_{34}^\dagger]=\mathrm{tr}[xx^\dagger]=d^2$.
\end{proof}

Let us now show that the sign is gauge invariant and
cannot be changed by continuous deformations.

\begin{prop}
\label{prop:sigma-well-defined}
For an MPU $\mathcal U$ in SF represented by $u$ and $v$, $\sigma=\pm1$
given by Eq.~(\ref{UUdaggertrick-2}) is well defined (i.e., independent
of the gauge) and cannot be changed by continuous deformations of
$\mathcal U$ which keep time reversal symmetry.
\end{prop}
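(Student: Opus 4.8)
The plan is to reduce both assertions---gauge independence and robustness---to a single observation: by the preceding Lemma [Eq.~(\ref{eq:sigma-from-trace})], $\sigma$ is given by the explicit trace formula $\tfrac{1}{d^2}\mathrm{tr}[\mathbb S_{12,34}v_{23}u_{12}u_{34}v_{23}]$, which is manifestly a \emph{continuous} function of the pair $(u,v)$, whereas by its definition in Eq.~(\ref{UUdaggertrick-2}) it can only take the two values $\pm1$. A continuous map into the discrete set $\{+1,-1\}$ is constant on any connected set. Hence it suffices, in each case, to exhibit a continuous path of admissible standard-form pairs $(u(t),v(t))$ linking the two representations under comparison, \emph{along which the Lemma applies}---i.e.\ along which every intermediate pair is the SF of a time-reversal-invariant MPU, so that its trace formula genuinely evaluates to $\pm1$ and not merely to some continuous interpolating value.

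For robustness, let $\mathcal U(p)$ be a continuous deformation of MPUs all respecting time reversal. By Corollary~\ref{coruvSF} there is a fixed $k_0$ and a continuous family $u(p),v(p)$ realizing the SF of $\mathcal U(p)$ after blocking $2k_0$ sites (the block size being uniform in $p$). Since each $\mathcal U(p)$ is time-reversal invariant, the Lemma yields $\sigma(p)\in\{\pm1\}$ through Eq.~(\ref{eq:sigma-from-trace}); continuity of $u(p),v(p)$ and of the trace formula makes $\sigma(p)$ continuous, hence constant. For gauge independence, fix a time-reversal-invariant MPU $\mathcal U$ and let $(u,v),(u',v')$ be two SF representations. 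By Theorem~\ref{FundamentalMPU} they differ by a gauge $u'=(a\otimes b)u$, $v'=v(b^\dagger\otimes a^\dagger)$ with $a,b$ unitary. Because the unitary group is path-connected, I would pick continuous paths $a(t),b(t)$ with $a(0)=b(0)=\openone$, $a(1)=a$, $b(1)=b$, and set $u(t)=(a(t)\otimes b(t))u$, $v(t)=v(b(t)^\dagger\otimes a(t)^\dagger)$. Every $(u(t),v(t))$ is a gauge transform of $(u,v)$, hence a valid SF representation of the \emph{same} fixed (time-reversal-invariant) MPU $\mathcal U$; the Lemma therefore applies throughout, and the continuity-plus-discreteness argument forces $\sigma(u',v')=\sigma(u,v)$. (One could instead verify by direct substitution that the insertions $a,b$ and their adjoints cancel in the trace formula, using $b^\dagger b=a^\dagger a=\openone$ on the shared legs together with cyclicity of the trace and the intertwining of $\mathbb S_{12,34}$; but the connectedness route avoids this bookkeeping.)

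The argument thus rests entirely on two inputs that I invoke rather than re-derive: that a continuous, symmetry-preserving deformation of the MPU admits a continuous SF lift (Corollary~\ref{coruvSF}), and that the residual gauge freedom of the standard form is precisely the family of unitaries of Theorem~\ref{FundamentalMPU}, whose gauge group is a product of (connected) unitary groups. Granting these, the crux---and the only genuinely delicate point---is that $\sigma$ is \emph{simultaneously} integer-valued ($\pm1$) and continuous along every path we build, so that the main obstacle is not a computation but the verification that each intermediate pair remains a legitimate SF of a symmetric MPU where the Lemma's value $\pm1$ is guaranteed. I would also flag, for completeness, that independence of the block size should be recorded (paralleling Proposition~\ref{index-well-defined} for the index), which is covered here by the uniform blocking furnished by Corollary~\ref{coruvSF}.
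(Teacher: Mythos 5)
Your proof is correct, and its engine is the same as the paper's: Eq.~(\ref{eq:sigma-from-trace}) expresses $\sigma$ as a continuous function of $(u,v)$ taking values in $\{\pm1\}$, hence constant on connected families. The one place you genuinely diverge is gauge independence: the paper simply verifies by direct substitution that the trace formula is unchanged under $u\leadsto(x\otimes y)u$, $v\leadsto v(y^\dagger\otimes x^\dagger)$, whereas you deduce it from path-connectedness of the gauge group $U(\ell)\times U(r)$, noting that every intermediate gauge transform is again a valid SF of the \emph{same} time-reversal-invariant MPU so that the Lemma applies along the whole path. Both work; the direct check is shorter and computation-free in practice, while your homotopy argument has the virtue of making clear that the only structural inputs are the connectedness of the residual gauge group and the discreteness of $\sigma$. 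You are also more explicit than the paper in invoking Corollary~\ref{coruvSF} to lift a continuous deformation of $\mathcal U$ to a continuous family $(u(p),v(p))$ with uniform blocking, which the paper's two-line proof leaves implicit; that is a worthwhile clarification. One caveat on your closing remark: $\sigma$ is \emph{not} independent of the block size --- the paper's Proposition~\ref{prop:sigma1-and-sigma2-equal} shows it depends on the parity of $k$ (e.g.\ $u=i\openone$, $v=\openone$ has $\sigma^{(1)}=-1$ but $\sigma^{(2)}=+1$), so there is no analogue of Proposition~\ref{index-well-defined} to record here; the proposition under discussion is a statement at a fixed blocking level, which is all your argument needs.
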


\begin{proof}  
It is straightforward to check that  the r.h.s.\ of
(\ref{eq:sigma-from-trace}) is invariant under gauge tranformations
$u\leadsto(x\otimes y)u$, $v\leadsto v(y^\dagger\otimes x^\dagger)$.
Also, (\ref{eq:sigma-from-trace}) implies that $\sigma$ is continuous (and
thus constant) under continuous changes of $u$, $v$ which keep time
reversal symmetry.
\end{proof}

This demonstrates that $\sigma$ is an invariant characterizing
inequivalent classes of MPUs.  In principle, there can be a different
$\sigma$ for each level of blocking, which all form invariants; for
instance, $u=i\openone$ and $v=\openone$ has $\sigma=-1$, while after
blocking $k=2$ sites, we have $u^{(2)}=-\openone$ and $v^{(2)}=\openone$
which have $\sigma^{(2)}=+1$. However, as we now show, the phases
$\sigma\equiv\sigma^{(1)}$ and $\sigma^{(2)}$ obtained in
Eq.~(\ref{UUdaggertrick-2}) after blocking $1$ and $2$ sites,
respectively, completely determine the phase for all other blockings.

\begin{prop}
\label{prop:sigma1-and-sigma2-equal}
Let $\mathcal U$ be a time reversal invariant MPU $\mathcal U$ in SF
described by $u$, $v$, and let the blocked MPU $\mathcal U_k$ be described
by $u^{(k)}$, $v^{(k)}$. Denote by $\sigma^{(k)}$ the phase obtained from
$u^{(k)}$ and $v^{(k)}$ in Eq.~(\ref{UUdaggertrick-2}). Then, 
\[
\sigma^{(k)} = \left\{\begin{array}{l@{\quad}l}
	\sigma^{(1)} & \mbox{if $k$ odd}\\
	\sigma^{(2)} & \mbox{if $k$ even\ .}
    \end{array}\right.
\]
\end{prop}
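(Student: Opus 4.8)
The plan is to compute $\sigma^{(k)}$ from the definition in Eq.~(\ref{UUdaggertrick-2}) applied to the blocked tensors $u^{(k)}$, $v^{(k)}$, and show how it factorizes in terms of the single-site phase $\sigma\equiv\sigma^{(1)}$. The central idea is that blocking $k$ sites amounts to composing the local characterization (\ref{UUdaggertrick-2}) $k$ times, and each application contributes one factor of $\sigma$, so that one expects $\sigma^{(k)}=\sigma^k$ up to a correction coming from how the boundary unitaries $x$ combine.

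First I would recall, as in Prop.~\ref{PropChiral}, that $u^{(k)}$ and $v^{(k)}$ are obtained by concatenating $k$ copies of the one-site SF tensors. I would then iterate Eq.~(\ref{UUdaggertrick-2}): applying it to the leftmost pair produces one factor $\sigma$ together with an $x$ operator at the new boundary, and repeating this across all $k$ blocked sites yields $v^{(k)}_{23}u^{(k)}_{12}u^{(k)}_{34}v^{(k)}_{23}=\sigma^{k}\,x^{(k)}_{12}(x^{(k)}_{34})^\dagger$ for a suitable boundary unitary $x^{(k)}$. Taking the trace against the swap, as in Eq.~(\ref{eq:sigma-from-trace}), and using $\mathrm{tr}[\mathbb S(a\otimes b)]=\mathrm{tr}[ab]$ as in the lemma, would then give $\sigma^{(k)}=\sigma^{k}$. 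Since $\sigma=\pm1$, this immediately yields $\sigma^{(k)}=\sigma^{(1)}$ for odd $k$ and $\sigma^{(k)}=(\sigma^{(1)})^2=+1$ for even $k$; comparing with $k=2$ identifies the even case with $\sigma^{(2)}$, giving the stated dichotomy.

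The main obstacle I anticipate is bookkeeping the boundary unitaries $x^{(k)}$ correctly, since the factor $\sigma$ in (\ref{UUdaggertrick-2}) is entangled with the appearance of $x$ on one site and $x^\dagger$ on the neighbor; one must verify that when the $k$ applications are chained, the accumulated boundary operators are again of the form $x^{(k)}_{12}\otimes(x^{(k)}_{34})^\dagger$ with a single net unitary, rather than leaving residual factors in the bulk that would spoil the clean power $\sigma^k$. A clean way to handle this is to argue structurally: $\sigma^{(k)}$ is defined as a trace that is manifestly multiplicative under the concatenation of the local relation, so the only thing to check is that the phase (which is a scalar) pulls out as $\sigma^k$ while all the operator content reassembles into the single boundary $x^{(k)}$ guaranteed to exist by Prop.~\ref{PropChiral} applied to $\mathcal U_k$. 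Since $\sigma^{(2)}=+1$ is forced algebraically, the content of the proposition is really that $\sigma^{(k)}$ depends only on the parity of $k$, which follows once the multiplicativity $\sigma^{(k)}=(\sigma^{(1)})^k$ is established.
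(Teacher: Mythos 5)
Your argument contains a genuine error: the multiplicativity $\sigma^{(k)}=\sigma^k$ is false, and in particular your conclusion that ``$\sigma^{(2)}=+1$ is forced algebraically'' is contradicted by explicit examples in the paper. The CZX MPU (Example~\ref{example:czx}) has $\sigma^{(1)}=\sigma^{(2)}=-1$, and the modified CZX MPU of Example~\ref{ex:dagger-all-but-czx} has $\sigma^{(1)}=+1$, $\sigma^{(2)}=-1$; all four sign combinations occur. The whole content of the proposition is that $\sigma^{(1)}$ and $\sigma^{(2)}$ are two \emph{independent} invariants, which your claimed formula would collapse to one.

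The step that fails is the assertion that when you chain the local relation (\ref{UUdaggertrick-2}) across the block, ``the operator content reassembles into the single boundary $x^{(k)}$'' at no extra cost. Concretely, applying (\ref{UUdaggertrick-2}) once to the two middle sites of $v^{(2)}u^{(2)}u^{(2)}v^{(2)}$ produces $\sigma^{(1)}\,v_{23}v_{67}u_{12}u_{78}x_{34}x_{56}^{\dagger}$, and for this to have the required form $\sigma^{(2)}X_{1234}X_{5678}^{\dagger}$ one needs the operator $v_{67}u_{78}x_{56}^\dagger$ on the right half to be the adjoint of the operator $v_{23}u_{12}x_{34}$ on the left half \emph{up to a phase} $\zeta$ -- this is Eq.~(\ref{eq:unwind-dagger-trick}) in the paper, with $\zeta=\sigma^{(1)}\sigma^{(2)}$. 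That phase is new data, not determined by $\sigma^{(1)}$: it is precisely where $\sigma^{(2)}$ enters. The paper's proof then iterates (\ref{eq:unwind-dagger-trick}) $k-1$ times to obtain $\sigma^{(k)}=\sigma^{(1)}\zeta^{k-1}$, which gives $\sigma^{(1)}$ for $k$ odd and $\sigma^{(2)}$ for $k$ even. Your final parity statement happens to have the right shape, but the mechanism (one factor of $\sigma$ per site) is wrong, and the intermediate claim $\sigma^{(2k)}=+1$ is false. To repair the proof you must isolate and track the reassembly phase $\zeta$ rather than assume it is trivial.
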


\begin{proof}
Because of Prop.~\ref{prop:sigma-well-defined}, we can choose the
representations $u^{(k)}$, $v^{(k)}$ at will.  For now, we choose
$u^{(2)}_{1234}=v_{23}u_{12}u_{34}$ and $v^{(2)}_{3456}=v_{45}$. Since
blocking preserves the symmetry, we have that 
\begin{equation}
\label{eq:defeq-sigma1}
v_{45}u_{34}u_{56}v_{45}=\sigma^{(1)} x_{34} x_{56}^\dagger
\end{equation}
(where we use site labels $3,4,5,6$) and
\begin{align*}
\sigma^{(2)} X_{1234} X_{5678}^\dagger 
&=
v^{(2)}_{3456} u^{(2)}_{1234} u^{(2)}_{5678} v^{(2)}_{3456}
\\
&=
v_{23}v_{45}v_{67}u_{12}u_{34}u_{56}u_{78}v_{45}
\\
&\stackrel{\eqref{eq:defeq-sigma1}}{=}
\sigma^{(1)}
v_{23}v_{67}u_{12}u_{78} x_{34} x_{56}^\dagger\ .
\end{align*}
From this, we infer (by equating the terms acting on sites $1234$ with
$X_{1234}$) that
\begin{equation}
\label{eq:unwind-dagger-trick}
\zeta
[(\openone\otimes v\otimes\openone)(u\otimes x)]^\dagger = 
(\openone\otimes v\otimes\openone)(x^\dagger\otimes u)\ ,
\end{equation}
where $\zeta=\sigma^{(2)}/\sigma^{(1)}=\sigma^{(1)}\sigma^{(2)}$. Now
consider $k$ blocked sites, where we choose 
\begin{align*}
u'\equiv u^{(k)} 
&= 
(\openone\otimes v^{\otimes k-1}\otimes\openone)(u^{\otimes k})
\\
v'\equiv v^{(k)} 
&=
\openone^{\otimes (k-1)} \otimes v\otimes \openone^{\otimes (k-1)}
\end{align*}
Eq.~(\ref{UUdaggertrick-2}) with $k$ blocked sites now reads
\[
X\otimes Y = (\openone^{\otimes k}\otimes v'\otimes\openone^{\otimes k})(u'\otimes u')
(\openone^{\otimes k}\otimes v'\otimes\openone^{\otimes k})\ ,
\]
where $\sigma^{(k)}$ can be obtained from $Y=\sigma^{(k)}X^\dagger$ (this
relation must hold with some phase as blocking preserves the symmetry). By
using (\ref{eq:defeq-sigma1}) on the two middle sites, we find that 
\begin{align*}
X&=
(\openone\otimes v^{\otimes(k-1)}\otimes\openone)
	    (u^{\otimes(k-1)}\otimes x)\ ,
\\
Y&=\sigma^{(1)}
(\openone\otimes v^{\otimes(k-1)}\otimes\openone)
	    (x^\dagger\otimes u^{\otimes(k-1)})\ .
\end{align*}
We can now sequentially apply (\ref{eq:unwind-dagger-trick}) to relate
$X$ and $Y$: Consider
\begin{align*}
Y_s &= \sigma^{(1)}\zeta^s
    (\openone^{\otimes (2s+1)}\otimes  v^{\otimes(k-1-s)}\otimes\openone)
    \times
\\
&\qquad 
    ( u^{\dagger\otimes s}\otimes x^\dagger \otimes  u^{\otimes (k-1-s)})
    (\openone\otimes  v^{\dagger\otimes s}\otimes\openone^{\otimes (2k-2s-1)})
\end{align*}
for $s=0,\dots,k-1$,
and note that by applying (\ref{eq:unwind-dagger-trick}), we can show that
$Y_s=Y_{s+1}$.  We thus have that
\[
Y=Y_0=\dots=Y_{k-1}=\sigma^{(1)}\zeta^{k-1}X^\dagger\ ,
\]
and thus $\sigma^{(k)} = \sigma^{(1)}\zeta^{k-1}$ as claimed.
\end{proof}

\begin{cor}
Consider two MPU $\mathcal U_1$ and $\mathcal U_2$ in SF. For $\mathcal
U_1$ and $\mathcal U_2$ to be strictly equivalent, it is necessary that
both $\sigma^{(1)}$ and $\sigma^{(2)}$
(cf.~Prop.~\ref{prop:sigma1-and-sigma2-equal}) for the two MPU are equal.
For $\mathcal U_1$ and $\mathcal U_2$ to be equivalent, it is necessary
that $\sigma^{(2)}$ for the two MPU is equal.
\end{cor}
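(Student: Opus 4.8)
The plan is to derive both necessity statements purely from the invariance properties of the phases $\sigma^{(k)}$ that have already been established: the gauge- and deformation-invariance of $\sigma$ (Prop.~\ref{prop:sigma-well-defined}), the reduction of all blocking levels to the two phases $\sigma^{(1)},\sigma^{(2)}$ (Prop.~\ref{prop:sigma1-and-sigma2-equal}), and the elementary fact that $\sigma$ is multiplicative under tensor products, which is immediate from the trace formula (\ref{eq:sigma-from-trace}) since $u$, $v$ and the swap all factorize and the physical dimension multiplies.

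First I would treat strict equivalence. By Lemma~\ref{lem:equivalent-symmetry} (itself a consequence of Corollary~\ref{coruvSF}), strict equivalence of $\mathcal U_1$ and $\mathcal U_2$ under $\mathcal S$ is the same as the existence of a continuous path $u(p),v(p)$ of unitaries, all generating MPU invariant under time reversal, with endpoints the SF data of $\mathcal U_1$ and $\mathcal U_2$. Along this path Prop.~\ref{prop:sigma-well-defined} tells us that $\sigma^{(1)}$ is continuous; being $\pm1$-valued it is constant, so $\sigma^{(1)}(\mathcal U_1)=\sigma^{(1)}(\mathcal U_2)$. To control $\sigma^{(2)}$ I would block the \emph{entire} path by two sites: blocking is a continuous contraction of tensors and preserves the symmetry, so $u^{(2)}(p),v^{(2)}(p)$ is again a continuous family of symmetric MPU in SF; applying Prop.~\ref{prop:sigma-well-defined} to this family gives that $\sigma^{(2)}$ is likewise constant, hence equal at the endpoints. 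This establishes the strict-equivalence claim.

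For equivalence, by Definition~\ref{def:equivalent-symmetry} there are $k,p_a,p_b$ such that $\mathcal U_1^{(p_a)}{}_{k}$ and $\mathcal U_2^{(p_b)}{}_{k}$ are strictly equivalent under $\mathcal S$; by the part just proved their phases $\sigma^{(2)}$ must agree, so it remains to show that blocking and ancillas do not alter $\sigma^{(2)}$. For blocking, the $\sigma^{(2)}$ of the $k$-blocked tensor $\mathcal U_i^{(p_i)}{}_{k}$ is by construction the phase $\sigma^{(2k)}$ of $\mathcal U_i^{(p_i)}$, and since $2k$ is even Prop.~\ref{prop:sigma1-and-sigma2-equal} identifies it with $\sigma^{(2)}(\mathcal U_i^{(p_i)})$. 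For the ancilla I would write $\mathcal U_i^{(p_i)}=\mathcal U_i\otimes \Id_{p_i}$ and use multiplicativity of $\sigma$ together with $\sigma^{(2)}(\Id_{p_i})=1$ to get $\sigma^{(2)}(\mathcal U_i^{(p_i)})=\sigma^{(2)}(\mathcal U_i)$. Chaining these equalities yields $\sigma^{(2)}(\mathcal U_1)=\sigma^{(2)}(\mathcal U_2)$, which is the asserted necessary condition; note that $\sigma^{(1)}$ cannot appear here, consistent with the example $u=i\openone$, $v=\openone$ in which $\sigma^{(1)}$ flips under blocking while $\sigma^{(2)}$ does not.

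The main obstacle I expect is precisely the ancilla step, i.e.\ confirming the two ingredients just invoked. Multiplicativity of $\sigma$ is routine from (\ref{eq:sigma-from-trace}) once one checks that $\mathbb S_{12,34}$ factorizes correctly across the tensor product, but verifying $\sigma^{(2)}(\Id_x)=1$ requires exhibiting an explicit SF of the identity MPU and checking (\ref{UUdaggertrick-2}) with $\sigma=+1$ at both one and two blocked sites. Apart from this bookkeeping, the argument is a direct assembly of Props.~\ref{prop:sigma-well-defined} and~\ref{prop:sigma1-and-sigma2-equal} with Lemma~\ref{lem:equivalent-symmetry}.
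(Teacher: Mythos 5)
Your argument is correct and matches the paper's intent: the corollary is stated without an explicit proof precisely because it is meant to follow directly from Prop.~\ref{prop:sigma-well-defined} (constancy of $\sigma$ along symmetric continuous paths, applied to the path and to its two-site blocking) together with Prop.~\ref{prop:sigma1-and-sigma2-equal} (reduction of $\sigma^{(2k)}$ to $\sigma^{(2)}$), which is exactly how you assemble it. Your explicit verification of the ancilla step -- multiplicativity of $\sigma$ under tensoring via Eq.~(\ref{eq:sigma-from-trace}) and $\sigma^{(k)}(\openone)=1$ -- is a detail the paper leaves implicit, and it is carried out correctly.
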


\subsection{Connection to the classification of phases of Matrix Product
States}

Two MPUs are equivalent if they can be connected by a continuous path of
MPUs. A necessary condition for the existence of such a path is the
existence of a continuous path of normal MPVs,
cf.~Prop.~\ref{prop:normal-tensor}, without requiring unitarity. This is
equivalent to asking whether the corresponding injective MPS are in the
same phase as ground states of 1D Hamiltonians. This question has been
studied
previously~\cite{pollmann:1d-sym-protection-prb,chen:1d-phases-rg,schuch:mps-phases},
and it has been found that this relates to the classification of the gauge
transformations
\begin{equation}
\label{eq:mps-class-gauge-trafo}
\mathcal S[\mathcal U] \propto G \mathcal U G^{-1}\ ,
\end{equation}
which represent the symmetry action
locally~\cite{pollmann:1d-sym-protection-prb,chen:1d-phases-rg,schuch:mps-phases}.
In particular, for the three cases considered in this paper, it can be
easily checked that the classification of $G$ is as follows:
\begin{enumerate}
\item \emph{Conjugation:} $G\bar{G}=\pm \openone$.
\item \emph{Time reversal:} $G\bar{G}=\pm \openone$.
\item \emph{Transposition:} No nontrivial phases.
\end{enumerate}
Since $G$ is invariant under blocking and adding ancillas, it gives
necessary conditions for equivalence of MPUs,
cf.~Def.~\ref{def:equivalent-symmetry}. The symmetry
in the two first cases corresponds to time reversal symmetry in the MPS
classification~\cite{pollmann:1d-sym-protection-prb}, where the
condition is obtained by applying $\mathcal S[\mathcal S[\mathcal
U]]\propto \mathcal U$.  Note
that by a suitable choice of gauge, $G$ can be chosen
unitary~\cite{cirac:mpdo-rgfp},
and thus the cases $G\bar{G}=\pm\openone$ correspond to symmetric and
skew-symmetric unitary matrices, which form two disconnected equivalence
classes
(cf.~Corollary~\ref{corconj}).  The transposition symmetry corresponds to
an on-site $\mathbb Z_2$ symmetry, which has no non-trivial projective
representations and thus no non-trivial MPS
phases~\cite{chen:1d-phases-rg,schuch:mps-phases}.

\begin{prop}
\label{prop:mps-class-eq-mpu-class-for-conjugation}
For an MPU invariant under conjugation symmetry, the sign $G\bar
G=\pm\openone$ in the MPS classification of phases,
Eq.~\eqref{eq:mps-class-gauge-trafo}, corresponds to the two cases in
Prop.~\ref{corconj}, i.e., the two possible MPU phases under equivalence.
\end{prop}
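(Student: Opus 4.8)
The plan is to compute the Matrix Product State classification sign $G\bar G=\pm\openone$ of Eq.~\eqref{eq:mps-class-gauge-trafo} explicitly from the standard-form data and match it, case by case, against the Case~I/Case~II dichotomy of Prop.~\ref{corconj}. As recalled in the text preceding the proposition, viewing $\mathcal U$ as a normal MPV, conjugation invariance $\overline{U^{(N)}}=U^{(N)}$ forces a bond gauge $G$ (which may be chosen unitary) with $\overline{\mathcal U}\propto G\,\mathcal U\,G^{-1}$; applying conjugation twice and using injectivity of the normal tensor fixes $G\bar G=\pm\openone$, the $H^2(\mathbb Z_2,U(1))$ invariant. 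The first step is therefore to identify this $G$ in terms of the standard-form unitaries. By Prop.~\ref{prop1conj} the symmetry reads $\bar u=(x\otimes y)u$ and $\bar v=v(y^\dagger\otimes x^\dagger)$ in SF, with $x=\eta x^T$, $y=\eta y^T$ for a common sign $\eta=+1$ (Case~I) or $\eta=-1$ (Case~II); this is precisely the content of $\bar x x=\bar y y=\eta\openone$ derived in the proof of Prop.~\ref{prop1conj}.

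The crux is to read off $G$ from the depth-two SF tensor. First I would write the local MPV tensor of the blocked MPU as the contraction of a $u$-gate with the neighbouring $v$-gate (as in Eq.~\eqref{StandardForm}) and propagate the conjugation factors $(x\otimes y)$ and $(y^\dagger\otimes x^\dagger)$ through its legs. The key observation is that the factor $y$ produced on the right-moving $r$-wire by $\bar u=(x\otimes y)u$ is contracted, inside one blocked tensor, against the matching factor $y^\dagger$ supplied by $\bar v=v(y^\dagger\otimes x^\dagger)$ on the same internal index, so that $y$ and $y^\dagger$ cancel; only the factor $x$ on the left-moving $\ell$-wire actually crosses the virtual cut, while the physical-delivery wire carries trivial gauge. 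Consequently $G$ is, up to an inert identity on the remaining bond legs, a single copy of $x^\dagger$ --- and crucially \emph{not} the product $x\otimes y$. Pinpointing this is the main obstacle: the naive choice $G=x\otimes y$ would yield $G\bar G=(x\bar x)\otimes(y\bar y)=\eta^2\openone=+\openone$ regardless of the case, spuriously trivializing the invariant; the whole content lies in recognizing that only one of the two gauge factors genuinely survives on the MPS bond, so that the sign is $\eta$ rather than $\eta^2$.

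With $G=\openone\otimes x^\dagger$ in hand, the remaining computation is routine: $\overline G G=\openone\otimes x^T x^\dagger$, and using $x^T=\eta x$ gives $\overline G G=\eta\,\openone$. Hence the MPS sign is $+\openone$ exactly in Case~I and $-\openone$ exactly in Case~II. Since $G$, and therefore this sign, is invariant under blocking and under adding ancillas, while Corollary~\ref{cor:equiv-under-conjugation} identifies the two MPU equivalence phases precisely with Case~I and Case~II, the map from MPU phases to MPS phases is a bijection, which is the assertion. As a consistency check one may also argue more abstractly: equivalence of MPUs implies a continuous path of normal MPVs and hence equality of the MPS phase, so the assignment Case $\mapsto$ sign is at least well defined; the explicit computation above then only needs to show it is non-constant, which it manifestly does.
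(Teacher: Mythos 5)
Your strategy differs from the paper's. The paper's proof is abstract: it invokes Corollary~\ref{cor:equiv-under-conjugation} (the two cases are exactly the two equivalence classes under conjugation), the fact that the sign $G\bar G=\pm\openone$ is an invariant of each class, and then evaluates the sign on the single representative $u=v=\openone$ of Case~I. You instead compute the sign for an arbitrary symmetric MPU directly from the standard-form data $(x,y,\eta)$ --- in spirit the same computation the paper carries out for the time-reversal analogue, Prop.~\ref{prop:dagger-mps-class-and-mpu-class-equal}. This is a legitimate route and in one respect stronger: the paper's written proof only verifies Case~I $\mapsto +\openone$ (Case~II $\mapsto -\openone$ is left implicit, being supplied only by the $u=v=R$ example of Sec.~VII), whereas your computation would establish both assignments at once.

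However, the pivotal step --- the identification $G=\openone\otimes x^\dagger$ --- is asserted by informal wire-chasing and is not correct as stated. In the full contraction the factors $y,y^\dagger$ \emph{and} $x,x^\dagger$ all cancel pairwise (that is just the ``if'' direction of Prop.~\ref{prop1conj}); neither factor literally ``crosses the cut''. What $G$ actually is, is the bond gauge relating the canonical splitting of $\bar v$ across the virtual cut to the splitting of $v(y^\dagger\otimes x^\dagger)$: writing $v=\sum_\gamma L_\gamma\otimes R_\gamma$ for a minimal splitting with $\gamma$ the bond index of the SF tensor, the symmetry forces $\bar L_\gamma=\sum_{\gamma'}h_{\gamma\gamma'}L_{\gamma'}y^\dagger$ and $\bar R_\gamma=\sum_{\gamma'}(h^{-1})_{\gamma'\gamma}R_{\gamma'}x^\dagger$ for some invertible $h$, and $G$ is this $h$ (up to transpose and inverse conventions). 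For $v=\mathbb S$ one finds $h=\openone\otimes y$, a function of $y$ rather than of $x$; in general $h$ need not equal $x^\dagger$ at all. Your final sign survives only because $x$ and $y$ carry the \emph{same} $\eta$: iterating the relation for $\bar L$ and using $y^\dagger y^T=\eta\openone$ together with the linear independence of the $L_\gamma$ gives $\bar h h=\eta\openone$, hence $G\bar G=\eta\openone$ as you claim. So the conclusion is correct and the overall plan viable, but the proof needs this derivation (or the paper's representative-plus-invariance argument) in place of the unproved explicit formula for $G$.
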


\begin{proof}The statement follows immediately from the fact that the two
cases in Prop.~\ref{corconj} label the two phases under equivalence under
conjugation (Corollary~\ref{cor:equiv-under-conjugation}), together with
the fact that equality of the sign $G\bar{G}=\pm\openone$ is a necessary
condition for being in the same phase. The fact that Case I corresponds to
$G\bar G=+\openone$ can be checked immediately by setting $u=v=\openone$.
\end{proof}

\begin{prop}
\label{prop:dagger-mps-class-and-mpu-class-equal}
For an MPU in SF invariant under time reversal, the sign $G\bar
G=\pm\openone$ in the MPS classification of phases,
Eq.~\eqref{eq:mps-class-gauge-trafo}, corresponds to the sign obtained in
Prop.~\ref{PropChiral} after blocking $k=2$ sites, i.e., $\sigma^{(2)}$ in
Prop.~\ref{prop:sigma1-and-sigma2-equal}.
\end{prop}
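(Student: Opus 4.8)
The plan is to identify the bond-level gauge $G$ of the MPS classification with the standard-form data, and then to track how taking the symmetry twice turns the bond relation $G\bar G=\pm\openone$ into the sign $\sigma^{(2)}$ of Prop.~\ref{PropChiral}. First I would set up the MPS picture explicitly: viewing the normal tensor $\mathcal U$ in CFII (cf.~Prop.~\ref{prop:normal-tensor}) as an injective MPV, time-reversal invariance $U^{(N)}=U^{(N)\dagger}$ says that the tensor $\mathcal S[\mathcal U]$ generating $U^{(N)\dagger}$ produces the same MPV. By the MPS fundamental theorem there is then a bond gauge, unique up to a phase and choosable unitary, with $\mathcal S[\mathcal U]=\mu\,G\mathcal U G^{-1}$, i.e.\ Eq.~\eqref{eq:mps-class-gauge-trafo}. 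Applying $\mathcal S$ a second time and using $\mathcal S^2=\mathrm{id}$ (since $(U^{(N)\dagger})^\dagger=U^{(N)}$) shows that $G\bar G$ commutes with $\mathcal U$; injectivity forces $G\bar G\propto\openone$, and after fixing the phase of $G$ this reads $G\bar G=\pm\openone$. This is the invariant I must match to $\sigma^{(2)}$.

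The bridge is to relate this bond gauge to the physical unitary $x$ appearing in Eq.~\eqref{UUdaggertrick-2}. I would use that, once both $\mathcal U$ and $\mathcal S[\mathcal U]$ are brought into SF --- which requires blocking, as in Eq.~\eqref{eq:block-twosites} --- Theorem~\ref{FundamentalMPU} provides precisely the relation between their $u,v$ that, in the proof of Prop.~\ref{PropChiral}, produces $x$ and the sign $\sigma$. The standard form, Eqs.~\eqref{StandardForm} and~\eqref{StandarForm3}, expresses the CFII bond index through the SVD isometries $X,Y$ in terms of the $r$- and $\ell$-spaces that carry $u$ and $v$; pushing $G$ through this SVD identifies its action with that of $x$. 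Crucially, $G$ acts on a single bond, which in the SF sits across a \emph{pair} of physical sites, so the translation between the bond object $G$ and the physical object $x$ is exactly the two-site blocking of Eq.~\eqref{eq:block-twosites}. This is why the relevant phase is $\sigma^{(2)}$ and not $\sigma^{(1)}$, consistently with Prop.~\ref{prop:sigma1-and-sigma2-equal}.

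Finally I would track the sign. Taking the symmetry twice on the MPS side gives $G\bar G=\pm\openone$; on the SF side the same double application is exactly the manipulation in the ``if'' part of Prop.~\ref{PropChiral}, where Eq.~\eqref{UUdaggertrick-2} and its daggered version are composed, and at blocking level two this produces precisely the factor $\sigma^{(2)}$. Matching the two computations under the identification of the previous paragraph yields $G\bar G=\sigma^{(2)}$. As a shortcut I would note that both quantities are $\pm1$-valued invariants under symmetric continuous deformations --- $G\bar G$ as the MPS invariant and $\sigma^{(2)}$ by Prop.~\ref{prop:sigma-well-defined} applied after blocking two sites --- so it suffices to check the correspondence on one representative in each of the two classes, exactly as the case $u=v=\openone$ is used in Prop.~\ref{prop:mps-class-eq-mpu-class-for-conjugation}.

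The main obstacle I expect is establishing the dictionary between the bond gauge $G$ and the physical unitary $x$ of the standard form: they live on different index spaces and are connected only through the SVD isometries, so one must verify carefully that conjugation by $G$ on the bond becomes multiplication by $x$ on the physical leg, that this identification is compatible with complex conjugation so the two signs genuinely coincide rather than merely correspond, and that the two-site blocking bookkeeping lands the result on $\sigma^{(2)}$ rather than $\sigma^{(1)}$.
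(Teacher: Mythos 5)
Your main line of attack --- obtain $G$ from the fundamental theorem applied to $\mathcal U$ and its daggered tensor, derive $G\bar G=\pm\openone$ by applying the symmetry twice, and then feed the relation $\mathcal S[\mathcal U]=G\mathcal U G^{-1}$ into the two-site blocked identity that defines $\sigma^{(2)}$ --- is exactly the paper's route, and your structural explanation of why the answer lands on $\sigma^{(2)}$ rather than $\sigma^{(1)}$ is sound. There are, however, two problems. The shortcut you offer at the end is not valid: checking one representative per value of $G\bar G$ only establishes the correspondence if each level set of $G\bar G$ is a single connected component of symmetric MPUs. For conjugation this is available because Corollary~\ref{cor:equiv-under-conjugation} gives a \emph{complete} classification, which is precisely what Prop.~\ref{prop:mps-class-eq-mpu-class-for-conjugation} leans on; for time reversal the paper only has \emph{necessary} conditions, and indeed the set $\{G\bar G=+\openone\}$ already contains MPUs in different strict-equivalence classes (compare $u=v=\openone$ with $u=i\openone$, $v=\openone$, which differ in $\sigma^{(1)}$). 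So the representative check proves nothing here, and the direct computation cannot be bypassed.

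That direct computation is exactly the step you defer as the ``main obstacle'', and the dictionary you propose (pushing $G$ through the SVD isometries to identify its action with that of $x$) is more complicated than what is needed; the paper never identifies $G$ with $x$. Since $\mathcal U$ in SF is injective, the daggered tensor equals $G\mathcal U G^{-1}$ with $G$ unitary on the bond; substituting this site by site into the four-site product $v^{(2)}_{3456}u^{(2)}_{1234}u^{(2)}_{5678}v^{(2)}_{3456}$ makes all $G$'s on interior bonds cancel in pairs, leaving explicit expressions for $X_{1234}$ and $Y_{5678}$ in which only $G$ and $\bar G$ survive at the outer legs; the relation $\bar G=\pm G^{\dagger}$ then gives $X^\dagger=\pm Y$ and hence $\sigma^{(2)}=\pm1$ with the same sign. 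Until that substitution is actually carried out, your argument identifies the right objects but does not yet prove that the two signs coincide.
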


\begin{proof} Consider $\mathcal U$ in SF. $\mathcal U$ is normal and thus
an injective MPS tensor, and thus we have
\[
\includegraphics[width=5cm]{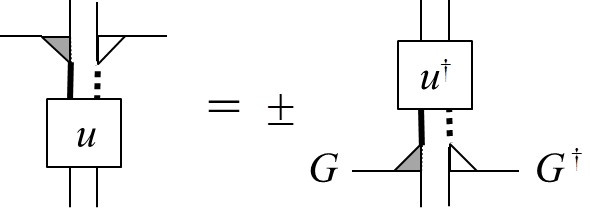}
\]
where with an appropriately chosen gauge, $G$ is unitary.  Using this, we
have that 
\[
\includegraphics[width=\columnwidth]{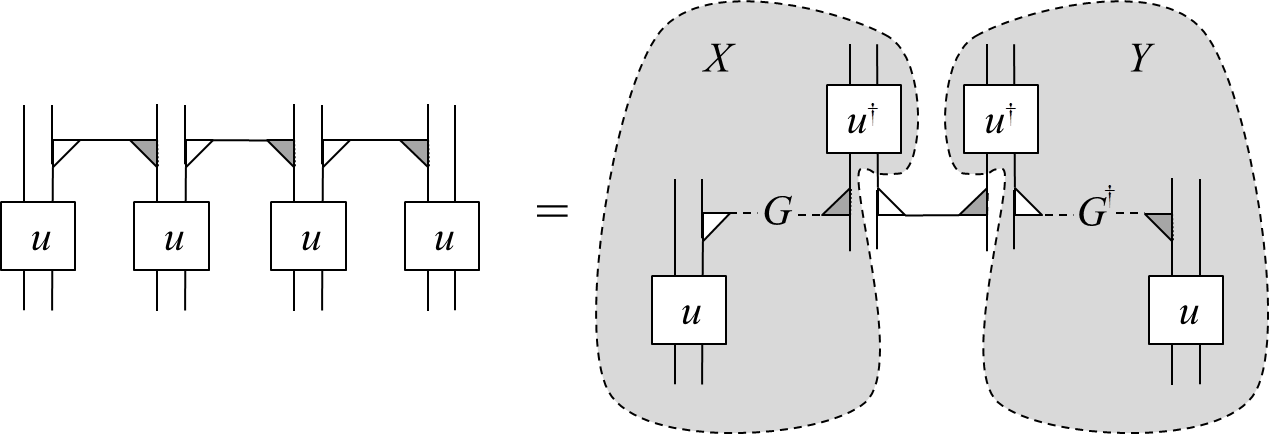}
\]
Using that $\bar G=\pm G^{\dagger}$, we can now immediately verify that
$X^\dagger = \pm Y$, and thus, $\sigma^{(2)}=\pm 1$ as claimed.
\end{proof}

\section{\label{sec:sym-examples}Symmetries: Examples}

In this Section, we analyze some relevant examples. First, we give an
example for each of the phases that have been identified in the previous
section. Then, we consider two MPUs that represent toy models of
topological insulators, and study if they are connected among them and to
the identity operator under different symmetries.

\subsection{Conjugation symmetry: Completeness of classification}

Let us start by providing examples for all possible phases under
conjugation.

\begin{example}[All phases under conjugation] 
\quad\newline
1. Let $u_+=v_+=\openone_{r\ell}$, and
$(u_-)=(v_-)=\mathrm{diag}(-1,1,\dots,1)$. Then, 
    \begin{itemize}
    \item in case at least one of $r$, $\ell$ is odd, 
    the MPUs given by $(u_+,v_+)$ and by $(u_+,v_-)$ lie in the two
    classes of Theorem~\ref{conjphs}.
    \item in case both $r$ and $\ell$ are even, the MPUs given by
    $(u_\pm,v_\pm)$ (with all four sign choices) lie in the four classes
    of Theorem~\ref{prop:conjsym-strict-equiv} for which $u$ and $v$ fall
    into Case I of Prop.~\ref{corconj}.
\end{itemize}
2. Let $u'_\pm=(\openone_{r\ell/4}\otimes R)u_{\pm}$ and
$v'_\pm=(\openone_{r\ell/4}\otimes R)v_{\pm}$, with $R=\frac{1-i}{2}
[\Id_4 + i(Y\otimes Y)]$, as in
Corollary~\ref{cor:sympl-to-orthogonal}. Then, the MPUs given by
$(u'_\pm,v'_\pm)$ (with all four sign choices) lie in the four classes of
Theorem~\ref{prop:conjsym-strict-equiv} for which $u$ and $v$ fall into
Case II of Prop.~\ref{corconj}.
\end{example}
\noindent
The verification is straightforward by using that $u_\pm$, $v_\pm$ are
unitary, $u'_\pm$, $v'_\pm$ fall into Case II following
Corollary~\ref{cor:sympl-to-orthogonal},
$\det(u_\pm)=\det(v_\pm)=det(u_+v_\pm)=\pm1$, and $\det(R)=-1$ and thus
$\det(u'_{\pm})=\det(v'_{\pm})=\pm (-1)^{r\ell/4}$.

Note that e.g.\ for the MPU described by $u=v=R$, the corresponding
$\mathcal U$ is
\[
\mathcal U_{\ell r}^{ud}=
\tfrac{1-i}{2}((\sigma_y)^{\ell}(-i\sigma_y)^{r})_{ud}
\]
with $\ell,r,u,d=0,1$ denoting the left, right, up, and down indices,
respectively.  This $\mathcal U$ needs to be blocked once to obtain the
SF; then, it is straightforward to check that conjugation maps to
a $G=\sigma_y$ gauge, which satisfies $G\bar{G}=-\openone$,
i.e., we are in the non-trivial MPS phase (as shown in
Prop.~\ref{prop:mps-class-eq-mpu-class-for-conjugation}).  It can be
easily seen that this generalized to all above examples, as the virtual
degrees of freedom relating to $u_\pm$ and $v_\pm$ transform trivially
under conjugation: Their MPU tensor is given by $\delta_{u=d=\ell}x_{\ell
r}$ with $x=u_\pm,v_\pm$ and thus real.

Let us also briefly consider these examples in the light of the other
symmetries.  Clearly, the MPUs given by $(u_\pm,v_\pm)$ have time reversal
and transposition symmetry, and lie in the trivial class with regard to the
former.  In the second case, it is easy to check that only the MPU with
$(u_+',v_+')$ has time reversal and transposition symmetry (the others
violate it already on two sites); the latter has 
$\sigma^{(1)}=\sigma^{(2)}=1$ in
Prop.~\ref{prop:sigma1-and-sigma2-equal}.

\subsection{Time-reversal symmetry}

Let us now provide examples for the four classes which we identified under
time reversal symmetry, namely $\sigma^{(1)}=\pm1$ and
$\sigma^{(2)}=\pm1$, Prop.~\ref{prop:sigma1-and-sigma2-equal}.

\begin{example}[CZX]
\label{example:czx}
The CZX MPU~\cite{chen:2d-spt-phases-peps-ghz} is defined by
\begin{align*}
u &= CZ(\sigma_x\otimes \sigma_x)\\
v &= CZ\ ,
\end{align*}
where $CZ=\mathrm{diag}(1,1,1,-1)$ is the controlled-Z gate. The
corresponding $\mathcal U$ is
\[
\mathcal U_{\ell r}^{ud} = (\sigma_x)_{ud}\delta_{u,\ell}H_{\ell,r}
\]
with $\ell,r,u,d=0,1$ denoting the left, right, up, and down indices,
respectively, and $H=\tfrac{1}{\sqrt{2}}
\left(\begin{smallmatrix}1&1\\1&-1\end{smallmatrix}\right)$ the Hadamard
gate.
\end{example}
\noindent
It is straightforward to check that this example has $\sigma^{(1)}=-1$ and
$\sigma^{(2)}=-1$ in Prop.~\ref{prop:sigma1-and-sigma2-equal}. By blocking
two sites of the MPS description, we obtain SF (and thus injectivity), and
we find that the virtual gauge transformation arising from time reversal
is $G=\sigma_y$, corresponding to the non-trivial phase (as shown in
Prop.~\ref{prop:dagger-mps-class-and-mpu-class-equal}).

Note that this example also has the other two symmetries. For conjugation
symmetry and strict equivalence, it is in the class which satisfies Case I
and has $\det(u)=\det(v)=-1$ in Theorem~\ref{prop:conjsym-strict-equiv}.
Under blocking, it thus falls into the trivial class, which can also be
verified using the MPS representation (the gauge is $G=\openone$).

The trivial case under time reversal can just be realized by setting
$u=v=\openone$.  The other non-trivial cases of $\sigma^{(1)}$ and
$\sigma^{(2)}$ can be constructed by modifying the trivial and CZX
example, respectively.

\begin{example}
\label{ex:dagger-all-but-czx}
1. The MPU $u=v=\openone$ realizes the trivial case 
$\sigma^{(1)}=\sigma^{(2)}=1$.\\[0.5ex]
2. The modified trivial MPU $u=i\openone$, $v=\openone$  realizes the case
$\sigma^{(1)}=-1$, $\sigma^{(2)}=1$.\\[0.5ex]
3. The modified CZX MPU $u=i\,CZ(\sigma_x\otimes \sigma_x)$, 
$v=CZ$  realizes the case $\sigma^{(1)}=1$, $\sigma^{(2)}=-1$.
\end{example}

Finally, these examples can be extended to any even dimension $\ell=r=d$
by tensoring the CZX example with $\openone_{d/2}$; clearly, the argument
of Example~\ref{ex:dagger-all-but-czx} still applies.

\subsection{Examples motivated by topological insulators}

In the following, we will discuss a class of examples motivated by
topological insulators, where the edge currents for each component of the
spin are chiral. Thus, a toy model describing the dynamics of the edge
states would correspond to a spin $1/2$ particle; if it is up, it moves to
the right and if it is down, to the left. Here we will consider many-body
versions of such model.

\subsubsection{The Examples}

We consider here three examples where the local dimension is $\tilde
d=d^2$, i.e.  it corresponds to having two spins per site. We define the
shift operator \cite{gross:index-theorem}, $T^{(N)}$, acting on
$H_{d}^{\otimes N}$ as
 \be
 T^{(N)} |n_1,\ldots,n_N\rangle = |n_N,n_1,\ldots,n_{N-1}\rangle.
 \ee
That is, it shifts product states to the right. Its adjoint, $T^{(N)\dagger}$ shifts to the left. We also define
 \begin{subequations}
 \label{threeMPU}
 \bea
 U_1^{(N)} &=& (\Id\otimes\Id)^{\otimes N},\\
 U_2^{(N)} &=& T^{(N)\dagger}\otimes T^{(N)},\\
 U_3^{(N)} &=& T^{(N)}\otimes T^{(N)\dagger}.
 \eea
 \end{subequations}
The physical meaning of the second (third) unitaries is simple: In a site,
the spin on the right is shifted to the right (left) and the one on the
left, to the left (right). We illustrate $U_2^{(N)}$ and $U_3^{(N)}$ in
Fig.~\ref{fig1}. Note that $U_3^{(N)}=\mathbb S^{\otimes N} U_2^{(N)}
\mathbb S^{\otimes
N}$.

\begin{figure}
 {\includegraphics[height=8em]{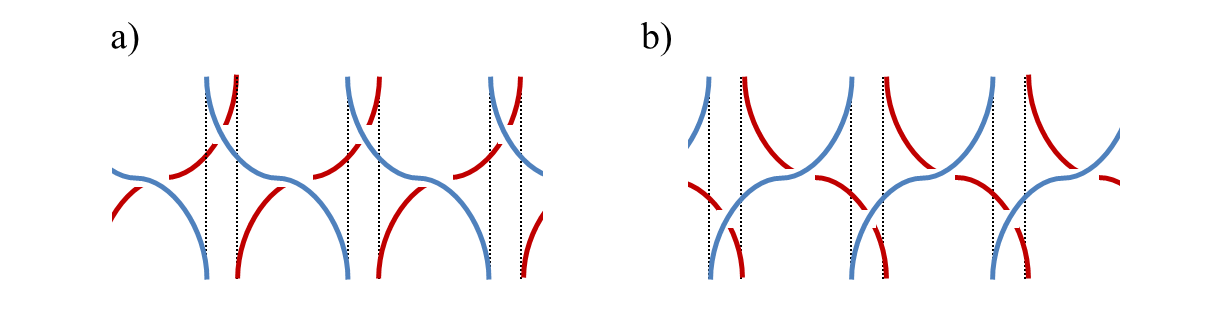}}
 \caption{\label{fig1}
Graphical representation of: a) $U_2^{(N)}$, and b) $U_3^{(N)}$. Solid lines indicate how different spins are shifted.}
\end{figure}

The matrices $u_i,v_i$ ($i=1,2,3$) generating $U_{i}^{(N)}$ can be easily derived from Fig. \ref{fig1}. We have
\begin{subequations}
\label{eq:SF_u1_u3}
\begin{align}
 u_1&=\Id\otimes \Id, \quad& v_1 &= \Id\otimes \Id\ ,
\\
 u_3&=\mathbb S\ , \quad& v_3&=\mathbb S\ .
\end{align}
\end{subequations}
For the ones corresponding to $U_2^{(N)}$ we must block two sites, which
yields the blocked matrices
\begin{equation}
\label{eq:uv2_U2}
\begin{aligned}
u_2^{(2)} &=  \Id\otimes \mathbb S\otimes \Id\ ,
\\
v_2^{(2)}&= (\mathbb S\otimes \mathbb S)(\Id\otimes \mathbb S\otimes \Id)\ .
\end{aligned}
\end{equation}
and correspondingly
\begin{equation}
\label{eq:uv2_U3}
\begin{aligned}
u_3^{(2)} &=  (\Id\otimes \mathbb S\otimes \Id)(\mathbb S\otimes \mathbb S)\ ,\\
v_3^{(2)} & = (\Id\otimes \mathbb S\otimes \Id)\ ,
\end{aligned}
\end{equation}

All three MPUs in Eq.~(\ref{threeMPU}) are generated by tensors which have
${\rm ind}=0$ (this follows from Theorem \ref{IndexTh} and the fact that
$T^{(N)}$ has index ${\rm ind}=\log_2(d_0)$ whereas $T^{(N)\dagger}$ has
${\rm ind}=-\log_2(d_0)$).  They have conjugation symmetry, as well as 
the symmetries (\ref{SSymmetries}) with $Q=\mathbb S$, this is, time reversal or
transpose combined with an exchanging the two species of sites.

\subsubsection{Transformation with Swaps}

In the following we will analyze whether the three MPU in Eq.~(\ref{threeMPU})
are in the same phase with respect to the three symmetries mentioned
above, this is, conjugation as well as the symmetries of
Eq.~(\ref{SSymmetries}) with $Q=\mathbb S$. In the latter case, following Section
\ref{othersymmetries}, it is convenient to define new MPU, $\tilde {\cal
U}=  {\cal U}\mathbb S$, generating 
\begin{subequations}
 \label{threeMPU2}
 \begin{align}
 \tilde U_1^{(N)} &= \mathbb S^{\otimes N},\\
 \tilde U_2^{(N)} &= \mathbb S^{\otimes N}(T^{(N)\dagger}\otimes T^{(N)}),\\
 \label{tildeU3}
 \tilde U_3^{(N)} &= \mathbb S^{\otimes N}(T^{(N)}\otimes T^{(N)\dagger})=
\mathbb S^{\otimes N}\tilde U_2^{(N)}\mathbb S^{\otimes N}.
  \end{align}
 \end{subequations}

\begin{lem}
\label{lemma:sym-trafo-swap}
Any two MPU $U^{(N)}_i$ and $U^{(N)}_j$ in (\ref{threeMPU}) are in the
same phase under (\ref{SSymmetries-dagger}), 
(\ref{SSymmetries-transpose}) (both with $Q=\mathbb S$), and conjugation,
respectively, iff the corresponding MPU
$\tilde U^{(N)}_i$ and $\tilde U^{(N)}_j$ in (\ref{threeMPU}) 
 are in the same phase under time reversal, 
transposition, and conjugation, respectively.  
\end{lem}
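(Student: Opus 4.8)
The plan is to package the correspondence established in Section~\ref{othersymmetries} as a single continuous involution on the space of MPU, and then to observe that such a map automatically respects both notions of (strict) equivalence. Concretely, I would work with the map $\Theta$ sending an MPU $U^{(N)}$ to $\tilde U^{(N)}=\mathbb S^{\otimes N}U^{(N)}$, which at the level of the generating tensor is $\mathcal U\mapsto\tilde{\mathcal U}=\mathcal U\mathbb S$, exactly the assignment producing \eqref{threeMPU2} from \eqref{threeMPU}. The first step is to record the algebraic properties of the swap that drive everything: $\mathbb S=\mathbb S^\dagger=\mathbb S^T=\overline{\mathbb S}$ and $\mathbb S^2=\openone$.

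Using these identities I would verify the pointwise intertwining of symmetries, which is really the heart of the matter. Since $\tilde U^{(N)\dagger}=U^{(N)\dagger}\mathbb S^{\otimes N}$, the time-reversal condition $\tilde U^{(N)}=\tilde U^{(N)\dagger}$ is equivalent to $U^{(N)}=\mathbb S^{\otimes N}U^{(N)\dagger}\mathbb S^{\otimes N}$, i.e.\ precisely \eqref{SSymmetries-dagger} with $Q=\mathbb S$. Likewise, $\tilde U^{(N)T}=U^{(N)T}\mathbb S^{\otimes N}$ shows that $\tilde U^{(N)}=\tilde U^{(N)T}$ is equivalent to \eqref{SSymmetries-transpose} with $Q=\mathbb S$; and because $\mathbb S$ is real, $\overline{\tilde U^{(N)}}=\mathbb S^{\otimes N}\overline{U^{(N)}}$, so $\tilde U^{(N)}=\overline{\tilde U^{(N)}}$ is equivalent to $U^{(N)}=\overline{U^{(N)}}$. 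Thus $\Theta$ bijectively intertwines the three symmetries on the left-hand side of the lemma with time reversal, transposition, and conjugation on the tilde side, and this holds for every $N$ simultaneously.

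Next I would observe that $\Theta$ has all the structural features needed to transport paths. It maps MPU to MPU, since left multiplication by the fixed local unitary $\mathbb S^{\otimes N}$ preserves unitarity; it is continuous in the entries of the generating tensor, being contraction with the fixed matrix $\mathbb S$; and it is its own inverse because $\mathbb S^2=\openone$. It also commutes with blocking, once the swap on a $k$-blocked site is declared to be $\mathbb S^{\otimes k}$: both $\Theta(\mathcal U_k)$ and $(\Theta\mathcal U)_k$ generate $\tilde U^{(N)}=\mathbb S^{\otimes N}U^{(N)}$. For the three MPU in question no nontrivial ancillas arise, since they share the physical dimension $\tilde d=d^2$ and hence $p_a=p_b=1$ in Def.~\ref{def:equivalent-symmetry}.

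The lemma then follows by transporting paths through $\Theta$. Given a continuous path $\mathcal W(p)$ realizing strict equivalence of $\mathcal U_i$ and $\mathcal U_j$ under \eqref{SSymmetries-dagger}, \eqref{SSymmetries-transpose} (both with $Q=\mathbb S$), or conjugation, the image $\Theta\mathcal W(p)$ is continuous, has endpoints $\tilde{\mathcal U}_i,\tilde{\mathcal U}_j$, and by the intertwining property generates MPU invariant under time reversal, transposition, or conjugation, respectively; this is exactly Def.~\ref{def:strictly-equivalent-symmetry} for the tilde tensors. The converse is obtained by applying the same argument to $\Theta^{-1}=\Theta$, and the statement for non-strict equivalence follows identically after first blocking $k$ sites, using that $\Theta$ commutes with blocking. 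I expect the only delicate point to be the clean bookkeeping of the intertwining in the second paragraph; once the identities $\mathbb S=\mathbb S^\dagger=\mathbb S^T=\overline{\mathbb S}$ and $\mathbb S^2=\openone$ are in hand, the remainder is formal.
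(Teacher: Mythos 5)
Your proposal is correct and follows essentially the same route as the paper: both rest on the observation that (left) multiplication by $\mathbb S^{\otimes N}$ is a continuous involution that intertwines the $Q=\mathbb S$ symmetries of Eq.~(\ref{SSymmetries}) with time reversal and transposition (and commutes with conjugation since $\mathbb S$ is real), so continuous symmetric paths are transported both ways. Your write-up is merely more explicit than the paper's, which delegates the first two cases to the discussion in Sec.~\ref{othersymmetries} and spells out only the conjugation case.
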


\begin{proof}
The first two symmetries follow from the discussion in
Sec.~\ref{othersymmetries}.  The last follows from the fact that $\tilde
U^{(N)}=\mathbb S^{\otimes N}$ is invariant under conjugation iff $U^{(N)}$ is,
together with the the fact that any continuous path $U(x)^{(N)}$ with
conjugation symmetry gives rise to a continuous path $\tilde
U(x)^{(N)}=\mathbb S^{\otimes N}U(x)^{(N)}\mathbb S^{\otimes N}$ with conjugation
symmetry, and vice versa. Note that in all cases, translation symmetry is
kept (though the transformation might break or restore the SF).
\end{proof}

Thus, for analyzing those symmetries we can restrict ourselves to the form
(\ref{threeMPU2}).

The SF of tensors generating the MPUs $\tilde U_1^{(N)}$ and $\tilde
U_2^{(N)}$ have
 \begin{subequations}
 \label{SFu1u3}
 \bea
 \tilde u_1 &=& \mathbb S, \quad \tilde v_1 = \Id\otimes\Id,\\
 \tilde u_2 &=& \Id\otimes\Id, \quad \tilde v_2=\mathbb S.
 \eea
 \end{subequations}
(We will not need the one corresponding to the third.)
Note that all three MPUs have the same index
and spectrum, and thus just by looking at spectral properties we cannot
distinguish their phases under time reversal (as the spectrum must be
real and thus cannot be changed continuously). The fact that $\tilde
U_2^{(N)}$ and $\tilde U_3^{(N)}$ have the same spectrum is clear since
they are related by a unitary transformation, the global swap
(\ref{tildeU3}). The fact that $\tilde U_1^{(N)}$ and $\tilde U_2^{(N)}$
have the same spectrum is obvious from their SF (\ref{SFu1u3}).

\subsubsection{Phases of $U_2^{(N)}$ vs.\ $U_3^{(N)}$}

Let us now compare $U_2^{(N)}$ vs.\ $U_3^{(N)}$ under the three
symmetries.  We start with conjugation. In order to compare the two, we
have to block two sites and thus consider the SF given in
Eqs.~(\ref{eq:uv2_U2}) and (\ref{eq:uv2_U3}), respectively. First, since
$u_i^{(2)}$, $v_i^{(2)}$ are real, they transform trivally under
conjugation and we are thus in Case I.  We then have that
\begin{align*}
\det(u_2^{(2)})&=\det(v_3^{(2)})=\det(\mathbb S)^{d^2}
\\
\det(u_3^{(2)})&=\det(v_2^{(2)})=(\det(\mathbb S)^{d^2})^3=\det(\mathbb S)^{d^2}\ ,
\end{align*}
where we have used that $\det(\mathbb S)=\pm 1$.  If $d$ is odd,
Theorem~\ref{conjphs} applies and 
$\det(u_2^{(2)})=\det(v_2^{(2)})$ and thus 
$\det(u_2^{(2)}v_2^{(2)})=1$; if $d$ is even, 
Theorem~\ref{prop:conjsym-strict-equiv} applies with
$\det(u_i^{(2)})=\det(v_i^{(2)})=1$.  We thus find: 

\begin{prop}
\label{prop:U2-U3-trivial-conjugation}
After blocking $2$ sites, $U_2^{(N)}$ and $U_3^{(N)}$ are both in the
trivial phase according to conjugation.
\end{prop}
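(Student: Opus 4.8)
The plan is to read off the conjugation phase of each blocked MPU directly from the invariants of Theorems~\ref{conjphs} and~\ref{prop:conjsym-strict-equiv}, and then to compare those invariants with the ones of the identity MPU $u=v=\openone$, which by definition sits in the trivial phase. Since the determinant bookkeeping required for this has already been carried out in the paragraph preceding the statement, the proof mainly consists in assembling those pieces and matching them against the identity.

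First I would observe that the blocked standard-form matrices $u_i^{(2)}$, $v_i^{(2)}$ in Eqs.~(\ref{eq:uv2_U2}) and~(\ref{eq:uv2_U3}) are assembled only from identities and swaps, and hence are real. Therefore $\overline{u_i^{(2)}}=u_i^{(2)}$ and $\overline{v_i^{(2)}}=v_i^{(2)}$, so Proposition~\ref{prop1conj} holds with the symmetric choice $x=y=\openone$, placing both $U_2^{(2)}$ and $U_3^{(2)}$ in Case~I of Proposition~\ref{corconj} --- the same case as the identity MPU.

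Next I would feed in the determinant values $\det(u_i^{(2)})=\det(v_i^{(2)})=\det(\mathbb S)^{d^2}$ (for $i=2,3$), together with $\det(\mathbb S)=\pm1$. For $d$ odd, $r=\ell=d^2$ is odd, so Theorem~\ref{conjphs} governs strict equivalence through the gauge-invariant $\det(uv)$; here $\det(u_i^{(2)}v_i^{(2)})=\det(\mathbb S)^{2d^2}=1=\det(\openone\cdot\openone)$, so each MPU is strictly equivalent to the identity. For $d$ even, $r=\ell=d^2$ is even and Theorem~\ref{prop:conjsym-strict-equiv} applies; since $d^2$ is even we get $\det(u_i^{(2)})=\det(v_i^{(2)})=\det(\mathbb S)^{d^2}=1$, again matching the identity, and both already lie in Case~I, so strict equivalence to the identity follows once more. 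In either parity this is precisely the claim that $U_2^{(2)}$ and $U_3^{(2)}$ are in the trivial phase.

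The one point I would treat with care --- and regard as the main obstacle --- is to keep the invariants aligned with what each theorem actually tests: for $d$ odd only the product $\det(uv)$ is decisive, whereas for $d$ even one must separately match $\det(u)$, $\det(v)$ and the Case~I/II label. Once one checks that the identity, written in the same blocked standard form with $r=\ell=d^2$, indeed realises Case~I with $\det(u)=\det(v)=\det(uv)=1$, the classification theorems deliver the strict equivalence of both $U_2^{(2)}$ and $U_3^{(2)}$ to the identity, establishing the trivial phase.
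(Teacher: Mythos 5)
Your proposal is correct and follows essentially the same route as the paper: observe that $u_i^{(2)}$, $v_i^{(2)}$ are real and hence fall into Case I, compute $\det(u_i^{(2)})=\det(v_i^{(2)})=\det(\mathbb S)^{d^2}$, and then invoke Theorem~\ref{conjphs} for $d$ odd (where only $\det(uv)=1$ matters) and Theorem~\ref{prop:conjsym-strict-equiv} for $d$ even (where $\det(\mathbb S)^{d^2}=1$), matching the invariants of the identity. The only difference is presentational: you make the comparison with $u=v=\openone$ explicit, which the paper leaves implicit in the phrase ``trivial phase.''
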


Now, let us move to the other symmetries.
\begin{prop}
\label{prop:U2-U3-trivial-tr-transpose}
The MPU $\tilde U_{2,3}^{(N)}$ are in the same phase with respect to both
time reversal and transposition symmetry.
\end{prop}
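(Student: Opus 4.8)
The plan is to exploit the fact that, by Eq.~(\ref{tildeU3}), the two MPU differ only by a global on-site unitary, $\tilde U_3^{(N)}=\mathbb{S}^{\otimes N}\,\tilde U_2^{(N)}\,\mathbb{S}^{\otimes N}$, and to connect them by continuously ``switching on'' this swap. Concretely, I would pick a continuous path $W(p)$ of on-site unitaries with $W(0)=\Id$ and $W(1)=\mathbb{S}$, and set $\tilde U(p)^{(N)}:=W(p)^{\otimes N}\,\tilde U_2^{(N)}\,(W(p)^\dagger)^{\otimes N}$. Since $\mathbb{S}=\mathbb{S}^\dagger$, the endpoints are $\tilde U(0)=\tilde U_2$ and $\tilde U(1)=\tilde U_3$; each $\tilde U(p)$ is again an MPU (the product of $\tilde U_2$ with the bond-dimension-one MPU $W(p)^{\otimes N}$), and the family is continuous because $W(p)$ is. The whole question then reduces to: for which of the two symmetries can this path be chosen so as to preserve the symmetry?

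For time reversal this is immediate and requires no further work. Both $\tilde U_2$ and $\tilde U_3$ are Hermitian -- one checks directly from $T^{(N)\dagger}=T^{(N)-1}$ and $\mathbb{S}^{\otimes N}(T^{(N)\dagger}\otimes T^{(N)})\mathbb{S}^{\otimes N}=T^{(N)}\otimes T^{(N)\dagger}$ that $\tilde U_2^\dagger=\tilde U_2$ -- and conjugation by a unitary preserves Hermiticity, so $\tilde U(p)^\dagger=\tilde U(p)$ along the entire path. As the unitary group $U(d^2)$ is path-connected, any continuous $W(p)$ works, and this even yields \emph{strict} equivalence under time reversal, with no blocking or ancilla needed. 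As a consistency check, Prop.~\ref{prop:sigma-well-defined} then forces $\sigma^{(1)}$ and $\sigma^{(2)}$ to agree for $\tilde U_2$ and $\tilde U_3$, as they must for the endpoints of a symmetry-preserving path.

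For transposition the same conjugation scheme preserves $U^{(N)}=U^{(N)T}$ precisely when $W(p)$ can be chosen \emph{real} (orthogonal): using $\tilde U_2^T=\tilde U_2$ one gets $\tilde U(p)^T=\bar W(p)^{\otimes N}\,\tilde U_2\,(W(p)^T)^{\otimes N}$, which equals $\tilde U(p)$ as soon as $\bar W(p)=W(p)$. If $\det\mathbb{S}=(-1)^{d(d-1)/2}=+1$ such a path exists in $SO(d^2)$ and we are done. The main obstacle is the remaining case $\det\mathbb{S}=-1$ -- in particular the physically relevant spin-$\tfrac12$ model $d=2$, where $\mathbb{S}$ carries a single $-1$ eigenvalue on the antisymmetric subspace -- for which no real orthogonal path from $\Id$ to $\mathbb{S}$ exists, so direct conjugation cannot produce strict equivalence. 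I would circumvent this by passing to equivalence in the sense of Def.~\ref{def:equivalent-symmetry}, i.e.\ by first blocking two sites: the relevant on-site swap then becomes $\mathbb{S}\otimes\mathbb{S}$ on $\mathbb{C}^{d^4}$, with $\det(\mathbb{S}\otimes\mathbb{S})=(\det\mathbb{S})^{2d^2}=+1$, so a real orthogonal path $\hat W(p)\in SO(d^4)$ from $\Id$ to $\mathbb{S}\otimes\mathbb{S}$ \emph{does} exist.

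Conjugating the blocked $\tilde U_2$ by $\hat W(p)^{\otimes N/2}$ then keeps it simultaneously Hermitian and transpose-symmetric (a single real $\hat W$ settles both symmetries at once) and connects it to the blocked $\tilde U_3$, establishing that $\tilde U_2$ and $\tilde U_3$ are equivalent under transposition (and, redundantly, under time reversal). That strict equivalence also holds in the $\det\mathbb{S}=-1$ case would instead follow from the triviality of transposition phases noted after Prop.~\ref{prop:local-char-transposition}, but the blocking argument is the cleanest route to the stated result, and I expect the det-obstruction for $d=2$ to be the only genuinely delicate point of the proof.
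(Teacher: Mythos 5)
Your argument is correct, and for time reversal it coincides with the paper's: conjugate by a path $W(p)^{\otimes N}$ of on-site unitaries from $\Id$ to $\mathbb S$, and Hermiticity of $\tilde U_2^{(N)}$ is preserved automatically. For transposition, however, you take a genuinely different (and more laborious) route. By insisting on the unitary conjugation $W(p)^{\otimes N}\tilde U_2^{(N)}W(p)^{\dagger\otimes N}$ you are forced to take $W(p)$ real, which creates the determinant obstruction $\det\mathbb S=(-1)^{d(d-1)/2}=-1$ (e.g.\ $d=2$) that you then circumvent by blocking two sites so that the relevant swap becomes $\mathbb S\otimes\mathbb S\in SO(d^4)$. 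That works and proves the proposition (equivalence suffices for ``same phase''), but it is weaker than what the paper gets: the paper replaces the conjugation by the \emph{congruence} $\mathcal U(x)=\mathbb S(x)\,\mathcal U\,\mathbb S(x)^T$, which satisfies $\mathcal U(x)^T=\mathcal U(x)$ for \emph{any} unitary path $\mathbb S(x)$ (using $\tilde U_2^T=\tilde U_2$), still lands on $\tilde U_3^{(N)}$ at $x=1$ because $\mathbb S$ is real symmetric, and therefore yields strict equivalence with no blocking and no case distinction on $\det\mathbb S$. Your approach buys nothing over this except that a single real path handles both symmetries simultaneously; the paper's buys the stronger conclusion uniformly. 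One caveat on your closing remark: the comment after Prop.~\ref{prop:local-char-transposition} only says that the local characterization of transposition symmetry produces no invariant; it is a statement about the absence of a known obstruction, not a completeness result, so it does not by itself establish strict equivalence in the $\det\mathbb S=-1$ case. You correctly treat it as an aside rather than a load-bearing step, but as written that sentence overstates what the paper proves.
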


\begin{proof}
We will show that there exists a continuous path of tensors ${\cal U}(x)$,
$x\in [0,1]$, generating $\tilde U(x)^{(N)}$, such that $\tilde
U(0)^{(N)}=\tilde U_2^{(N)}$ and $\tilde U(1)^{(N)}=\tilde U_3^{(N)}$.
Let us denote by ${\cal U}$ the tensor generating $\tilde
U_2^{(N)}$. We choose a unitary operator $\mathbb S(x)$ such that $\mathbb S(0)=\Id$ and
$\mathbb S(1)=\mathbb S$, and define ${\cal U}(x)=\mathbb S(x){\cal
U}\mathbb S(x)^\dagger$, where $\mathbb S(x)$ acts on
the physical indices of $\mathcal U$. It is obvious that this generates a
self-adjoint MPU.  Together with (\ref{tildeU3}), this yields the
statement of the proposition. A similar construction with $\mathbb S(x)^\dagger\to
\mathbb S(x)^T$ works for the transposition symmetry.  
\end{proof}

\subsubsection{Phases of $U_1^{(N)}$ vs.\ $U_2^{(N)}$}

We now turn towards $U_1^{(N)}$ vs.\ $U_2^{(N)}$.
Let us start with conjugation symmetry.

\begin{prop}
\label{prop:eqiv-U1-U2-conjugate}
$U_1^{(N)}$ and $U_2^{(N)}$ are in different phases if $d=4k+2$,
$k=0,1,2,\dots$, and otherwise in the same phase, where the phase of
$U_1^{(N)}$ is (by construction) trivial.
All phases are in Case I, and thus 
$U_1^{(N)}$ and $U_2^{(N)}$ coincide under equivalence (i.e.\ blocking and
adding ancillas).
\end{prop}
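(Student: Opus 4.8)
The plan is to read off the conjugation invariants of $U_1$ and $U_2$ from a standard form and feed them into Theorems~\ref{conjphs} and~\ref{prop:conjsym-strict-equiv}. For the conjugation comparison I would first reduce $U_1,U_2$ to the swap--transformed MPU $\tilde U_1,\tilde U_2$ via Lemma~\ref{lemma:sym-trafo-swap} (the conjugation $\leftrightarrow$ conjugation clause, using that $\mathbb S$ is real so $\mathbb S^{\otimes N}U$ is conjugation invariant iff $U$ is), since these have the much simpler standard forms~(\ref{SFu1u3}): $\tilde u_1=\mathbb S,\ \tilde v_1=\Id\otimes\Id$ and $\tilde u_2=\Id\otimes\Id,\ \tilde v_2=\mathbb S$. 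All four matrices are real, so in Proposition~\ref{corconj} we are always in Case~I (one may take $x=y=\Id$ in Proposition~\ref{prop1conj}); this already establishes the claim that all phases are Case~I, and it places $U_1$, the identity, in the trivial class with $\det u_1=\det v_1=1$. The whole question thus reduces to computing $\det$ for $\tilde u_2,\tilde v_2$ and comparing.

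The computation splits on the parity of the ranks $r=\ell=d$ of the minimal simple tensor in~(\ref{SFu1u3}). If $d$ is odd, at least one of $r,\ell$ is odd, Theorem~\ref{conjphs} applies, and the only invariant is $\det(uv)$; here $\det(\tilde u_1\tilde v_1)=\det(\tilde u_2\tilde v_2)=\det\mathbb S$, so $U_1$ and $U_2$ are strictly equivalent for every odd $d$. If $d$ is even, both $r,\ell$ are even, Theorem~\ref{prop:conjsym-strict-equiv} applies, and $\det u$ and $\det v$ are \emph{separately} invariant: $\tilde U_1$ gives $(\det u,\det v)=(\det\mathbb S,1)$ while $\tilde U_2$ gives $(1,\det\mathbb S)$. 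These agree iff $\det\mathbb S=1$. The key input is the swap determinant $\det\mathbb S=(-1)^{d(d-1)/2}$, which equals $-1$ exactly for $d\equiv 2,3\pmod 4$; intersecting with $d$ even leaves $d\equiv 2\pmod 4$, i.e.\ $d=4k+2$. This yields the stated dichotomy: different phases iff $d=4k+2$, and the same (trivial) phase otherwise.

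For the statement under (non-strict) equivalence I would simply invoke Corollary~\ref{cor:equiv-under-conjugation}: since $U_1$ and $U_2$ both fall into Case~I of Proposition~\ref{corconj}, they are equivalent under conjugation, i.e.\ they coincide once blocking and ancillas are permitted. This is fully consistent with a direct evaluation at the doubly-blocked form~(\ref{eq:uv2_U2}), where $\det u_2^{(2)}=\det\mathbb S^{\,d^2}=+1$ for even $d$: blocking washes the separate determinants out to the trivial value, which is precisely the equivalence regime.

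The step I expect to be the main obstacle is exactly this even-$d$ determinant bookkeeping. A careless evaluation at~(\ref{eq:uv2_U2}) gives an \emph{even} power $\det\mathbb S^{\,d^2}$, which trivializes for even $d$ and would wrongly suggest that $U_2$ is always trivial; the obstruction survives only when the invariant is extracted at the minimal simple level, where a single swap contributes an \emph{odd} power of $\det\mathbb S$. Getting this blocking level right is what distinguishes strict equivalence from equivalence here, and is why routing through $\tilde U_{1,2}$ with~(\ref{SFu1u3}) (rather than the more heavily blocked~(\ref{eq:uv2_U2})) is the correct tool; along the way one must also confirm, as guaranteed by Proposition~\ref{prop:continuity-index}, that $r,\ell$ are held fixed so that $\det\mathbb S$ is a genuine gauge invariant within Case~I.
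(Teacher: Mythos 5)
Your proposal is correct and follows essentially the same route as the paper: pass to $\tilde U_{1,2}$ via Lemma~\ref{lemma:sym-trafo-swap}, read off the standard forms (\ref{SFu1u3}), note both are in Case~I, and compare $\det(\tilde u_i\tilde v_i)=\det\mathbb S$ for odd $d$ (Theorem~\ref{conjphs}) versus the separate pairs $(\det\mathbb S,1)$ and $(1,\det\mathbb S)$ for even $d$ (Theorem~\ref{prop:conjsym-strict-equiv}), with $\det\mathbb S=(-1)^{d(d-1)/2}$ yielding the $d=4k+2$ dichotomy. The concluding appeal to Corollary~\ref{cor:equiv-under-conjugation} for equivalence after blocking also matches the paper's reasoning.
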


\begin{proof}
Following Lemma~\ref{lemma:sym-trafo-swap}, we can consider $\tilde
U_1^{(N)}$ and $\tilde U_2^{(N)}$ instead, and consider the SF
(\ref{SFu1u3}). Again, both fall into Case I. We have that
\[
\det(\tilde u_1)=\det(\tilde v_2)=\det(\mathbb S)=(-1)^{d(d-1)/2}\ ,
\]
while $\det(\tilde v_1)=\det(\tilde u_2)=1$. Thus, for $d$ odd,
$\det(\tilde u_1)\det(\tilde v_1)=\det(\tilde u_2)\det(\tilde
v_2)=(-1)^{d(d-1)/2}$ and thus
$\tilde U_1^{(N)}$ and $\tilde U_2^{(N)}$ are in the same phase
(Theorem~\ref{conjphs}); it is non-trivial if $d=4k+3$.  If $d$ is even,
$\det(\tilde u_1)=\det(\tilde v_1)=1$ and $1=\det(\tilde u_2)=\det(\tilde
v_2)$ iff $d(d-1)/2$ is even, this is $d$ is divisible by $4$: $\tilde
U_1^{(N)}$ and $\tilde U_2^{(N)}$ are in the same phase if $d=4k$, and in
different phases if $d=4k+2$ (Theorem~\ref{prop:conjsym-strict-equiv}).
\end{proof}

As already indicated in the proof, for $d=4k+2$ the two phases of $\tilde
U_1^{(N)}$ and $\tilde U_2^{(N)}$ are both non-trivial, while in the other
cases, the joint phase which both MPUs share is only non-trival if
$d=4k+3$.  Since we did not block here, we in principle cannot translate
this back to a statement about the nature of the phase of $U_2^{(2)}$.
However, since $U_1^{(N)}$ is by construction in the trivial phase, and we
know how $U_2^{(N)}$ relates to $U_1^{(N)}$, this nevertheless allows to
infer whether the phase of $U_2^{(N)}$ is non-trivial.

Let us now turn towards the other two symmetries. It can be checked
straightforwardly from (\ref{SFu1u3}) that $\sigma^{(1)}=\sigma^{(2)}=1$
in Prop.~\ref{prop:sigma1-and-sigma2-equal}, i.e., we obtain no separation
of $\tilde U_1^{(N)}$ and $\tilde U_2^{(N)}$ under time reversal.  Indeed,
it is possible to show that $\tilde U_1^{(N)}$ and $\tilde U_2^{(N)}$ are
in the same phase once we allow for ancillas:

\begin{prop}
\label{prop:U1-U2-equiv-ancillatrick}
If we add one ancilla of dimension $d$ per site, $\tilde U_1^{(N)}$ and
$\tilde U_2^{(N)}$ are stricly equivalent both under time reversal and
transposition symmetry.
\end{prop}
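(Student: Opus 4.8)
The plan is to build a single continuous path that is simultaneously self-adjoint (time reversal) and real-symmetric (transposition), by conjugating $\tilde U_1$ with a real, index-zero circuit that the extra ancilla makes available. The starting observation is that every operator here is a real permutation, so a real self-adjoint unitary is automatically symmetric; it therefore suffices to exhibit one path of \emph{real self-adjoint} MPU connecting $\tilde U_1^{(N)}\otimes\Id_d^{\otimes N}$ to $\tilde U_2^{(N)}\otimes\Id_d^{\otimes N}$. The key simplification is that conjugation $W\mapsto C W C^\dagger$ by a unitary preserves self-adjointness automatically, and preserves realness (hence transposition symmetry) as soon as $C$ is real. Thus the task reduces to finding a real $C$, connected to $\Id$ through real MPU, with $C(\tilde U_1\otimes\Id_c)C^\dagger=\tilde U_2\otimes\Id_c$, where $c$ is the $d$-dimensional ancilla.

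Next I would rewrite the two endpoints as dimerized swap patterns. From (\ref{SFu1u3}) and (\ref{threeMPU2}) one checks that $\tilde U_1=\prod_i\mathbb{S}_{L_iR_i}$ is the on-site (intra-cell) swap of the two species $L,R$, while $\tilde U_2=\prod_i\mathbb{S}_{L_iR_{i-1}}$ is the neighbouring (inter-cell) swap. The natural conjugator is then $C:=T_R^\dagger\otimes T_c$, a left shift of the right-moving species $R$ together with a right shift of the ancilla $c$: conjugation by $T_R^\dagger$ carries each $\mathbb{S}_{L_iR_i}$ to $\mathbb{S}_{L_iR_{i-1}}$, producing exactly $\tilde U_2$, while $\Id_c$ is left invariant. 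Crucially $C$ has index $-\log_2 d+\log_2 d=0$, so by the Index Theorem (Thm.~\ref{IndexTh}) it is connected to $\Id$ through MPU; this is precisely the role of the ancilla, which supplies the second species needed to cancel the index of the shift that repairs the dimerization (without it, the repairing operation would carry nonzero index and could not be deformed to $\Id$).

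It then remains to make the connecting path \emph{real}. Since $C$ is real it lies in Case I of the conjugation classification, so by Thms.~\ref{conjphs} and \ref{prop:conjsym-strict-equiv} the only possible obstruction to a real (orthogonal) path $x\mapsto C(x)$ from $\Id$ to $C$ is a determinant sign. I would remove it by replacing $C$ with $CD$, where $D$ is a real local unitary acting only on the ancilla chain: since $D$ commutes with $\tilde U_1\otimes\Id_c$, it leaves the conjugate $\tilde U_2\otimes\Id_c$ unchanged, while the freedom of the dimension-$d$ ancilla lets us fix the determinant invariants to their trivial value. With such a real path in hand, the family $\mathcal W(x)^{(N)}:=C(x)^{(N)}\big(\tilde U_1^{(N)}\otimes\Id_c^{\otimes N}\big)C(x)^{(N)\dagger}$ is a continuous path of MPU (each a QCA, hence an MPU by Cor.~\ref{QCA}), is self-adjoint for every $x$ and real along the path, and interpolates between the two endpoints; by Lemma~\ref{lem:equivalent-symmetry} this yields strict equivalence under both time reversal and transposition. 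The main obstacle is exactly this determinant/parity bookkeeping: one must verify that an ancilla of dimension $d$, together with the freedom to dress $C$ by a commuting real gate on it, genuinely lands $C$ in the trivial conjugation class. This is what the ancilla-free case lacks, and it is consistent with the vanishing of the obstructions $\sigma^{(1)}=\sigma^{(2)}=1$ found earlier for both MPU.
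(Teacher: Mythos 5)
Your overall strategy---conjugating $\tilde U_1^{(N)}\otimes\Id$ by an ancilla-assisted relative shift that turns the intra-cell swap pattern into the inter-cell one, and then deforming the conjugator to the identity while preserving the symmetry---is exactly the idea of the paper's proof. However, two steps on which your argument hinges are not established. First, Theorem~\ref{IndexTh}(iv) does not give you a continuous path of MPU tensors from $\Id$ to $C=T_R^\dagger\otimes T_c$ \emph{at fixed physical dimension}: ``equivalent'' in that theorem (Def.~\ref{def:equivalent-tensors}) permits blocking and further ancillas, whereas the proposition claims \emph{strict} equivalence, for which the path $\mathcal W(x)=C(x)\bigl(\tilde U_1^{(N)}\otimes\Id\bigr)C(x)^\dagger$ must live on the original (ancilla-augmented) sites. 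Second, your route to transposition symmetry requires the path $C(x)$ to be \emph{real}, and you correctly identify that this runs into the determinant obstruction of Theorems~\ref{conjphs} and~\ref{prop:conjsym-strict-equiv}; but you neither verify that the obstruction vanishes nor that dressing $C$ by a commuting real gate $D$ on the ancilla can always remove it (this depends on the parities of $r$ and $\ell$ for the SF of $C$, which is itself only reached after blocking, since $C$ is of the form $T^\dagger\otimes T$). You flag this yourself as ``the main obstacle,'' and as written it is an open gap.

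Both gaps disappear if you use the structure of $C$ more explicitly, which is what the paper does. The conjugator is a finite-depth circuit of swaps $\mathbb S$ between the $R$ species and the ancilla (a relative shift of two equal-dimensional species is a depth-two swap network), so interpolating each gate by a unitary family $\mathbb S(x)$ with $\mathbb S(0)=\Id$ and $\mathbb S(1)=\mathbb S$ gives a continuous MPU path at fixed physical dimension, with no appeal to the Index Theorem. Moreover, realness of the path is not needed: $W\mapsto AWA^\dagger$ preserves $W=W^\dagger$ for \emph{any} unitary $A$, and $W\mapsto AWA^T$ preserves $W=W^T$ for \emph{any} unitary $A$. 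Conjugating with $\mathbb S(x)$ and $\mathbb S(x)^\dagger$ for time reversal, or with $\mathbb S(x)$ and $\mathbb S(x)^T$ for transposition, therefore yields symmetric paths for the two symmetries separately, with no determinant bookkeeping at all. Your insistence on a single path that is simultaneously real and self-adjoint is stronger than what the proposition asks for, and it is precisely this extra requirement that creates the unresolved obstruction.
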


\begin{figure}[t]
\includegraphics[width=\columnwidth]{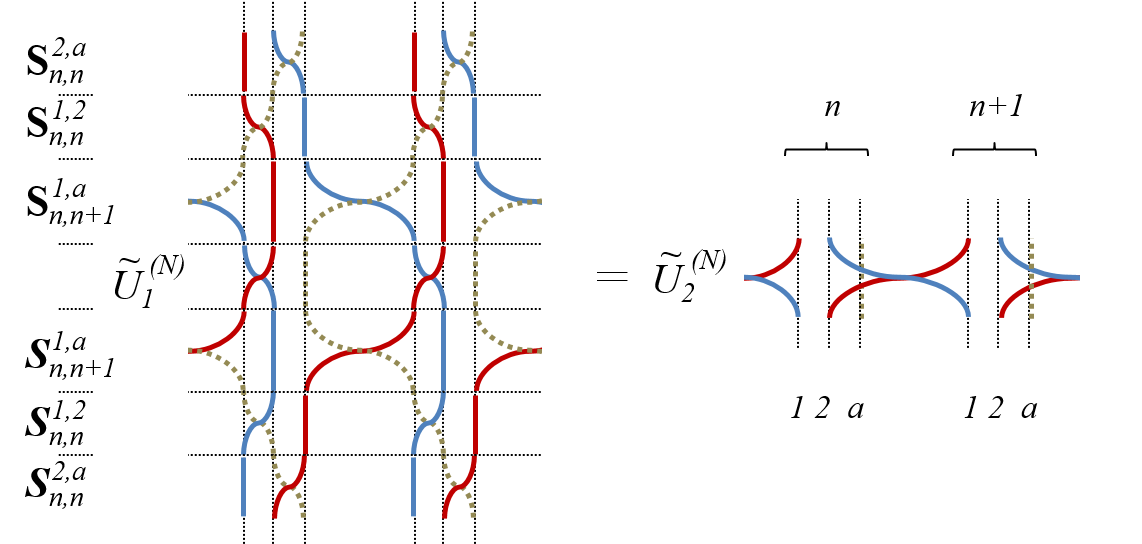}
\caption{\label{fig:TR-ancilla}
Proof of Prop.~\ref{prop:U1-U2-equiv-ancillatrick}: By adding an ancilla,
$\tilde U_1^{(N)}$ and $\tilde U_2^{(N)}$ can be connected through a
path with time reversal or transposition symmetry.
}
\end{figure}

\begin{proof}
The proof is illustrated in Fig.~\ref{fig:TR-ancilla}: By adding an
ancillas and conjugating $\tilde U_1^{(N)}$ with the three swaps shown, we
obtain $\tilde U_2^{(N)}$.  We can now choose a continuous interpolation
$\mathbb S(x)$ such that $\mathbb S(0)=\openone$ and $\mathbb
S(1)=\mathbb S$. By conjugating with $\mathbb S(x)$ and $\mathbb
S(x)^\dagger$ or $\mathbb S(x)^T$ instead of the swaps, we obtain a
continuous time reversal or transposition symmetric path which connects
$\tilde U_1{(N)}$ and $\tilde U_2{(N)}$, as claimed.
\end{proof}

\subsubsection{Phases of $U_1^{(N)}$ vs.\ $U_3^{(N)}$}

Finally, let us consider the phase of $U_1^{(N)}$ vs.\ $U_3^{(N)}$.  It is
immediate to see that all results from the preceding section can be
transferred, e.g.\ by observing that there is no relevance to the ordering
of the two spins per site, and swapping them exchanges $U_2^{(N)}$ with
$U_3^{(N)}$ (and $\tilde U_2^{(N)}$ with $\tilde U_3^{(N)}$), or by noting
the correspondence between Eqs.~(\ref{eq:SF_u1_u3}) and (\ref{SFu1u3}).
Note that in this case, however, the proof analogous to that of
Prop.~\ref{prop:eqiv-U1-U2-conjugate} will additionally allow to determine
the phase of $U_2^{(N)}$ under strict equivalence, i.e., without any
blocking. (If one is only
interested in the results after blocking two sites, one can alternatively
also transfer the results of the preceding section from $U_2^{(N)}$ to
$U_3^{(N)}$ using Propositions~\ref{prop:U2-U3-trivial-conjugation}
and~\ref{prop:U2-U3-trivial-tr-transpose}.)


\begin{acknowledgements}
This project has been supported by the EU through the ERC grants WASCOSYS
(No.~636201), GAPS (No.~648913), and QUTE (No.~647905). 
DPG acknowledges support by MINECO through grants MTM2014-54240-P and
ICMAT Severo Ochoa SEV-2015-0554, and by Comunidad de Madrid through grant
QUITEMAD+-CM, ref. S2013/ICE-2801.

\end{acknowledgements}

\appendix

\section{MPU and QCA}

In this appendix we recall from \cite{schumacher:qca} the definition of
QCA and prove that 1D QCAs and MPUs are exactly the same objects. We also
show that the index defined here coincides (up to a log) with the original
index of QCA introduced in Ref.~\cite{gross:index-theorem}.

In order to define QCA we have to start with the quasi-local algebra of observables in a 1D spin chain. Let us take  a finite $\Lambda\subset \mathbb{Z}$ and denote by $\mathcal{A}_\Lambda$ the algebra of observables of $\Lambda$, that is $M_d^{\otimes \Lambda}$. If $\Lambda \subset \Lambda'$ one can identify $\mathcal{A}_\Lambda$ as a subset of $\mathcal{A}_{\Lambda'}$ by tensoring each element with the identity in the sites of $\Lambda'\backslash\Lambda$. This allows to define the inductive limit
\begin{equation}
\mathcal{A}_{\rm loc}=\bigcup _{\Lambda}^\infty \mathcal{A}_\Lambda
\end{equation}
After completion with the norm topology one obtains a $C^*$-algebra ${\mathcal{A}}$ called the quasi-local algebra of observables.

A 1D QCA is simply an automorphism $\omega$ of ${\mathcal{A}}$, which commutes with the translation operator and for which there exists a finite subset $\mathcal{N}\subset \mathbb Z$ so that $\omega(\mathcal{A}_{\Lambda})\subset \mathcal{A}_{\Lambda +\mathcal{N}}$ for all finite $\Lambda\subset \mathbb Z$. Wlog one can assume that $\mathcal{N}$ is an interval $[-R,R]\cap\mathbb Z$.

Let us show first that any MPU given by tensor $\mathcal{U}$ defines a QCA. For that we consider
$\omega_{\mathcal U}: \mathcal{A}_{\rm loc}\rightarrow \mathcal{A}_{\rm loc}$
given by
\begin{equation}\label{eq:appendix-1}
\omega_{\mathcal U} (A)=\lim_N U^{(N)}A_L U^{(N)\dagger}.
\end{equation}
Note that $\omega_{\mathcal U}$ is well defined since the limit is eventually constant thanks to the results proven in the main text. Moreover,  $\omega_{\mathcal U}$ preserves the norm and all operations of the $*$-algebra $\mathcal{A}_{\rm loc}$, commutes with the translation operator and verifies that $\omega_{\mathcal U}(\mathcal{A}_{\Lambda})\subset \mathcal{A}_{\Lambda +\mathcal{N}}$ for $\mathcal N = [-4D^4,4D^4]\cap \mathbb Z$. It can be then uniquely extended to a 1D QCA as defined above.

Conversely, by the main result of \cite{schumacher:qca}, after blocking a finite number of sites, any 1D QCA  $\omega$ restricted to $\mathcal{A}_{\rm loc}$ is exactly of the form (\ref{eq:appendix-1}), where the unitaries $U^{(N)}$ are given by  depth-two circuits of nearest neighbor unitaries $u$ and $v$. This gives exactly the standard form (SF) of an MPU $\mathcal{U}$ (\ref{StandardForm}) and $\omega=\omega_{\mathcal{U}}$ for such MPU.

Let us finish this appendix by showing that the index defined here
coincides with the one defined in \cite{gross:index-theorem}. We will use
the identification just made between QCA and MPU. Let us recall the
original definition of \cite{gross:index-theorem}. For that we group sites
together so that the MPU $\mathcal{U}$ is simple. Then clearly, for $x\in
\mathbb Z$,
$$\omega(\mathcal{A}_{2x}\otimes \mathcal{A}_{2x+1})\subset
\left( \mathcal{A}_{2x-1}\otimes \mathcal{A}_{2x}\right)\otimes
\left(\mathcal{A}_{2x+1}\otimes \mathcal{A}_{2x+2}\right).$$

Note that all algebras $\mathcal{A}_{z}$, $z\in \mathbb Z$, are isomorphic to $M_d$.

In order to define the index, one needs to identify the support algebra of
$\omega(\mathcal{A}_{2x}\otimes \mathcal{A}_{2x+1})$ on
$\mathcal{A}_{2x-1}\otimes \mathcal{A}_{2x}$, that is, the smallest
$C^*$-subalgebra $\mathcal{R}_{2x}$ of $\mathcal{A}_{2x-1}\otimes
\mathcal{A}_{2x}$ so that $\omega(\mathcal{A}_{2x}\otimes
\mathcal{A}_{2x+1})\subset \mathcal{R}_{2x}\otimes
\left(\mathcal{A}_{2x+1}\otimes \mathcal{A}_{2x+2}\right).$ It is shown in
\cite{gross:index-theorem} that $\mathcal{R}_{2x}$ is isomorphic to a matrix algebra $M_{s}$ and the index is defined as
\begin{equation}\label{eq:original-index}
\text{Original index} = \frac{s}{d}.
\end{equation}
It is clear from the standard form of the MPU (\ref{StandardForm}) that $\mathcal R_{2x}= v^{\dagger}( \Id\otimes M_\ell )v$ and hence
$${\rm ind}= -2\log_2\frac{\ell}{d}=-2\log_2(\text{Original index})\; .$$


\end{document}